\newtheorem{myTheo}{Theorem}
\begin{document}
\title{Spatial CUSUM for Signal Region Detection}
\author{\Name{Xin Zhang} \Email{xinzhang@iastate.edu}\\
 \addr Department of Statistics, Iowa State University
 \AND
 \Name{Zhengyuan Zhu} \Email{zhuz@iastate.edu}\\
 \addr Department of Statistics, Iowa State University 
}
\maketitle

\begin{abstract}
Detecting weak clustered signal in spatial data is important but challenging in applications such as medical image and epidemiology. 
A more efficient detection algorithm can provide more precise early warning, and effectively reduce the decision risk and cost. 
To date, many methods have been developed to detect signals with spatial structures.
However, most of the existing methods are either too conservative for weak signals or computationally too intensive.
In this paper, we consider a novel method named Spatial CUSUM (SCUSUM), which employs the idea of the CUSUM procedure and false discovery rate controlling. 
We develop theoretical properties of the method which indicates that asymptotically SCUSUM can reach high classification accuracy.
In the simulation study, we demonstrate that SCUSUM is sensitive to weak spatial signals.
This new method is applied to a real fMRI dataset as illustration, and more irregular weak spatial signals are detected in the images compared to some existing methods, including the conventional FDR, FDR$_L$ and scan statistics.
\end{abstract}

\begin{keywords}
Spatial signal detection, CUSUM, FDR, weak dependence, fMRI.
\end{keywords}

\section{Introduction} \label{sec:intro}

Spatial signal detection is an important topic in many fields, including astrophysics (\cite{abazajian2012detection}; \cite{gladders2000new}), brain imaging analysis (\cite{craddock2012whole}; \cite{zhang2011multiple}; \cite{blumensath2013spatially}; \cite{shen2013groupwise}), epidemiology (\cite{kulldorff1995spatial}; \cite{tango2000test}; \cite{wheeler2007comparison}), meteorology (\cite{sun2015false})etc.
Typically, given a spatial domain $\mathscr{D}$, e.g. an brain image or a geographical map, if there is no spatial signal, all the observations could be regarded to follow the same distribution.
While with the exising of spatial signals, the responses within a unknown sub-region are from a different distribution.
Locating signal regions with low signal-noise ratio is meaningful in the early detection and warning systems:
In the early stage of abnormality, the spatial signal is very weak compared with the measurement noise;
however, an accurate early warning could effectively reduce the decision risk and avoid unnecessary but lethal cost.
Such warning systems have been studied and applied in many practical cases, e.g. disease and weather monitoring (\cite{thomson2001development}; \cite{grover2005online}; \cite{breed2011weather}.)
Therefore, there will be a huge breakthrough if weak spatial signals can be efficiently identified.

Thus far, many methods and algorithms have been developed for spatial signal detection.
One class of approaches to identify spatial clusters is the spatial scan statistics (\cite{glaz2009scan}; \cite{glaz2001scan}; \cite{priebe2005scan}; \cite{glaz2012scan} etc.) 
Scan statistics, also known as window statistics, was first proposed in \cite{naus1965clustering}.
The idea is to perform likelihood ratio tests on all the scan windows of different sizes and locations and identify the significant windows as clusters.
This method was designed to find unusual clusters of randomly positioned points.
\cite{naus1982approximations} developed the asymptotic distribution for the scan statistics and proposed the method to find the maximum cluster of points on a line or circle, the length of the longest success run in Bernoulli trials, and the generalized birthday problem.
\cite{kulldorff1999spatial} extended the framework of the conventional scan statistics to multidimensional scenario, including two-dimensional scan statistics on the plane or on a sphere and three-dimensional scan statistics in space or in space—time.
However, if the the shape of the true cluster is not circle or ellipsoid, the power of the traditional scan statistics will significantly reduce.
Additionally, without $p$-value adjustment, the detection result from scan statistics might be too aggresive (\cite{zhang2010spatial}.)

Another branch of the detection methods is based on multiple testing and false discovery rate (FDR) controlling (\cite{benjamini1995controlling}; \cite{benjamini2001control}; \cite{genovese2002thresholding}; \cite{miller2001controlling}; \cite{zhang2011multiple}; \cite{tango2000test}; \cite{sun2015false}.)  Multiple hypothesis testing is concerned with testing several statistical hypotheses simultaneously, and false discovery rate is a criterion designed to control the expected proportion of rejected null hypotheses that are incorrect rejections:
\begin{equation}\label{FDR}
\text{FDR}=\mathbb{E}[\frac{\#\text{incorrect rejections}}{\#\text{rejected null hypotheses}}].
\end{equation}
In spatial signal detection, the statistical hypotheses are about whether locations belongs to signal region or not.
\cite{genovese2002thresholding} applied multiple tesing to functional 􏰔􏰃􏰓􏰂􏰅􏰈􏰕􏰎􏰊􏰈􏰉􏰊􏰔􏰃􏰓􏰂􏰅􏰈􏰕􏰎􏰊􏰈􏰉􏰊neuroimaging data and used FDR to find a threshold for signal classification. Their experiments showed that 􏰀􏰁􏰃FDR worked more conservatively when the correlations between hypotheses are high.
\cite{miller2001controlling} applied FDR procedure to astrophysical data, and showed that FDR had a similar rate of correct detections and signicantly less false detections compared with certain standard testing procedures.
In \cite{tango2000test}, multiple testing was used to detecting spatial disease clusters.
To improve detecting effectiveness, \cite{zhang2011multiple} proposed a testing procedure named FDR$_L$ with the consideration of the spatial structures. 
By aggregating the local $p$-values, FDR$_L$ could avoid the lack of identification phenomenon and improve the detection sensitivity.
\cite{sun2015false} developed an oracle procedure, which optimally controlled the false discovery rate, false discovery exceedance and false cluster rate, for multiple testing of spatial signals. 
The tropospheric ozone data in eastern USA were analyzed with their method to show the detection effectiveness.
Although FDR and its variants have good statistical interperation and can be easily implemented, the prior knowledge about the null distibution is required and many signals can be missed when signal-noise ratio is small.

In this paper, we will introduce a novel detecting method named Spatial CUSUM (SCUSUM) to identify spatial signal regions.
We assume the expected value of the signal regions is different (usually higher) from that of the indifference regions, and noise processes are zero-mean and independent.
SCUSUM has two steps: First applying moving window and CUSUM cut-off to estimate signal weight for each location; then determining a threshold with FDR controlling. 
Moving window method has been broadly used to analyze temporal and spatial data (\cite{paez2008moving}; \cite{haas1990kriging}), where the neighboring data is utilized to capture local feature.
Similarly, in our work, the spatial data is projected into an array, which can be analyzed by the CUSUM procedure.
The CUSUM procedure (or cumulative summation) is well-known to locate changepoints in time series (\cite{horvath2012change}; \cite{cho2016change}; \cite{gromenko2017detection}; \cite{wang2018high}; \cite{aue2009estimation}.)
With the CUSUM transfromation, the testing statistics can be compared with the standard Brownian Bridge to test the existence of the changepoint. 
The location of changepoint is where CUSUM reaches the maximum.
In our work, by repeating CUSUM cut-off with moving window, the detection frequency can be calculated for each location.
We define this frequency as signal weight.
Then, the null density and alternative density can be approximated by bounded density estimation method. 
A proper threshold could be found based on FDR to identify the singal region.
Our theoretical results show that SCUSUM could asymptotically reduce the misclassification rate to zero with probability $1.$
The experiment section support that our method could detect more weak spatial signals, compared with the existing methods, including FDR, FDR$_L$ and scan statistics.

The rest of the paper is organized as follows. 
The spatial signal detection problem is formulated mathematically in Section 2. 
The details of our proposed method are introduced in Section 3. 
Section 4 presents simulation comparisons between SCUSUM and FDR$_L$ under different signal strengths and noise dependence structures. 
An application of four methods (SCUSUM, scan statistics, FDR and FDR$_L$), to a real fMRI data is given in Section 5.
We give the conclusions in Section 6.
The proofs are shown in Appendix.

\section{Problem Formulation} \label{sec:formulation}

Let $\mathscr{D}$ be the entire spatial domain, $s$ present the location belonging to $\mathscr{D}$ and $ x(s)$ be the observed data at location $s.$
Consider $\mathscr{D}_{\mathscr{A}}$ be the signal region in $\mathscr{D}$, and its complement set $\mathscr{D}_{\mathscr{A}}^c$ be the indifference region.
We assume that under $H_0,$ there is no signal region (i.e. $\mathscr{D}_{\mathscr{A}}=\varnothing$) and $x(s)$ has the same mean process $\mu$; while under $H_1,$ $x(s)$ has mean $\mu_1$ if $s\in \mathscr{D}_{\mathscr{A}}$ and $\mu_0$ if $s\in \mathscr{D}_{\mathscr{A}}^c.$

Hence, the following additive model for the random variables $X=\{x(s),s\in \mathscr{D}\}$is considered:
\begin{equation}\label{basic_model}
x(s)=\mu_0\mathbb{I}(s\in \mathscr{D}_{\mathscr{A}}^c)+\mu_1\mathbb{I}(s\in \mathscr{D}_{\mathscr{A}})+\epsilon(s),
\end{equation}
where both $\mu_0$ and $\mu_1$ are the unobserved mean (w.l.o.g, assume $\mu_1\ge\mu_0$,) $\epsilon(s)$ is the independent noise and $\mathbb{I}(\cdot)$ is the indicator function. 
Note here, we don't assign any distribution model to the noise. 
The only requirement for noise is that it has zero mean and i.i.d.
Our goal is to identify the signal region $\mathscr{D}_{\mathscr{A}}.$

\section{Proposed method} \label{sec:method}

In this section, we will give the details of our proposed method named Spatial CUSUM (SCUSUM), which has two steps: 1) For each location, we first estimate the signal weight, which is expected to be large in the signal region $\mathscr{D}_{\mathscr{A}}$, while small in the indifference region $\mathscr{D}_{\mathscr{A}}^c$. 
2) Given a significant level $\alpha,$ a threshold is determined based on FDR idea. 
In section \ref{step1}, we describe the way of using the moving window idea to project a spatial domain into a sequence and then estimating the signal weight for each location with the CUSUM cut-off. 
In section \ref{step2}, we introduce how to estimate the null distribution $f_{H_0}$ and alternative distribution $f_{H_1}$ with estimated signal weights, following by the step to determine the detection threshold.
In section \ref{Neightsize}, we briefly discuss neighbor size selection for moving window in the first step.

\subsection{The first step: signal weight estimation}\label{step1}

Our signal weight estimation method is inspired by the CUSUM procedure for changepoint detection in time series. 
However, for spatial signal detection, the conventional CUSUM is impractical, mainly due to lack of natural order for spatial observations, which are located in $\mathbb{R}^2$ or $\mathbb{R}^3$ (in our work, we focus on $\mathbb{R}^2$.)
Hence, we consider to use the moving window technique to construct proper sequences.

\begin{figure}[t!]
\centering
\begin{tabular}{@{}ccc@{}}
\includegraphics[width=4cm]{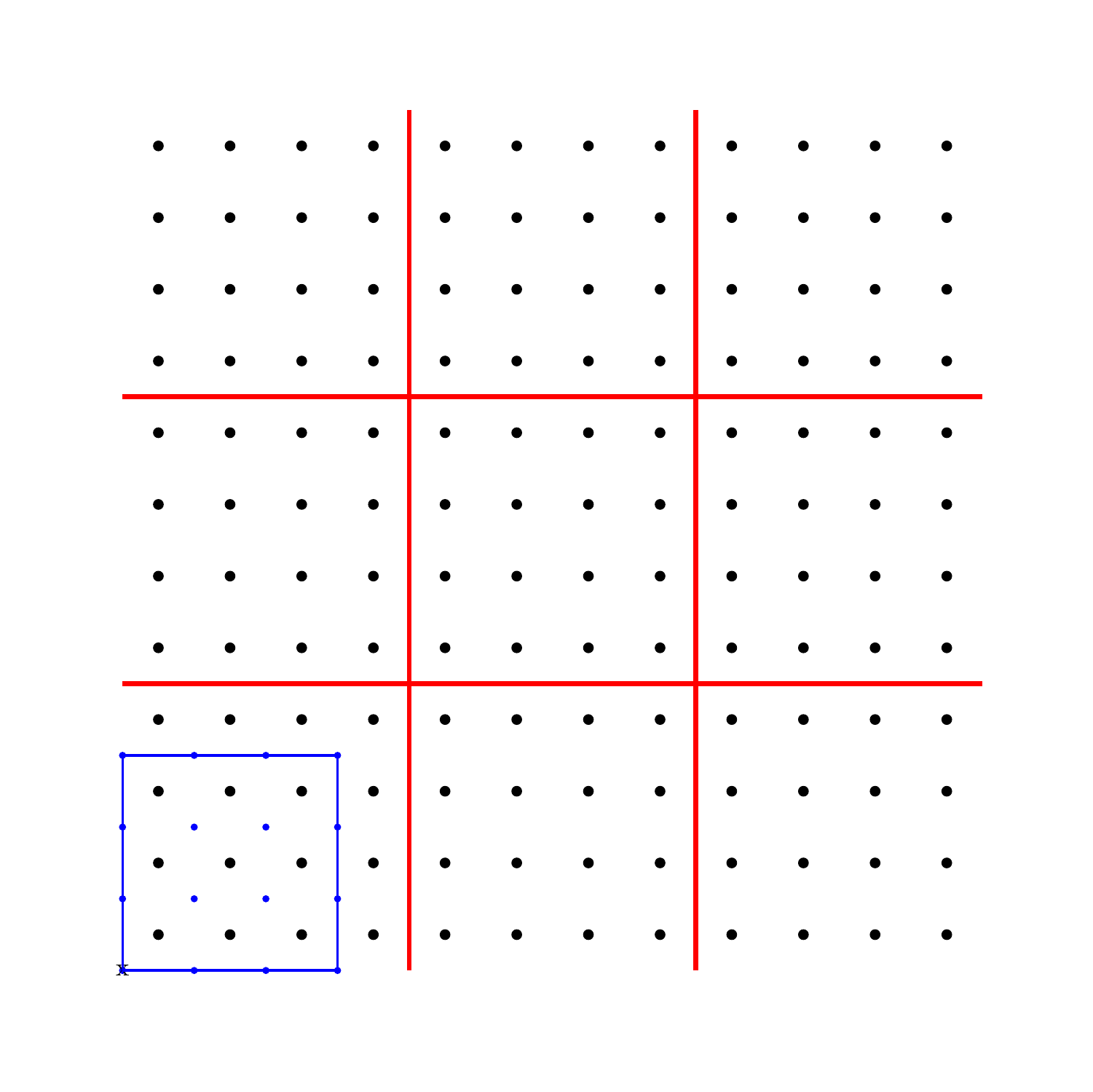}&
\includegraphics[width=4cm]{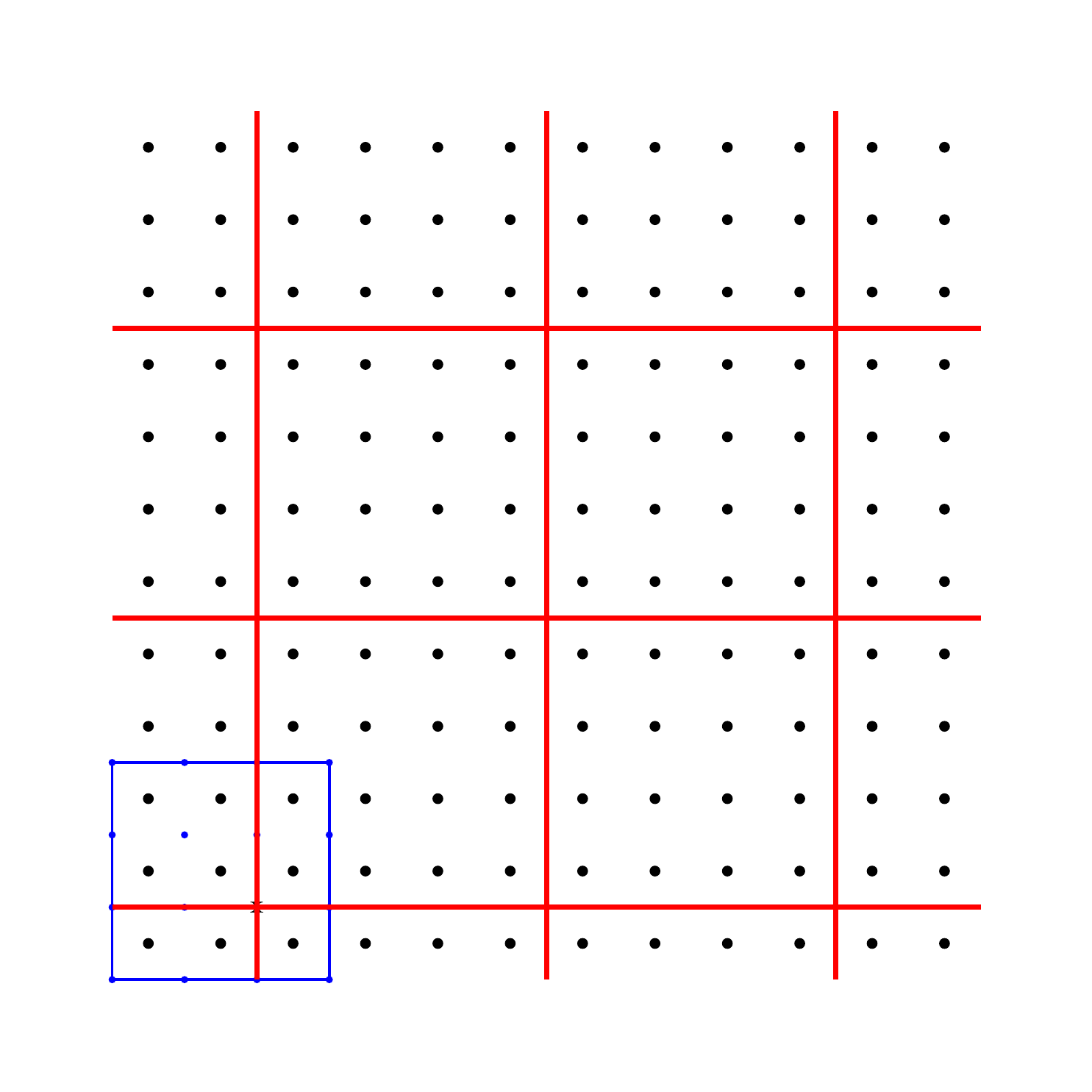}& 
\includegraphics[width=4cm]{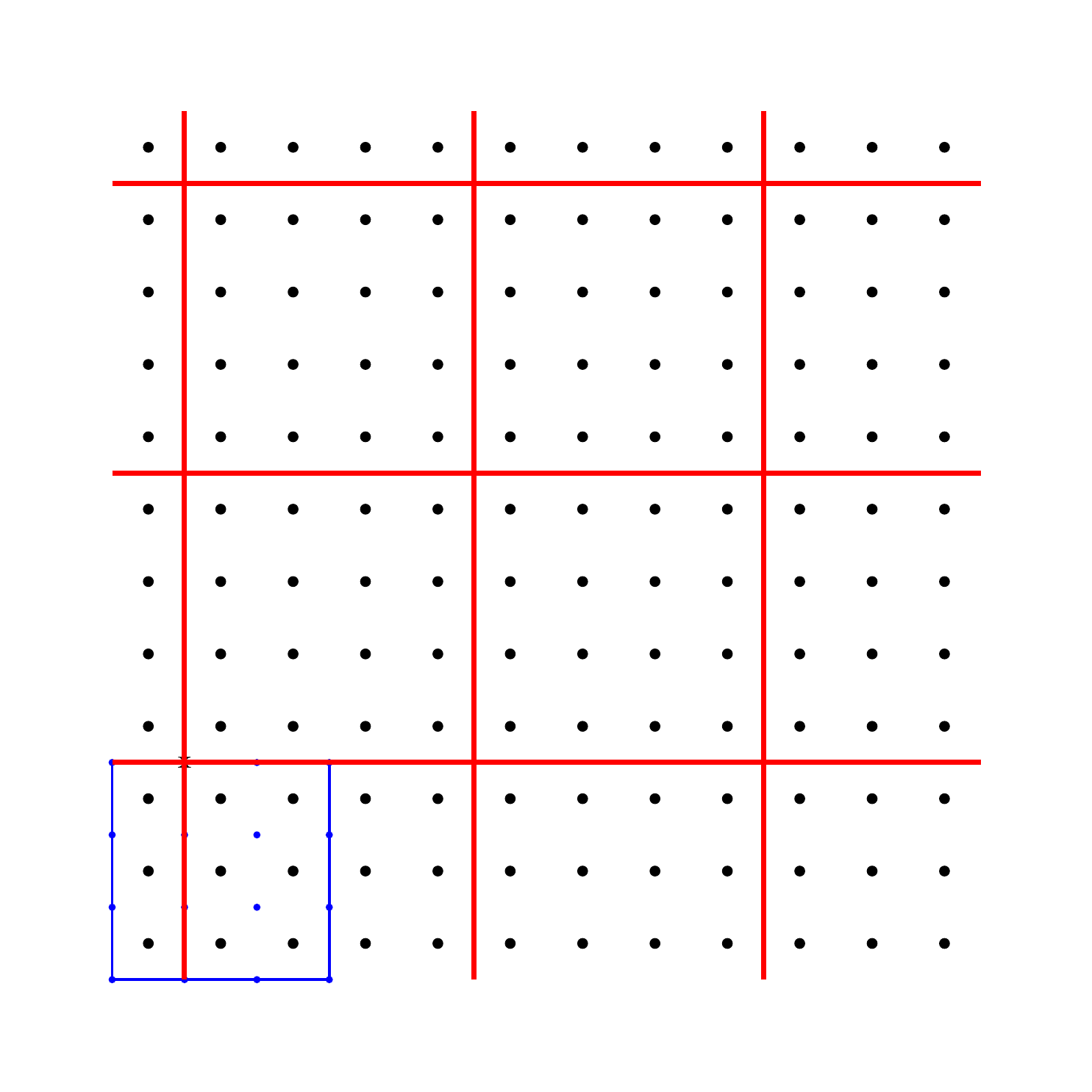}\\ 
\end{tabular}
\caption{Moving window to divide spatial domain: The blue points inside blue square is $\{(p+\frac{1}{2},q+\frac{1}{2}),~p=1:k,~q=1:k\}$, we select initial grid point from the set; The black points present the observed locations; red grids are boundary lines for blocks.} \label{fig:MW_grid}
\end{figure}

\begin{wrapfigure}{R}{0.4\columnwidth}
\centering{}
\includegraphics[width=1\linewidth]{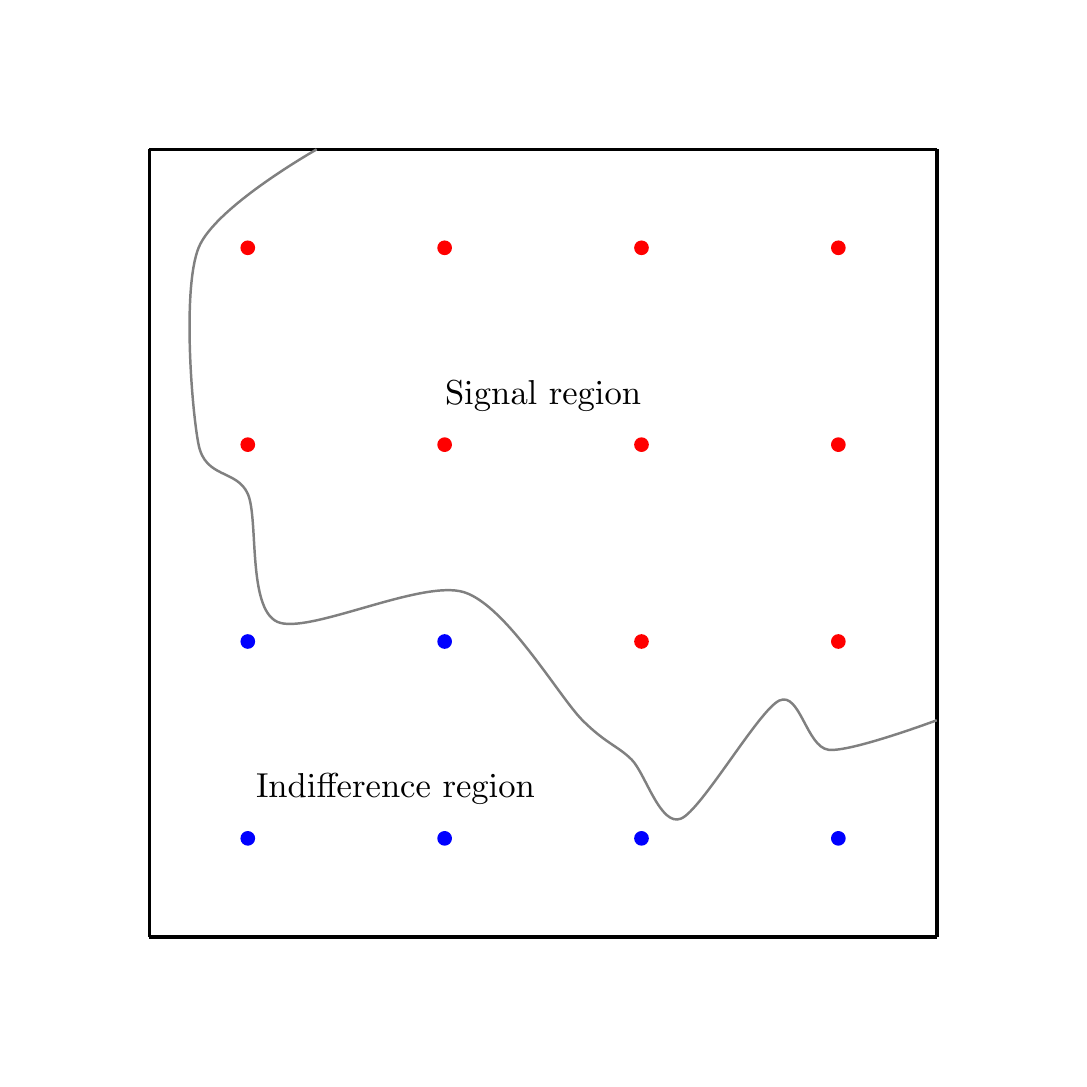}
\vspace{-0.2in}
\caption{The block at the boundary of signal region and indifference region. In this case, $p_i=0.625.$}
\label{fig:boundary}
\vspace{-.3in}
\end{wrapfigure}

Given a square spatial domain $\mathscr{D}$ and a neighbor size $k,$ we can select one point $(x,y)$ from $\{(p+\frac{1}{2},q+\frac{1}{2}),~p=1:k,~q=1:k\}.$ 
Then we can divide $\mathscr{D}$ into $b$ non-overlapping blocks $\{B_i\}_{i=1}^{b}$ of size $k\times k,$ with $(x,y)$ as the initial grid point (see Figure \ref{fig:MW_grid}.) 
So we have $n=\sum_{i=1}^{b} n_i,$ where $n$ is the total number of observations and $n_i$ is the observations in the block $B_i.$

According to model \ref{basic_model}, if block $B_i$ is inside the signal region, i.e. $B_i\in \mathscr{D}_{\mathscr{A}},$ then all the observations in it follow $x(s)=\mu_1+\epsilon(s), ~ \forall x\in B_i;$ 
if $B_i$ is inside the indifference region, i.e. $B_i\in \mathscr{D}_{\mathscr{A}}^c,$ then $x(s)=\mu_0+\epsilon(s), ~ \forall x\in B_i;$ 
if $B_i$ is at the boundary of $\mathscr{D}_{\mathscr{A}}$ and $\mathscr{D}_{\mathscr{A}}^c,$ then the observations $x(s)$ follows a mixture model: $x(s)=\mu_1+\epsilon(s)$ with probability $p_i$ and $x(s)=\mu_0+\epsilon(s)$ with $1-p_i$, where $p_i$ is the ratio of signal points inside $B_i$ (see Figure \ref{fig:boundary}.) 

Next, based on the above division, we construct two sequences to capture the feature of these spatial observations. 
The first sequence is sample sequence: A random sample, denoted as $\gamma_i$, is drawn from block $B_i$. It could be regarded as the 'representative' for this block. 
Meanwhile, we construct the second sequence by computing the block mean without the 'representative', $\tilde{\mu}_i=\frac{\sum_{x\in B_i}x-\gamma_i}{\sum_{x\in \mathscr{D}}\mathbb{I}(x\in B_i)-1}.$ 
As the number of observations $n_i$ in $B_i$ increases, the pseudo block mean gets closer to the true block mean, i.e. $\tilde{\mu}_i\rightarrow \frac{\sum_{x\in B_i}x}{\sum_{x\in \mathscr{D}}\mathbb{I}(x\in B_i)}.$ 
Hence, the pseudo block mean $\tilde{\mu}_i$ could present the local block mean.
Based on the analysis in the last paragraph, we could easily derive the following results for $\{\gamma_i\}_{i=1}^{b}$ and $\{\tilde{\mu}_i\}_{i=1}^{b}:$
\begin{align}
\gamma_i =&\left\{
\begin{aligned}
&\mu_1+\epsilon,~ \text{if }B_i \in \mathscr{D}_{\mathscr{A}}  \\
&\mu_0+\epsilon,~ \text{if }B_i \in \mathscr{D}_{\mathscr{A}}^c \\
&\mu_1z+\mu_0(1-z)+\epsilon, ~\text{if } B_i \text{ at boundary}
\end{aligned}\label{gammai}
\right.\\
&\notag \\
\mathbb{E}[\tilde{\mu}_i] &\left\{
\begin{aligned}
&=\mu_1,~ \text{if }B_i \in \mathscr{D}_{\mathscr{A}}  \\
&=\mu_0,~ \text{if }B_i \in \mathscr{D}_{\mathscr{A}}^c \\
&\approx p_i\mu_1+(1-p_i)\mu_0, ~\text{if } B_i \text{ at boundary}
\end{aligned}\label{tilde_mu}
\right.
\end{align}
where $z\sim Ber(p_i)$. 
(\ref{gammai}) and (\ref{tilde_mu}) show that even though $\{\gamma_i\}_{i=1}^{b}$ and $\{\tilde{\mu}_i\}_{i=1}^{b}$ are indepedent (see Lemma \ref{lemma.1}), they have similar patterns:
the closer $B_i$ is to $\mathscr{D}_{\mathscr{A}},$ the more likely it has large $\gamma_i$ and $\tilde{\mu}_i,$ and vice versa. 
Hence, we could consider to rearrange $\{\gamma_i\}_{i=1}^{b}$ according to the decreasing order of $\{\tilde{\mu}_i\}_{i=1}^{b}$, denoted as $\{\gamma^*_i\}_{i=1}^{b}.$
Intuitively, if there is no signal region, then $\{\gamma^*_i\}$ should be around $\mu_0;$ otherwise, $\{\gamma^*_i\}_{i=1}^{b}$ should have three parts: 
the first part presenting signal blocks is around $\mu_1$, the second part is the interim from $\mu_1$ to $\mu_0$ and the third part is indifference blocks around $\mu_0$ (see Figure \ref{fig:Ill_gamma}.) 

\begin{lemma}\label{lemma.1}
Based on model \ref{basic_model}, $\{\gamma_i\}$ and $\{\tilde{\mu}_i\}$ are indepedent. As the number of observations in each block $n_i$ goes to infinity, i.e. $\min n_i\rightarrow \infty,$ we have the following: under the null hypothesis $H_0:$ there is no signal region, then $\{\gamma^*_i\}$ is an i.i.d sequence; under the alternative hypothesis $H_1:$ signal region exists, then there exists $l_1$ and $l_2$ with $0 \le l_1 <l_2 \le b,$ 
\begin{align}
\mathbb{E}[\gamma^*_i]  = \left\{
\begin{aligned}
&\mu_1,~ \text{if } 0 \le i < l_1  \\
&\in (\mu_0, \mu_1) ~\text{if } l_1\le i < l_2\\
&\mu_0,~ \text{if } l_2 \le i \le n,
\end{aligned}\label{gamma*}
\right.
\end{align}
and $l_1$ is the number of blocks inside $\mathscr{D}_{\mathscr{A}},$ and $(b-l_2)$ is the number of blocks inside $\mathscr{D}_{\mathscr{A}}.$
\end{lemma}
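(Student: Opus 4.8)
The plan is to separate the two assertions of the lemma: (i) independence of $\{\gamma_i\}$ and $\{\tilde\mu_i\}$, and (ii) the limiting structure of the reordered sequence $\{\gamma^*_i\}$ under $H_0$ and $H_1$. For (i), I would argue block by block. Fix a block $B_i$. The vector of noise terms $\{\epsilon(s):s\in B_i\}$ is i.i.d.\ with zero mean. Conditioning on which coordinate was selected as the representative $\gamma_i$, the quantity $\gamma_i$ depends only on the noise at that one location (plus the deterministic mean), while $\tilde\mu_i$ is an average of the noise at the \emph{remaining} $n_i-1$ locations. Independence of the coordinates then gives that $\gamma_i\perp\tilde\mu_i$ within each block, and since blocks are disjoint and the noise is independent across all of $\mathscr{D}$, the whole sequence $\{\gamma_i\}$ is independent of the whole sequence $\{\tilde\mu_i\}$. (One should be slightly careful that the boundary case in \eqref{gammai} involves an extra Bernoulli label $z$, but that label is a property of the fixed location, not of the noise, so the argument is unaffected.)

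For (ii) under $H_0$: when $\mathscr{D}_{\mathscr{A}}=\varnothing$, every $\gamma_i=\mu+\epsilon_i$ with the $\epsilon_i$ i.i.d., and every $\tilde\mu_i=\mu+(\text{average of independent noise})$, so $\{\tilde\mu_i\}$ is an i.i.d.\ sequence as well. The reordering that produces $\{\gamma^*_i\}$ is a permutation of $\{1,\dots,b\}$ determined \emph{solely} by $\{\tilde\mu_i\}$; since $\{\gamma_i\}$ is independent of $\{\tilde\mu_i\}$ by part (i) and is itself i.i.d., reordering an i.i.d.\ sequence by an independent random permutation leaves its joint distribution unchanged, hence $\{\gamma^*_i\}$ is still i.i.d. (This is the standard exchangeability argument: an i.i.d.\ sequence is exchangeable, so any permutation independent of it preserves the law.)

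For (ii) under $H_1$: let $l_1$ be the number of blocks wholly inside $\mathscr{D}_{\mathscr{A}}$ and $b-l_2$ the number wholly inside $\mathscr{D}_{\mathscr{A}}^c$; the remaining $l_2-l_1$ blocks straddle the boundary with signal-fractions $p_i\in(0,1)$. From \eqref{tilde_mu}, as $\min n_i\to\infty$ the pseudo block means concentrate around their expectations: $\tilde\mu_i\to\mu_1$ for interior signal blocks, $\to\mu_0$ for interior indifference blocks, and $\to p_i\mu_1+(1-p_i)\mu_0\in(\mu_0,\mu_1)$ for boundary blocks. Since $\mu_1>\mu_0$, for $n_i$ large enough the ordering by decreasing $\tilde\mu_i$ eventually places all interior-signal blocks first, all boundary blocks (in decreasing order of $p_i$) in the middle, and all interior-indifference blocks last. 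Reading off $\mathbb{E}[\gamma^*_i]$ through \eqref{gammai} then yields exactly the three-regime form \eqref{gamma*}, with the middle values lying strictly between $\mu_0$ and $\mu_1$ because each is $p_i\mu_1+(1-p_i)\mu_0$ with $p_i\in(0,1)$.

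The main obstacle is making the ``eventually sorts correctly'' step in the $H_1$ case rigorous: the $\tilde\mu_i$ are random, so the permutation is random, and one needs that with probability tending to one (ideally, almost surely, to match the paper's ``with probability $1$'' flavor) the gaps between the three groups of limiting means are not crossed by fluctuations. This requires a concentration/law-of-large-numbers statement for each $\tilde\mu_i$ — e.g.\ that $\tilde\mu_i-\mathbb{E}[\tilde\mu_i]\to 0$ in probability (or a.s., under a mild moment condition on $\epsilon$) as $n_i\to\infty$ — together with the fact that there are only finitely many blocks $b$, so a union bound keeps the total failure probability small. A secondary subtlety is that $l_1$ and $l_2$ as stated are the counts of \emph{interior} signal and indifference blocks; one should note that these counts, and which blocks are boundary blocks, depend on the chosen initial grid point, but for a fixed grid they are deterministic, so \eqref{gamma*} is a statement conditional on the grid. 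I would also flag the apparent typo in the lemma statement (``$(b-l_2)$ is the number of blocks inside $\mathscr{D}_{\mathscr{A}}$'' should read ``inside $\mathscr{D}_{\mathscr{A}}^c$'', and the index ranges should top out at $b$ rather than $n$), but the proof structure above is unaffected.
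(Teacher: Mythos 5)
Your proposal is correct and follows essentially the same route as the paper: independence of $\gamma_i$ and $\tilde{\mu}_i$ from the independence of the noise within and across blocks, the $H_0$ case via the observation that reordering an i.i.d.\ sequence by an independent random permutation preserves its law (the paper just writes this out explicitly as a sum over the $b!$ equally likely orderings of $\{\tilde{\mu}_i\}$), and the $H_1$ case via the weak law of large numbers applied to $\tilde{\mu}_i$ so that the sort eventually separates interior-signal, boundary, and interior-indifference blocks. Your added remarks on making the ``eventually sorts correctly'' step rigorous via a union bound over the finitely many blocks, and on the typos in the lemma statement, are reasonable refinements of the same argument rather than a different approach.
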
  

\begin{figure}[t!]
\centering
\begin{tabular}{@{}ccc@{}}
\includegraphics[width=6cm]{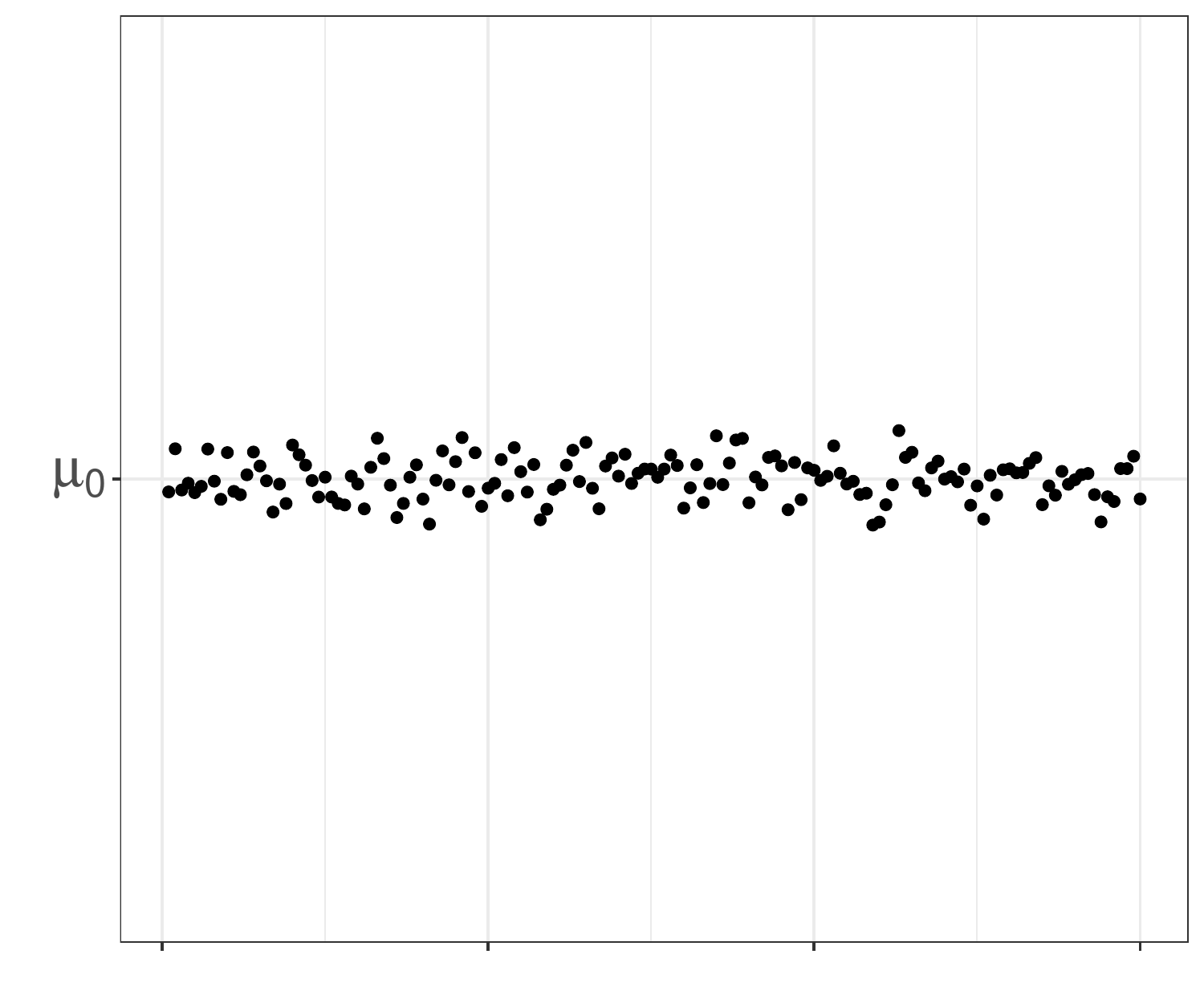}&
& 
\includegraphics[width=6cm]{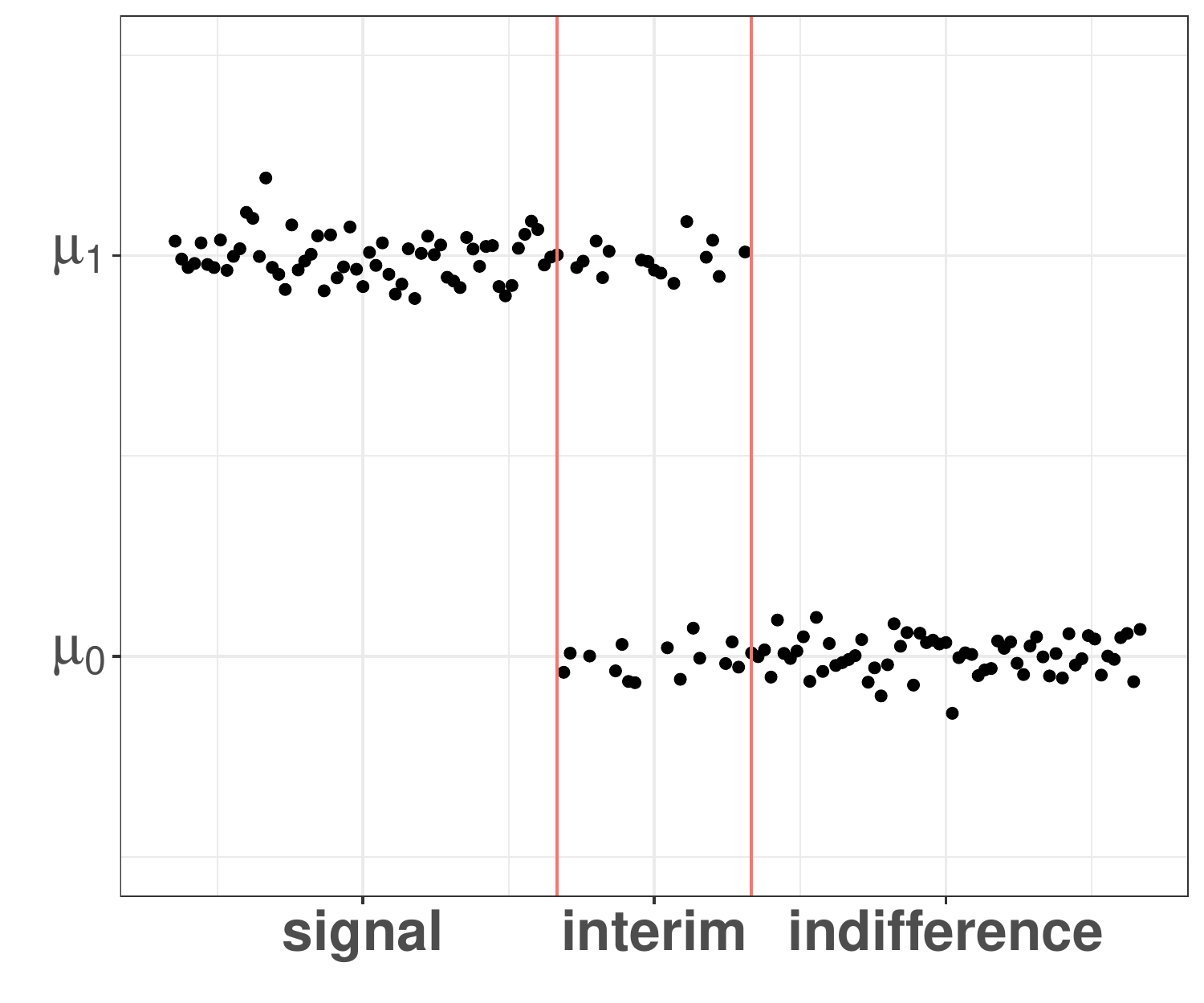}\\ 
(a) under $H_0$ &
&
(b) under $H_1$
\end{tabular}
\caption{The possible patterns of $\{\gamma_i^*\}_{i=1}^{b}:$ (a) presents the scenario without signal region and $\{\gamma_{(i)}^*\}_{i=1}^{b}$ are around $\mu_0;$
(b) shows the pattern with signal region and there are three parts: signal, interim and indefference.} \label{fig:Ill_gamma}
\end{figure}

Lemma \ref{lemma.1} shows that under $H_1,$ the projected sequence $\{\gamma^*_i\}_{i=1}^{b}$ has a changepoint in $[l_1,l_2]$ and the conventional CUSUM could help locate a cut-off index near or inside the interval. First, we compute the CUSUM statistics for $\{\gamma^*_i\}_{i=1}^{b}$ at each location:
\begin{equation}
\tilde{\gamma}_r=|\sum_{i=1}^{r} \gamma^*_i-\frac{r}{b}\sum_{i=1}^{b}\gamma^*_i|.
\end{equation}
Then the cut-off index is $t=\arg\max_{i} \tilde{\gamma}_i.$ The following theorem guarantees the accuracy of the cut-off index.
\begin{myTheo}\label{Theo.1}
Under the alternative hypothesis: if the signal region exists, as the number of block $b$ and the number of observations in each block $n_i$ go to infinity, then the cut-off index $t$ based on CUSUM procedure will fall into the interval $[l_1,l_2]$ with probability 1, i.e. $\mathbb{P}(l_1\le t \le l_2)=1$ as $b \rightarrow \infty$ and $\min n_i \rightarrow \infty$. 
\end{myTheo}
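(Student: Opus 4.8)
The plan is to analyze the limiting behavior of the CUSUM path $\tilde\gamma_r$ as $b\to\infty$ and $\min n_i\to\infty$, and show that its maximizer must asymptotically lie in $[l_1,l_2]$. First I would pass to the rescaled process: write $u=r/b\in[0,1]$ and consider $\frac{1}{b}\tilde\gamma_{\lfloor ub\rfloor}$. By Lemma~\ref{lemma.1}, as $\min n_i\to\infty$ the centered terms $\gamma^*_i-\mathbb{E}[\gamma^*_i]$ are the noise variables $\epsilon$, which are i.i.d.\ zero-mean, so $\frac{1}{b}\sum_{i=1}^{\lfloor ub\rfloor}(\gamma^*_i-\mathbb{E}[\gamma^*_i])\to 0$ uniformly in $u$ (a Glivenko--Cantelli / uniform SLLN argument, using that the partial-sum process of i.i.d.\ mean-zero variables divided by $b$ converges uniformly to $0$ a.s.). Hence $\frac{1}{b}\tilde\gamma_{\lfloor ub\rfloor}$ converges a.s., uniformly on $[0,1]$, to the deterministic function
\[
g(u)=\Bigl|\,G(u)-u\,G(1)\,\Bigr|,\qquad G(u)=\lim_{b\to\infty}\frac1b\sum_{i=1}^{\lfloor ub\rfloor}\mathbb{E}[\gamma^*_i],
\]
where $G$ is the integral of the (nonincreasing, by \eqref{gamma*}) limiting mean profile that equals $\mu_1$ on $[0,\lambda_1)$, lies strictly between $\mu_0$ and $\mu_1$ on $[\lambda_1,\lambda_2)$, and equals $\mu_0$ on $[\lambda_2,1]$, with $\lambda_j=\lim l_j/b$.

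Next I would show that the maximizer of $g$ lies in $[\lambda_1,\lambda_2]$. The function $u\mapsto G(u)-uG(1)$ is concave-like in the sense that its derivative is the limiting mean profile minus the constant $G(1)=\lambda_1\mu_1+\int_{\lambda_1}^{\lambda_2}(\cdots)+(1-\lambda_2)\mu_0$, which is a weighted average strictly between $\mu_0$ and $\mu_1$. On $[0,\lambda_1)$ the derivative is $\mu_1-G(1)>0$, so $G(u)-uG(1)$ is strictly increasing there; on $(\lambda_2,1]$ the derivative is $\mu_0-G(1)<0$, so it is strictly decreasing there; on $[\lambda_1,\lambda_2)$ the derivative is nonincreasing and passes from positive to negative (or at least the function attains its max somewhere in this closed interval). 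Since $G(0)-0=0$ and $G(1)-G(1)=0$, the map $u\mapsto G(u)-uG(1)$ is nonnegative, vanishes only at the endpoints, so $g=|G(u)-uG(1)|=G(u)-uG(1)$, and its unique maximum is attained at a point $u^\star\in[\lambda_1,\lambda_2]$. Then a standard argmax-continuity argument (uniform convergence $\frac1b\tilde\gamma_{\lfloor ub\rfloor}\to g$ a.s.\ plus $g$ having its strict maximum in $[\lambda_1,\lambda_2]$, with $g$ strictly below that maximum outside any neighborhood of $[\lambda_1,\lambda_2]$) forces $t/b\to u^\star$, hence $\mathbb{P}(l_1\le t\le l_2)\to 1$; after identifying constants and being slightly careful that $l_1,l_2$ are themselves growing, this gives the claimed a.s.\ containment in the limit.

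The main obstacle I expect is making the two limits $b\to\infty$ and $\min n_i\to\infty$ interact cleanly. Lemma~\ref{lemma.1} only gives the i.i.d.\ / mean structure of $\{\gamma^*_i\}$ in the limit $\min n_i\to\infty$; for finite $n_i$ the reordering by $\{\tilde\mu_i\}$ induces dependence and the profile \eqref{gamma*} holds only approximately (the "interim" block's composition $p_i$ and the ordering of boundary blocks are random). So I would need a uniform (in $b$) control: either take $\min n_i\to\infty$ fast enough relative to $b$, or argue that the misordering events have vanishing probability uniformly, so that the empirical mean profile converges to the monotone limiting profile without the fluctuations swamping the $O(1)$ gap $\mu_1-\mu_0$. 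A secondary technical point is handling the degenerate edge cases $l_1=0$ or $l_2=b$ (pure signal at one end), where $u^\star$ sits at an endpoint of $[\lambda_1,\lambda_2]$ but the conclusion still holds. Once the uniform SLLN for the partial-sum process and the strict-max property of $g$ are in place, the rest is routine.
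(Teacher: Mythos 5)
Your overall strategy (rescale the CUSUM path by $1/b$, prove uniform convergence to a deterministic limit $g$, locate the maximizer of $g$, and invoke argmax continuity) is a legitimate way to prove consistency of a CUSUM changepoint estimator at the rate $o(b)$, but it is too coarse for the statement actually claimed, and the difficulty you flag at the end (``being slightly careful that $l_1,l_2$ are themselves growing'') is not a technicality --- it is the crux, and your argument does not survive it. The paper itself notes that $(l_2-l_1)/b\to 0$ (the boundary blocks are a vanishing fraction of all blocks, of order $\sqrt{b}$ for a two-dimensional domain), so in your notation $\lambda_1=\lambda_2=\theta$ and the limiting mean profile is a step function. Your argmax-continuity step then yields only $t/b\to\theta$, i.e.\ $t=\theta b+o(b)$; but membership in $[l_1,l_2]$ requires localizing $t$ to a window around $\theta b$ whose width is itself $o(b)$. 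A uniform-LLN-scale argument cannot distinguish $t=l_2$ from $t=l_2+\delta b$, let alone from $t=l_2+\sqrt{b}\log b$, so the ``hence $\mathbb{P}(l_1\le t\le l_2)\to 1$'' step is a genuine gap, not a routine identification of constants. (Your route would suffice only in a regime where $\lambda_1<\lambda_2$ strictly and the maximizer of $g$ sits in the open interval, which is not the regime of interest here.)

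The paper's proof instead works at the resolution of individual blocks, following the Aue--Horv\'ath style of argument: for a \emph{fixed} offset $N$ it compares $Q(r)=\tilde\gamma_r^2$ with $Q(l_2)$ over all $r\ge l_2+N$, factors the difference as $(E^1+D^1)(E^2+D^2)$ with deterministic drift parts $D^j$ and noise parts $E^j$, and shows the drift product $D^1D^2\asymp -N\theta^2\Delta^2 b$ dominates the three cross terms (controlling $|E^1|/(r-l_2)$ and $|E^2|/(r+l_2)$ uniformly in $r$ via the law of the iterated logarithm and the law of large numbers). This gives $\mathbb{P}(t\ge l_2+N)\to 0$ for each fixed $N$, and symmetrically on the left of $l_1$, which is the block-level localization the theorem needs. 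To salvage your approach you would have to replace the first-order functional limit with a second-order (H\'ajek--R\'enyi or LIL type) control of the partial-sum fluctuations near the changepoint --- at which point you would essentially be reconstructing the paper's argument.
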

Theorem \ref{Theo.1} ensures that the cut-off procedure could asymptotically separate the signal region and indifference (see Figure \ref{fig:Ill_gamma} (b)).
Also, with the given spatial domain, as the block size becomes finer (equivalent to $b\rightarrow \infty$), the number of the blocks at the boundary is decreasing. 
Hence we have $(l_2-l_1)/b\rightarrow 0.$
Combining the results from Theorem \ref{Theo.1}, the number of misclassified locations goes to zero.

Above theoretical results require $b \rightarrow \infty$ and $\min n_i \rightarrow \infty$. 
In practice, with limited observations, the detected result might be affected by the initial point selection, especially when the signal region is not regular. 
Thus, we could eliminate the effect of initial point by going through all the possible inital points (see Figure \ref{fig:MW_grid}.)
We summarize our method in Algorithm \ref{Alg:MWDM}.
Also, we could sufficiently extract the local information and eliminate the effect of randomly sampled 'representatives' by repeated Algorithm \ref{Alg:MWDM} more than once. 
With these steps, we could estimate the signal weights $\{w(s)\}$ (or $\{\tilde{w}(s)\}$) by computing the detected frequency of each location. The larger signal weight means the location is more likely to belong to the signal region.

\begin{algorithm}[t!]
\caption{Moving Window detecting method for signal weight}\label{Alg:MWDM}
\begin{algorithmic} [1]
\REQUIRE observed data $\{x(s)\}$ on grid $\{(p,q),~p=1:n,~q=1:n\}$, neighbor size $k$, repeat times $m$;
\ENSURE corresponding signal weights $\{w(s)\}$ or $\{\tilde{w}(s)\};$
\FOR{$(x,y)$ in $\{(p+\frac{1}{2},q+\frac{1}{2}),~p=1:k,~q=1:k\}$}
    \STATE Divide $\mathscr{D}$ into blocks $\{B_i\}_{i=1}^{b}$ of size $k\times k$ based on $(x,y);$
    \STATE Sample one observation from each block $\gamma_i;$
    \STATE Estimate the block mean $\tilde{\mu}_i;$
    \STATE Reorder $\{\gamma_i\}_{i=1}^{b}$ according to $\{\tilde{\mu}_i\}_{i=1}^{b}$ decreasingly as $\{\gamma_i^*\}_{i=1}^{b};$
    \STATE Conduct CUSUM transformation on $\{\gamma_{(i)}^*\}_{i=1}^{b}$ as $\{\tilde{\gamma}_{(i)}^*\}_{i=1}^{b};$
    \STATE Find the location $t$ where $\{\tilde{\gamma}_i^*\}_{i=1}^{b}$ reaches maximum;
    \STATE Define the blocks corresponding to the first $t$ elements in $\{\gamma_i\}_{i=1}^{b}$ as signal block, and so do the observations in these blocks;
\ENDFOR
\STATE Compute corresponding signal weight $w(s)=\frac{\text{detected times for }x(s)}{k^2};$\\
(Option)
\STATE Repeat above produce $m$ times and obtain $\{w^{i}(s)\}_{i=1}^{m};$\\
\STATE Compute the average signal weights at each location $\{\tilde{w}^{i}(s)\}:$ $\tilde{w}^{i}(s)=\sum_{i=1}^{m}w^{i}(s)/m;$\\
\end{algorithmic}
\end{algorithm}

\subsection{The second step: Threshold estimation with FDR}\label{step2}

With the estimated signal weight and a given significant level $\alpha$, we could identify the signal region with FDR. 
Firstly, the weights are in $[0,1],$ and could be considered as the possibilities that the locations have signals.
Thus we could use density estimation with the boundary correction method to estimate the distribution of the signal weights, $f(x)$. Many density estimations have been studied in previous works (\cite{chen1999beta}; \cite{cowling1996pseudodata}; \cite{jones1996simple}; \cite{cattaneo2017lpdensity}.) In our work, we use the local polynomial density estimation method from \cite{cattaneo2017lpdensity}.

In the following, we analyze the characteristic of $f(x),$ which could help us estimate the threshold. Under the null hypothesis $H_0,$ with Lemma \ref{lemma.1}, we know that $\{\gamma_{(i)}^*\}_{i=1}^{b}$ are i.i.d. and the correponding blocks are random indexed. 
Hence with CUSUM cut-off procedure, the distribution for signal weight $f(x)$ is symmetric and has lower value with $x=0$ and $x=1$ (see Figure \ref{Fig:H0H1_fdr} (a).) 

\begin{lemma}\label{lemma.sym}
Under the null hypothesis $H_0,$ as the number of block $b$ and the number of observations in each block $n_i$ go to infinity, then the density for signal weights $f(x)$ is symmetric.
\end{lemma}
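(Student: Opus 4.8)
The plan is to show that, under $H_0$, the common law of the signal weight $w(s)$ at any location is invariant under $x\mapsto 1-x$. Since by Lemma~\ref{lemma.1} there is no signal region and the noise is i.i.d., all locations are exchangeable, so the weights $\{w(s)\}$ are identically distributed and it is enough to prove $w(s)\stackrel{d}{=}1-w(s)$ for a single $s$; the limiting density $f$ is then symmetric about $\frac12$. Writing $w(s)$ as the fraction, over the $k^2$ initial grid points and the $m$ repeats, of runs in which $s$ is assigned to the detected region $A$, the quantity $1-w(s)$ is the fraction of runs with $s\notin A$, so the claim reduces to showing the detected region $A$ has the same distribution as its complement $\mathscr{D}\setminus A$, jointly across runs.

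The engine I would use is a reversal symmetry of the CUSUM cut-off. For any finite real sequence $a_1,\dots,a_b$, the centered cumulative sum $\tilde a_r=\bigl|\sum_{i\le r}a_i-\frac{r}{b}\sum_{i\le b}a_i\bigr|$ satisfies $\tilde a_r=\tilde a^{\leftarrow}_{b-r}$ for the reversed sequence $a^{\leftarrow}_i:=a_{b+1-i}$, which follows in one line from $\sum_{i\le r}a^{\leftarrow}_i=\sum_{j> b-r}a_j$. Hence $\arg\max_r\tilde a^{\leftarrow}_r=b-\arg\max_r\tilde a_r$, so the set of blocks placed on the ``signal'' side of the cut-off when Algorithm~\ref{Alg:MWDM} is run on $\{\gamma^*_i\}$ is the exact complement of the set obtained when it is run on $\{\gamma^*_{b+1-i}\}$; and feeding the reversed sequence to the algorithm is the same as reordering the blocks by \emph{increasing} rather than decreasing pseudo block means $\tilde\mu_i$.

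It then remains to show the ``increasing-order'' run has the same distribution as the original ``decreasing-order'' run. For a single run this is immediate from Lemma~\ref{lemma.1}: under $H_0$ the sequence $\{\gamma^*_i\}$ is i.i.d., hence exchangeable, hence equal in law to its reversal (equivalently, since $\gamma_i\perp\tilde\mu_i$ and the $\tilde\mu_i$ are i.i.d. and, in the $\min n_i\to\infty$ limit, continuous, the ordering permutation is uniform on the symmetric group and independent of $\{\gamma_i\}$, and reversing a uniform permutation is again uniform). Combined with the previous paragraph, this gives $\mathbb{I}\{s\in A\}\stackrel{d}{=}\mathbb{I}\{s\notin A\}$ for a single run, and in particular $\mathbb{P}(s\in A)=\frac12$.

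The hard part will be upgrading this to the joint statement needed for the aggregated weight $w(s)=\frac{1}{k^2m}\sum_p D_p$: the $k^2m$ runs share the same data $\{x(s)\}$, so their ordering permutations are dependent, and one must check that their joint law is invariant under coordinatewise reversal. My plan here is to condition on the data, realize each ordering through the independent within-block representative draws, and exploit that as $\min n_i\to\infty$ the pseudo block means converge to the true block means, so that applying the reversal simultaneously in every run is a distribution-preserving (indeed involutive) operation sending each detected region to its complement; granting this, $w(s)=\frac{1}{k^2m}\sum_p D_p\stackrel{d}{=}\frac{1}{k^2m}\sum_p(1-D_p)=1-w(s)$, and $f(x)=f(1-x)$. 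Making this simultaneous-reversal step fully rigorous is the crux; an alternative route for the $b\to\infty$ limit is to pass to the Brownian-bridge limit of the rescaled CUSUM process and invoke its time-reversal invariance $B(1-t)\stackrel{d}{=}B(t)$, which makes the symmetry of the argmax location transparent.
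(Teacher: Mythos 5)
Your core mechanism is exactly the paper's: under $H_0$, Lemma~\ref{lemma.1} makes $\{\gamma_i^*\}$ i.i.d., hence equal in law to its reversal, and the identity $\tilde a_r=\tilde a^{\leftarrow}_{b-r}$ shows that the CUSUM cut-off applied to the reversed sequence detects exactly the complementary set of blocks. The paper's entire proof is this single-run observation stated in two lines, and your version of it is correct and in fact more explicit -- the paper never writes down the reversal identity for $\tilde{\gamma}_r$, it simply asserts that ``the detection result is opposite.''

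Where you go beyond the paper is in flagging the aggregation step, and you are right that it is the crux: marginal symmetry of a single indicator $D_p=\mathbb{I}\{s\in A_p\}$ does not by itself give $\frac{1}{k^2m}\sum_p D_p\stackrel{d}{=}\frac{1}{k^2m}\sum_p(1-D_p)$, because the $k^2m$ runs are driven by the same data and their ordering permutations are dependent; one needs the \emph{joint} law of $(D_1,\dots,D_{k^2m})$ to be invariant under coordinatewise complementation, i.e.\ a single measure-preserving transformation of the underlying randomness that reverses every run's ordering simultaneously. Your proposal leaves this as a plan, so as written it is incomplete -- but the paper's own proof is silent on exactly the same point: it treats one CUSUM run and never mentions that $w(s)$ averages over $k^2$ initial grid points and $m$ repeats. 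So relative to the paper you have proved the same amount and, usefully, named the missing piece. If you want to close it, note that the global reflection $\epsilon(s)\mapsto-\epsilon(s)$ reverses every run's ordering simultaneously and (by the reflection-invariance of $\tilde{\gamma}_r$) complements every detected set, which would finish the argument in one stroke -- but it is measure-preserving only when the noise law is symmetric, which model~\ref{basic_model} does not assume. A version valid for general i.i.d.\ noise, or the Brownian-bridge time-reversal route made uniform over the dependent runs, still has to be supplied, and is not supplied by the paper either.
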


\begin{figure}[t!]
\centering
\begin{tabular}{@{}c@{}}
\includegraphics[width=15cm]{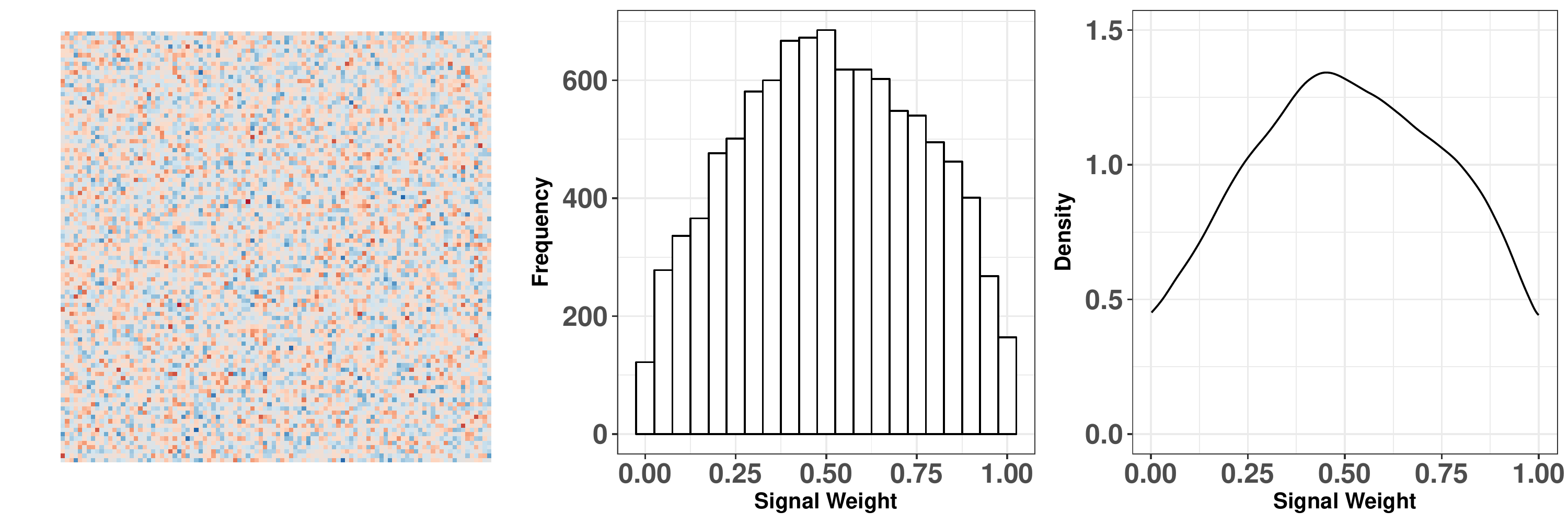}\\
(a) Under $H_0$\\
\includegraphics[width=15cm]{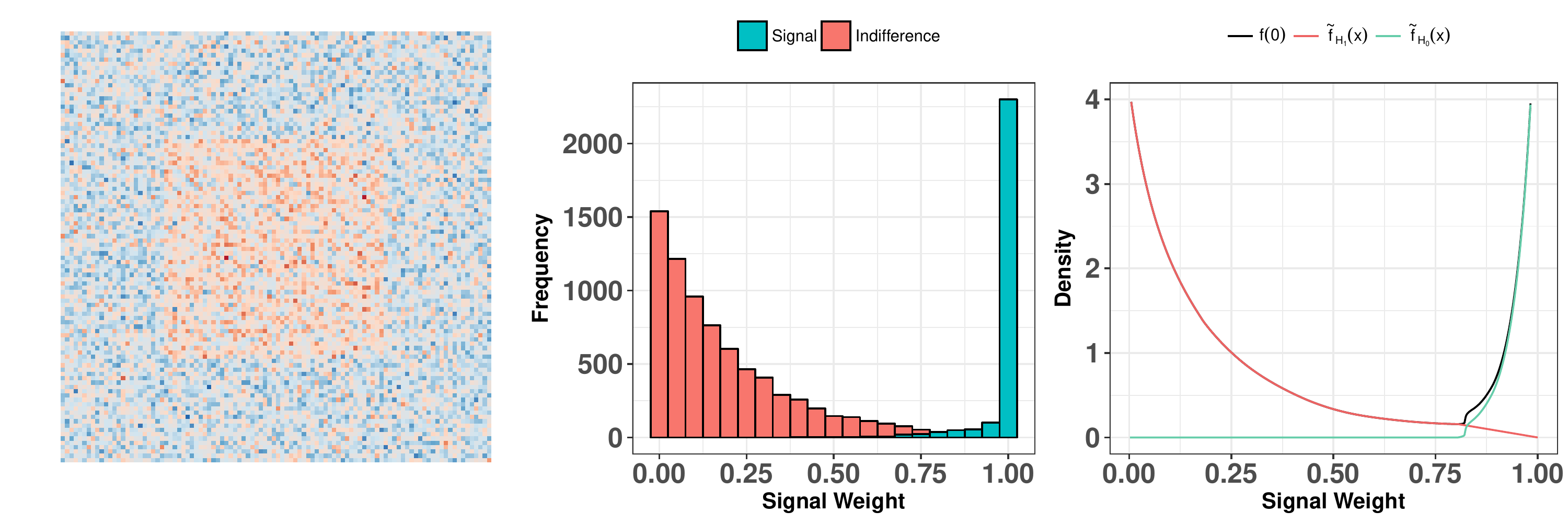}\\
(b) Under $H_1$ (signal-noise ratio is $1$)\\
\end{tabular}
\caption{The idea for the second step: (a) Under $H_0,$ there is no signal (first column.) The histogram of signal weights are shown in the second column and it's symmetric with 'peak' around $0.5.$ The estimated density is shown in the third column; (b) Under $H_1$, the histogram (the second column) is composited with two parts and has higher values around boundaries $0$ and $1,$ lower values at the middle part. The estimated densities are shownd in third column: the black curve is $f(x),$ green one is estimated null density $\tilde{f}_{H_0}(x)$ and red one is estimated alternative density $\tilde{f}_{H_1}(x)=f(x)-\tilde{f}_{H_0}(x).$}\label{Fig:H0H1_fdr}
\end{figure}

Under the alternative hypothesis $H_1,$ the distribution $f(x)$ should be composited by the null part $f_{H_0}(x)$ and alternative part $f_{H_1}(x):$ $f(x)=f_{H_0}(x)+f_{H_1}(x).$
The observations inside the signal region are more likely to be detected, i.e. the corresponding signal weight gets close to 1 and vice versa for the observations inside the indifference region. 
With finer block division, the fraction of the observations in the blocks at the boundary goes to 0. 
Therefore, $f_{H_1}(x)$ has the 'peak' near $1$ and $f_{H_0}(x)$ has the 'peak' around $0,$ which implies that $f(x)$ has two 'peaks' near the boundaries seperately and a 'valley' in the middle of $[0,1].$ Then we can use the line search to locate the 'valley', say $(t^*,f(t^*)),$ and conduct linear interpolation between the two points $(t^*,f(t^*))$ and $(1,0).$  
Obviously, the null density $f_{H_0}(x)$ is controlled by 
\begin{align}
\tilde{f}_{H_0}(x) =&\left\{
\begin{aligned}
&f(x), \text{ if } 0\le x\le t^* \\
&f(t^*)(1-\frac{x-t^*}{1-t^*}), \text{ if }  t^* < x \le 1
\end{aligned}\label{tilde_f_h0}
\right.
\end{align}
which could be used as estimated null density (see Figure \ref{Fig:H0H1_fdr} (b).)
Recall the definition of the marginal false discovery rate (mFDR) (\cite{genovese2002operating}; \cite{sun2015false}):
\begin{equation}\label{Eq:mFDR}
\text{mFDR}=\frac{\mathbb{E}[\#\text{false positive}]}{\mathbb{E}[\#\text{rejected}]}.
\end{equation}
Thus, we could control mFDR with given significant level $\alpha$ by finding a threshold $c$ so that 
\begin{equation}\label{Eq:thre}
c=\arg\min_x (\frac{\tilde{f}_{H_0}(x)}{f(x)}\le \alpha).
\end{equation}
The observations with signal weight larger than $c$ are the detected signals.
This step is summarized in Algorithm \ref{Alg:SRDM}.
\begin{algorithm}[t!]
\caption{The signal region detection method}\label{Alg:SRDM}
\begin{algorithmic} [1]
\REQUIRE Signal weights $\{w(s)\}$ or $\{\tilde{w}^{i}(s)\}$, signficant level $\alpha$;
\ENSURE corresponding detected result;
\STATE {}Estimate the density curve $f(x)$ based on signal weights, $x \in [0,1];$
\STATE Estimate null density $f_{H_0}(x)$ and alternative density $f_{H_1}(x)$ with (\ref{tilde_f_h0});
\STATE Compute mFDR and find the threshold $c$ with (\ref{Eq:thre});
\STATE Obtain the detected result with the threshold $c$;
\end{algorithmic}
\end{algorithm}

\subsection{Neighbor size selection}\label{Neightsize}

In this part, we consider the selection of neighbor size $k,$ and this mainly affects the accuracy in Section \ref{step1}. Intuitively, the larger $k$ means the larger block and tends to over-smooth; while the smaller $k$ implying the small block might lose spatial information.

In order to make the "right" cut-off, we need to ensure two points: 
1) the variance of the pseudo block mean $\{\tilde{\mu}_i\}$ should be as small as possible, so that we could reasonably rearrange the 'representatives' $\{\gamma_i\};$ 
2) the length of the rearranged 'representative' sequence $\{\gamma^*_i\}$ should be as long as possible, which could ensure the cut-off location $t$ fall into $[l_1,l_2]$ with probability $1$. 
For the first point, we need to make the number of observations in each block $n_i$ go to infinity; for the second point, the length of $\{\gamma^*_i\}$ is the number of blocks $b.$ 
And the relationship between $b,$ $k$ and $n_i$ could be approximated as: 
\begin{align}
\left\{
\begin{aligned}
&\min n_i \approx k^2\\
&b \approx n/k^2
\end{aligned}\label{Eq:bkn}
\right.
\end{align}
Hence, we could get the following trade-off problem:
\begin{equation}
k_{opt} = \arg\min_k k^2+C_1\frac{n}{k^2}=\sqrt[4]{C_1n},
\end{equation}
where $C_1$ is a given weight to reflect which part we want to emphasize and $n$ is the total number of observations.

Of course, the above analysis is based on the theoretical result. In practice, the neighbor size selection depends on the specific problem and application. Related discussions on neighbor size selection could be found in existing works (\cite{wang2006neighborhood}; \cite{hall1995blocking}; \cite{sun2015false}.)

\section{Simulation Study} \label{sec:SimuStudy}
\begin{wrapfigure}{R}{0.35\columnwidth}
\centering{}
\includegraphics[width=1\linewidth]{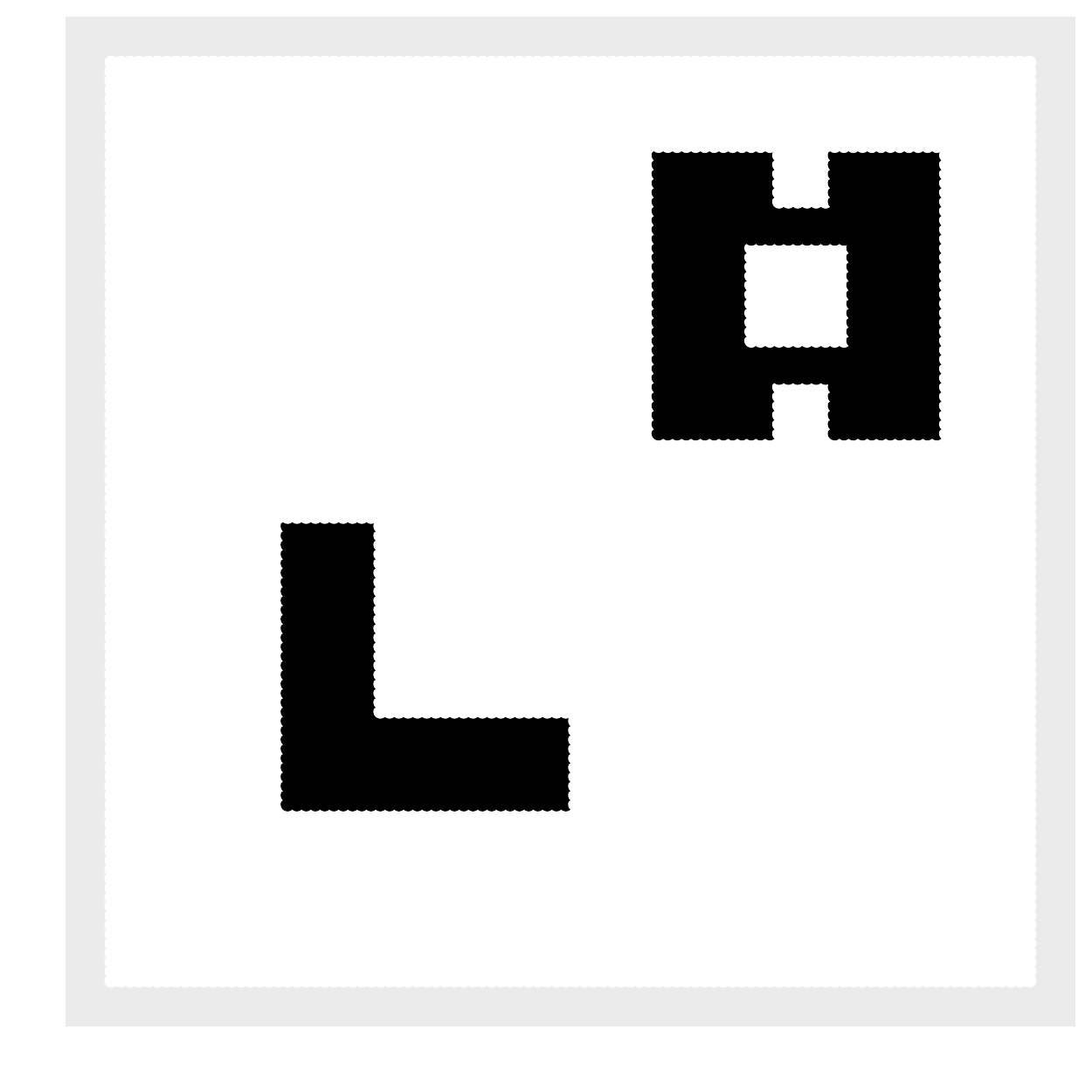}
\vspace{-0.2in}
\caption{Ground truth of our simulation setting}
\label{fig:groundtruth}
\vspace{-.3in}
\end{wrapfigure}

In this section, we will use simulation to show the effectiveness of our proposed method.
We compare SCUSUM with FDR$_L,$ because the two methods are designed to detect irregular signals with false discovery rate controlling.
All the examples are simulated in the image with $100\times 100$ pixels. 
Although in model \ref{basic_model} we didn't specify the distribution for noise process, we consider independent standard normal distribution $N(0,1)$ for noise term and generate the data according to the model:
\begin{equation}\label{simu_model}
x(i,j)=\mu(i,j)+\epsilon(i,j),~i,j=1,...,100,
\end{equation}
where $\mu(i,j)=0$ for $(i,j)\in\mathscr{D}_{\mathscr{A}}^c,$ and $\mu(i,j)\neq 0$ for $(i,j)\in\mathscr{D}_{\mathscr{A}},$ the 'L' shape and 'H' shape shown in Figure \ref{fig:groundtruth}: the black region is signal region $\mathscr{D}_{\mathscr{A}}$ and white region is the indifference part..
The total number of signal pixels is $1288.$ 
Here we mainly concern about the accuracy of classification, both false positive and false negative.
We set the repeated time $m$ in Algorithm \ref{Alg:MWDM} as $50.$ 
Additional, for FDR$_L$, we do a standard normal test on each pixels and then apply the algorithm on the corresponding $p$-values.

In Table \ref{Table:Compare}, the simulation results for SCUSUM and FDR$_L$ are shown. Under each setting, we repeat simulation $100$ times. 
For the two algorithms, we preset the significant level $\alpha$ as $0.05.$ In the siganl region, $\mu$ ranges from $0.8$ to $2.$ 
Also to show the effect of neighbor size selection, we choose $k$ from $\{3,5,10\}.$
We can see that when our method could control FDR under the presetted significant level $\alpha=0.05$ with small neighbor size $k=3,5;$ while FDR$_L$ would allow FDR a little bit higher than $0.05.$
When the neighbor size $k$ is small ($k=3$), SCUSUM outperforms both in false negative and false positive: though the false negative for SCUSUM is $0.6338$ when $k=3$ and $\mu=0.8,$ the corresponding false positive is $0.001$, a very small proportion and the false negative for FDR$_L$ under the same setting is $0.9831,$ almost $1.$
In addition, when we allow the neighbor size $k$ to be $5,$ imply a little larger block, the false negative for SCUSUM would decrease to be less than $0.50$ while the false positive is controlled less than $0.003.$
Compared with FDR$_L$, with neighbor size as $5,$ the false negative is over $0.50$ with low signal strength $0.8$ and $1.0,$ while the false positive is almost higher than $0.003.$
These experiment data implies that our proposed method, SCUSUM, performs bettern than FDR$_L$.

\begin{table}[t!]
\caption{Detected accuracy comparision between SCUSUM and FDR$_L$: the signal-noise ratio is changed from $0.8$ to $2$ and neighbor size is chose from $\{3,5,10\}$. In the two algorithms, significant level $\alpha$ is $0.05.$}\label{Table:Compare}
\small
\begin{center}
\begin{tabular}{c|c|c|c|c|c|c|c}
\hline
 \hline
 neighbor &\multicolumn{1}{c|}{Signal} & \multicolumn{3}{c|}{SCUSUM} & \multicolumn{3}{c}{FDR$_L$}\\
\hhline{~-------}
 size&$\mu$  & false negative & false positive & FDR & false negative & false positive & FDR\\
 \hline
  \hline
\multirow{4}{*}{ k=3 }& 0.8 &0.6338 & 0.0010 & 0.0186 & 0.9831 & 0.0003 & 0.0917 \\
\hhline{~-------}&1 & 0.4286 & 0.0010 & 0.0115 & 0.9380 & 0.0006 & 0.0484 \\
\hhline{~-------}&1.5&0.1953 & 0.0005 & 0.0044 & 0.5158 & 0.0039 & 0.0508 \\
\hhline{~-------}& 2&  0.1271 & 0.0006 & 0.0042 & 0.1478 & 0.0072 & 0.0535 \\
\hline
\hline
\multirow{4}{*}{ k=5 }&0.8&0.3750 & 0.0009 & 0.0094 & 0.8034 & 0.0020 & 0.0586 \\
\hhline{~-------}&1 & 0.2750 & 0.0009 & 0.0082 & 0.5351 & 0.0047 & 0.0612 \\
\hhline{~-------}&1.5&0.1599 & 0.0019 & 0.0147 & 0.0971 & 0.0112 & 0.0770 \\
\hhline{~-------}&2&  0.1014 & 0.0027 & 0.0194 & 0.0159 & 0.0150 & 0.0928 \\
\hline
\hline
\multirow{4}{*}{ k=10 }&0.8& 0.3240 & 0.0066 & 0.0588 & 0.1433 & 0.0286 & 0.1793\\
\hhline{~-------}&1 & 0.2526 & 0.0084 & 0.0692 & 0.0569 & 0.0394 & 0.2163 \\
\hhline{~-------}&1.5&0.1642 & 0.0133 & 0.0968 & 0.0076 & 0.0651 & 0.3051 \\
\hhline{~-------}&2&  0.1288 & 0.0149 & 0.1034 & 0.0027 & 0.0751 & 0.3358 \\
\hline
\hline
\end{tabular}
\end{center}
\end{table}

In Figure \ref{fig:compare_SpatialCUSUM_FDRl}, we show the probability maps of the pixels being detected by the two methods with give $\alpha=0.05$. 
Here we set the neighbor size $k=5,$ which we think has performance from Table \ref{Table:Compare} (relatively lower false positive and false negative). 
We range the signal strength from $0.5$ to $2.$ 
In the probability maps, the darker the color is, the higher probability the corresponding point is signal. 
Comparing the probability maps with the ground truth (see Figure \ref{fig:groundtruth}), it could be easily see that with hige signal strength $\mu\ge 1.5,$ the two methods have almost the same performance; while the signal is much too low (e.g.$\mu=0.5$,) SCUSUM has a better detected result than FDR$_L$.
Also we can see that in the results of FDR$_L$ there are some shadows outside of 'L' and 'H' signal region, which are false positive; while for SCUSUM, the detections for indifference region are more 'white' (no shadows.)
To some degree, these probability maps are consistent with the experiment data in Tabel \ref{Table:Compare}.

\begin{figure}[t!]
\centering
\includegraphics[width=15cm]{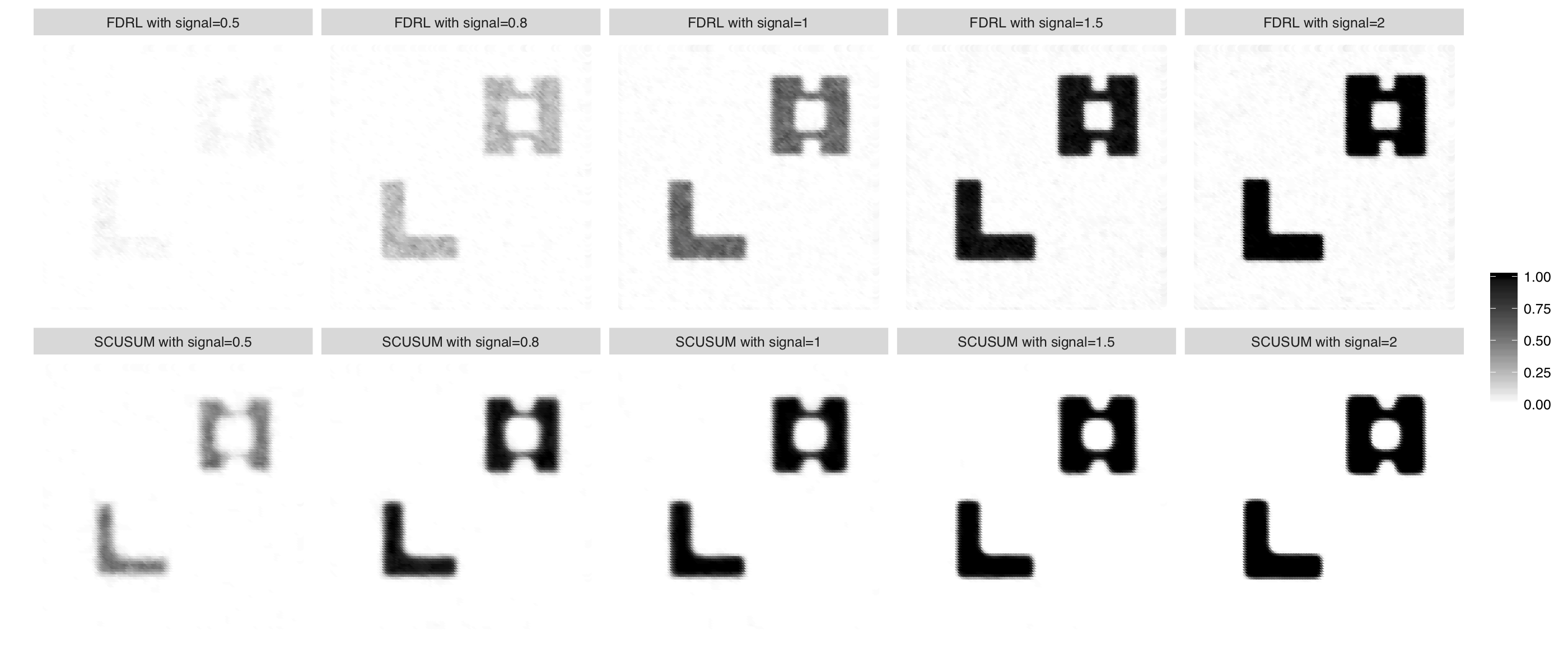}
 \caption{Comparision of the detection probability for SCUSUM and FDR$_L$ under different signal strength: the darker the color is, the higher probability the corresponding point is signal.}\label{fig:compare_SpatialCUSUM_FDRl}
\end{figure}

Though in model \ref{basic_model} we assume the noise to be independence, now we try to apply SCUSUM to dependence spatial observations, and compare with FDR$_L$.
Here we consider to use the Exponential Covariance Model (\cite{gelfand2010handbook}) to generate dependence data. 
The covariance matrix is 
\begin{equation}\label{dep_cov}
C(s_i,s_j)=\exp(-\frac{\|s_i-s_j\|}{r}), 
\end{equation}
where $s_i$ and $s_j$ are two locations, $\|s_i-s_j\|$ is the distance between the two location, and $r$ is the dependence scale. 
The larger $r$ means the stronger dependence.
Then the data are generate from multivariate normal distribution with above covariance matrix and correponding mean, $\mu_0$ for $\mathscr{D}_{\mathscr{A}^c}$ and $\mu_1$ for $\mathscr{D}_{\mathscr{A}}.$
We range scale $r$ from $\{0.1,0.3,0.5\},$ and the corresponding covariances for unit distance are $\{0.00004,0.03567,0.13533\}.$
We set the neighbor size $k=5$ for both two alogrithms and $\alpha=0.05$ to control marginal FDR.
The results are shown in Figure \ref{fig:compare_SpatialCUSUM_FDRl_dep} and Tabel \ref{Table:dependence}.
It could be seen that with the weak dependence, SCUSUM could still detect the signal region efficiently while control the false posive.
However for FDR$_L,$ the false negative increases significantly with stronger dependence (see Table \ref{Table:dependence}).
Also Table \ref{Table:dependence} shows that larger dependence scale leads to larger false positive, false negative and FDR for SCUSUM.
This gives us a hint that for larger scale dependence noise we need to choose relatively larger blocks.
Nevertheless, we can see that SCUSUM could recognize the signal region with higher probability than FDR$_L$, when the noise dependence is weak.

\begin{figure}[t!]
\centering
\includegraphics[width=12cm]{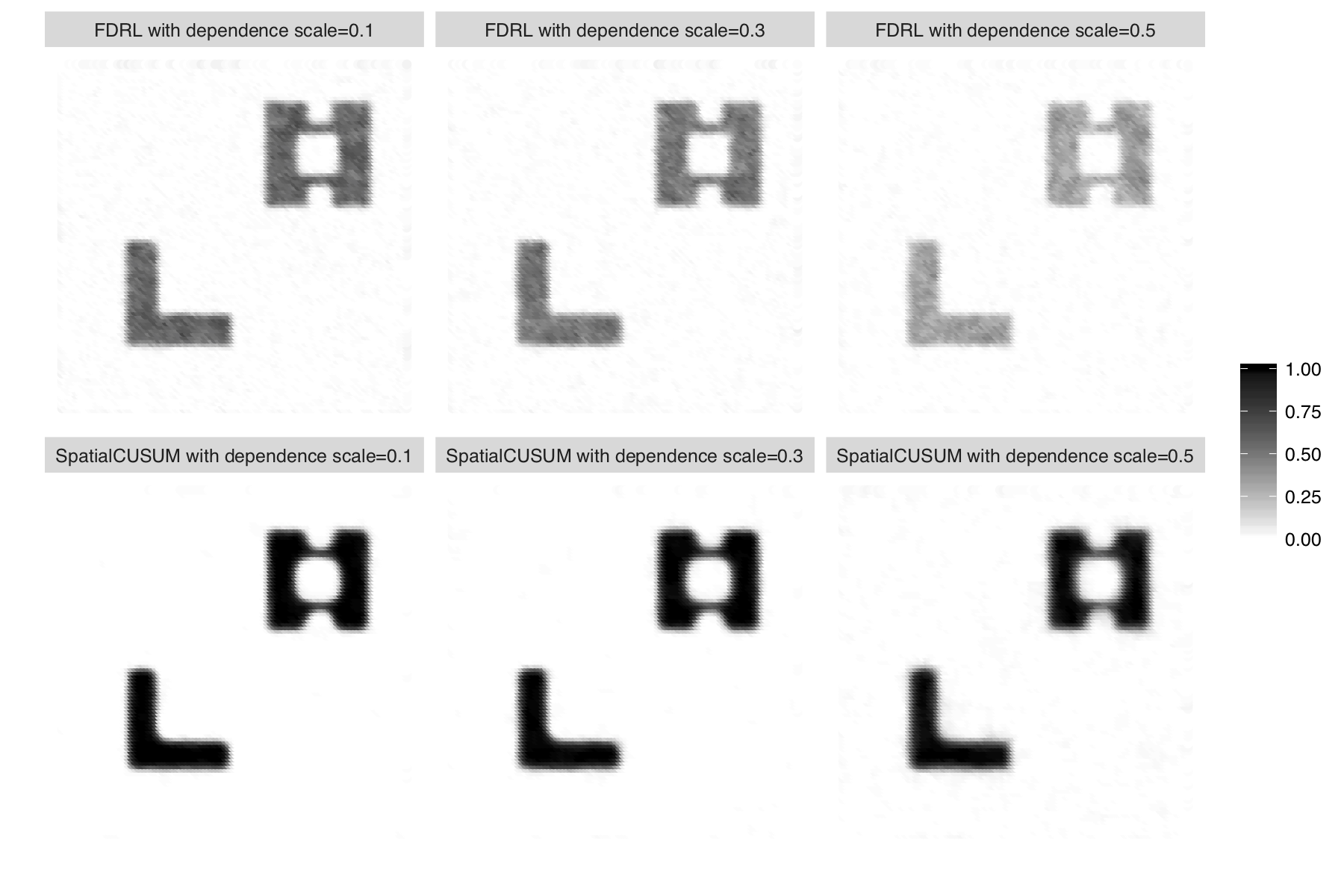}
 \caption{Comparision of the detection probability for SCUSUM and FDR$_L$ on data with different dependence scales: the darker the color is, the higher probability the corresponding point is signal. Here we set the neighbor size $k=5$ and signal $\mu=1.$}\label{fig:compare_SpatialCUSUM_FDRl_dep}
\end{figure}

\begin{table}[t!]
\caption{Summary for detection probabilities on dependence data. The dependence scale is changed from $0.1$ to $0.5,$ neighbor size is chose as $5$ and signal strength is $1.$}\label{Table:dependence}
\begin{center}
\begin{tabular}{c|c|c|c|c}
\hline
\hline
 \multicolumn{2}{c|}{Dependence Scale}  & $r=0.1$ & $r=0.3$ & $r=0.5$ \\
\hline
\hline
\multirow{2}{*}{false negative}    &SCUSUM &0.2721  & 0.2820 &  0.3108  \\
\hhline{~----}
       & FDR$_L$ & 0.5213 & 0.5879 & 0.7317 \\
\hline
\hline
\multirow{2}{*}{false positive}   &SCUSUM & 0.00095 &  0.00154 &  0.00418  \\
\hhline{~----}
& FDR$_L$ &  0.00499 & 0.00411 & 0.00280  \\
\hline
\hline
\multirow{2}{*}{ FDR} &SCUSUM &  0.0086 &  0.0141 &  0.0386  \\
\hhline{~----}
                               & FDR$_L$ & 0.0622 & 0.0604 & 0.0590 \\
\hline
\hline
\end{tabular}
\end{center}
\end{table}

\section{Real Data Experiment} \label{sec:Experiment}

In this section, we apply four methods, SCUSUM, scan statistics, the conventional FDR and FDR$_L$, to a real fMRI data to illustrate their differences in real data application. The fMRI data has been analyzed in some previous works (\cite{maitra2009assessing}; \cite{zhang2012spatial} etc.)

Figure \ref{fMRI} (a) shows six slices of the fMRI images in a total of 22 slices. Each individual image has $128\times128$ pixels. 
All these images show activities in different regions by heat maps. 
The pixels' values are the transformations of $p$-values from a previous study, which should follow a standard normal distribution, and we only care about detection of the regions with positive values.

As for the conventional FDR, the lack of identification phenomenon happened, e.g. all the signals of the fifth slice in Figure \ref{fMRI} (b) are missed .
Also without considering spatial correlation, for example in the first slice of Figure \ref{fMRI} (b), some detected 'signals' are scattered around, which means that some of them might be false positive. 
Although FDR$_L$ could make full use of the neighbor information of spatially structured data and improve the detection efficiency, many weak signals are missed, e.g. in fifth slice and sixth slice of Figure \ref{fMRI} (c), many active regions are not detected.
For scan statistics, even though almost all the 'hot' pixels are detected, the 'signal' regions are too large, which is doubtable. 

Similar to the conclusion in Section \ref{sec:SimuStudy}, SCUSUM is more likely to detect weak signals compared with FDR methods. 
In all the six slices, the detected regions are larger and cover the regions detected by FDR methods.
Meanwhile, within each slice, SCUSUM could identifiy several irregularly shaped clusters.
We can see that the detected regions form natrual clusters, and they are the spatially grouped 'hot' parts in the raw images.
These results show that our proposed method might be more suitable for detection of irregular shaped and weak spatial signals.

\begin{figure}[t!]
\centering
\begin{tabular}{@{}ccccc@{}}
(a) & (b) & (c) & (d) & (e)\\
\includegraphics[width=3cm]{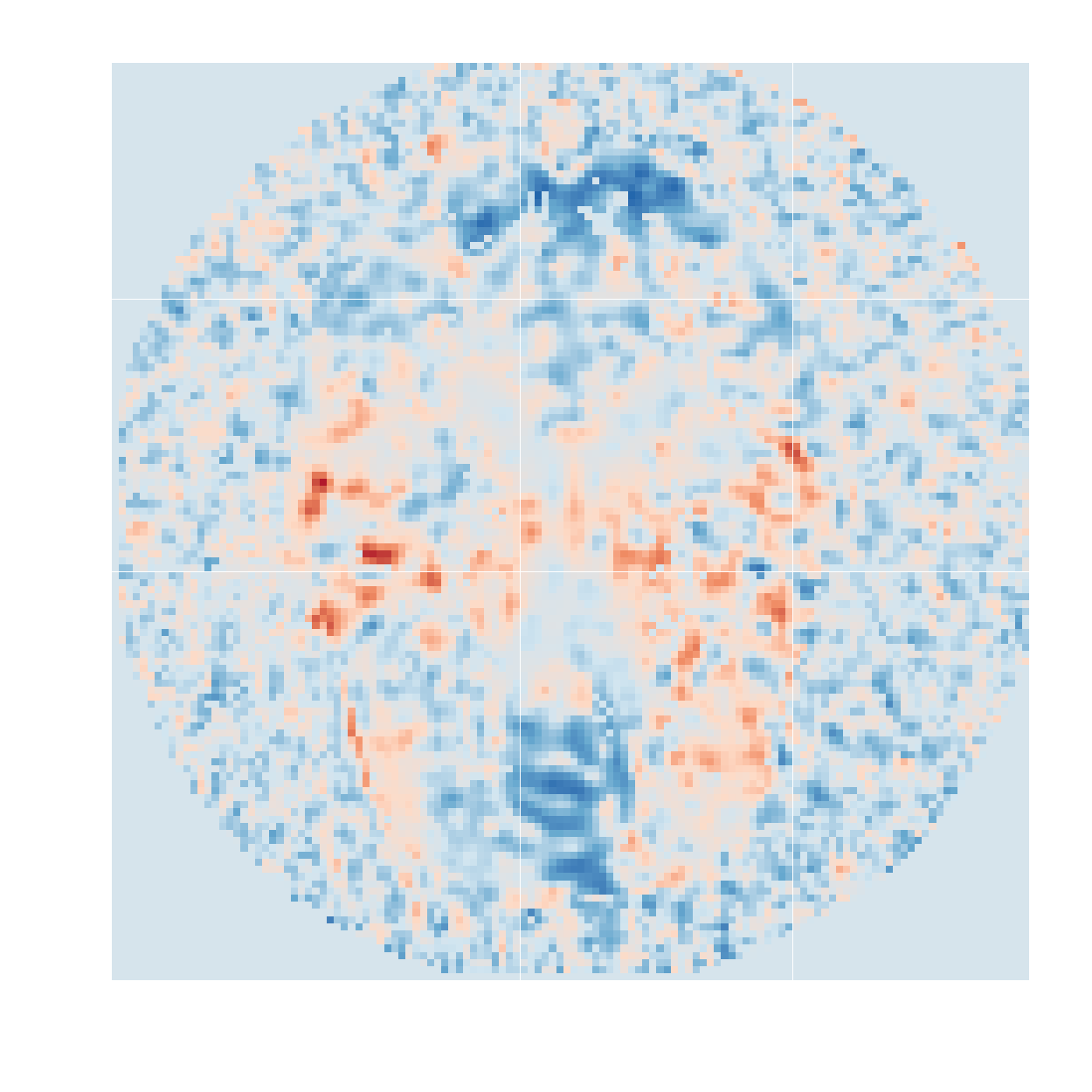}&
\includegraphics[width=3cm]{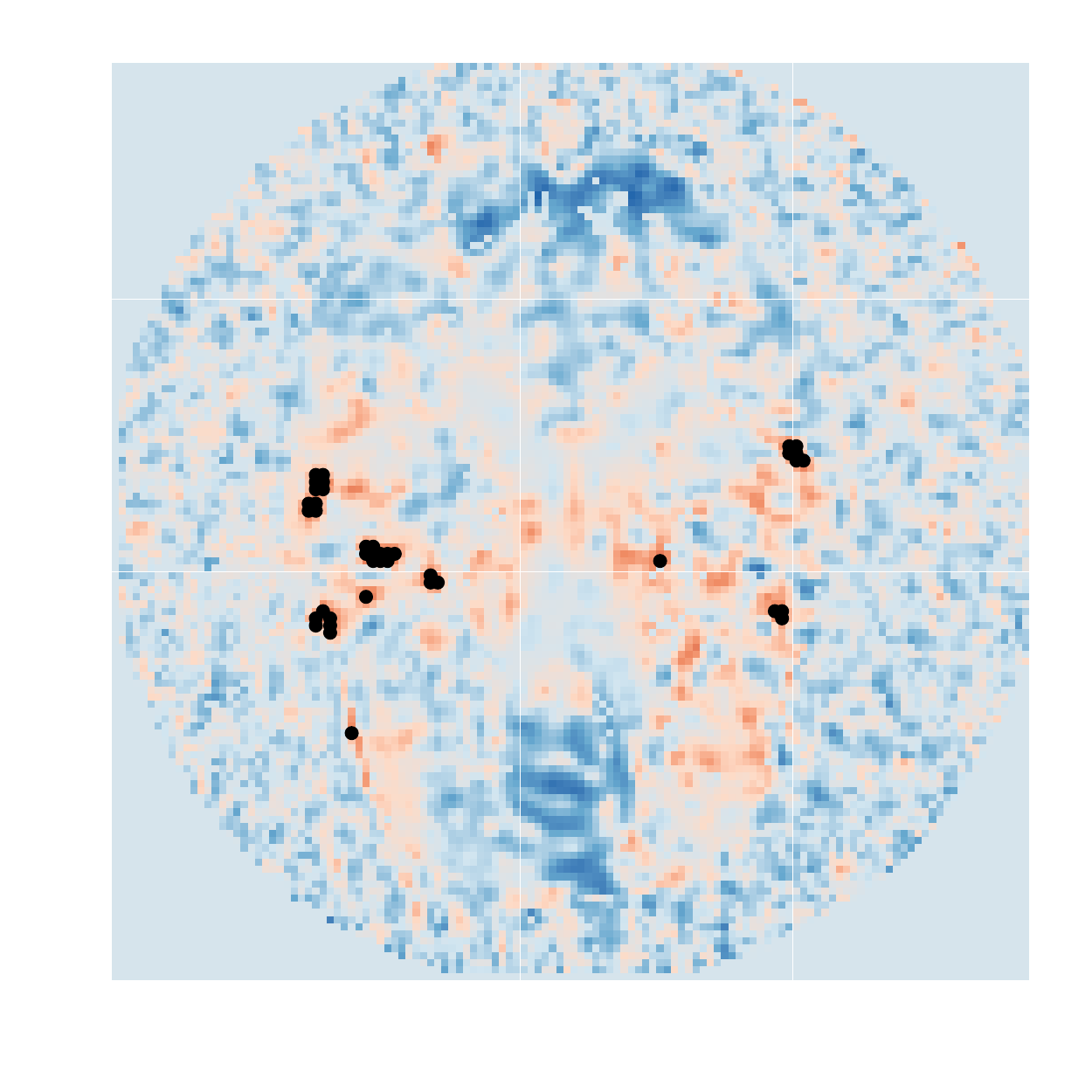}&
\includegraphics[width=3cm]{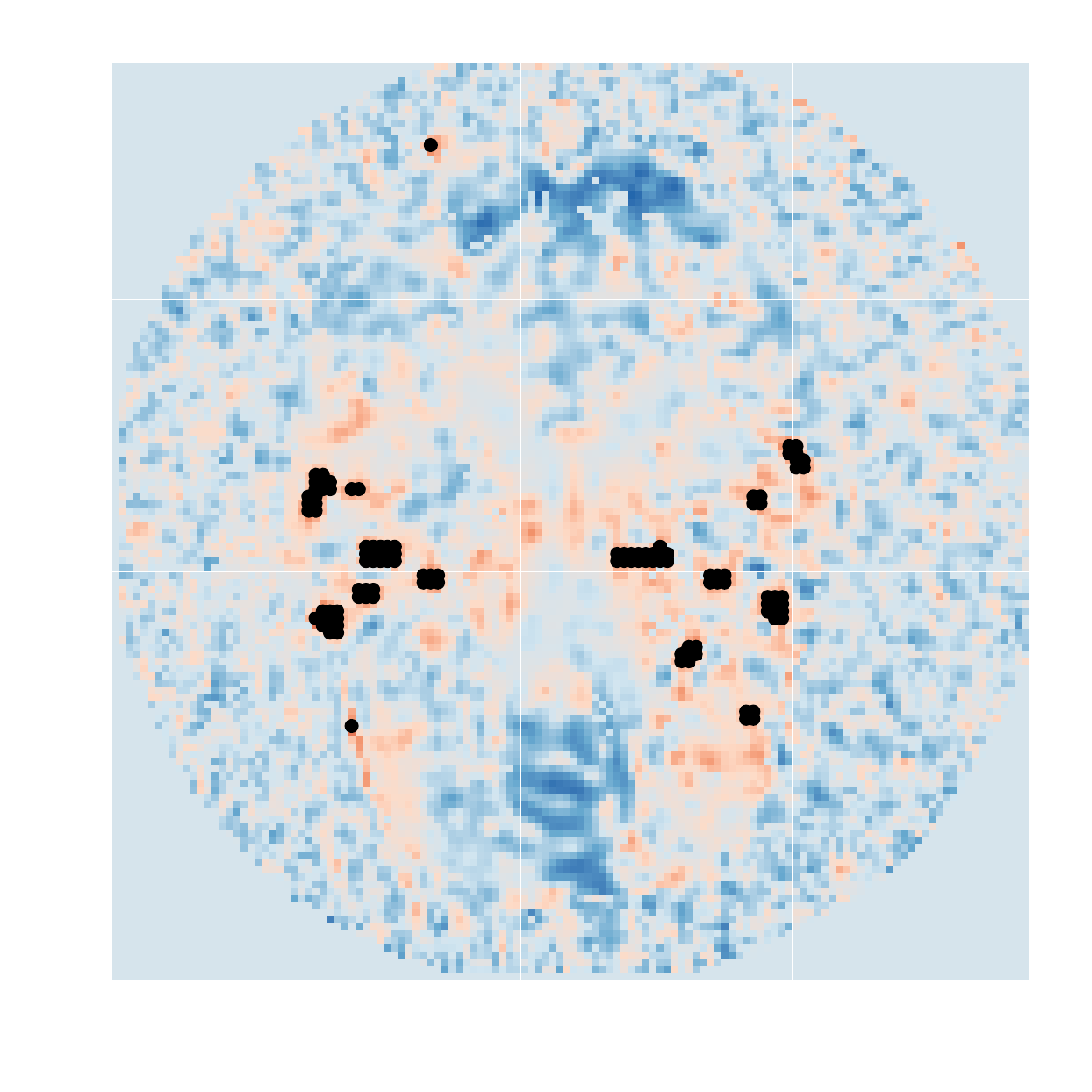}&
\includegraphics[width=3cm]{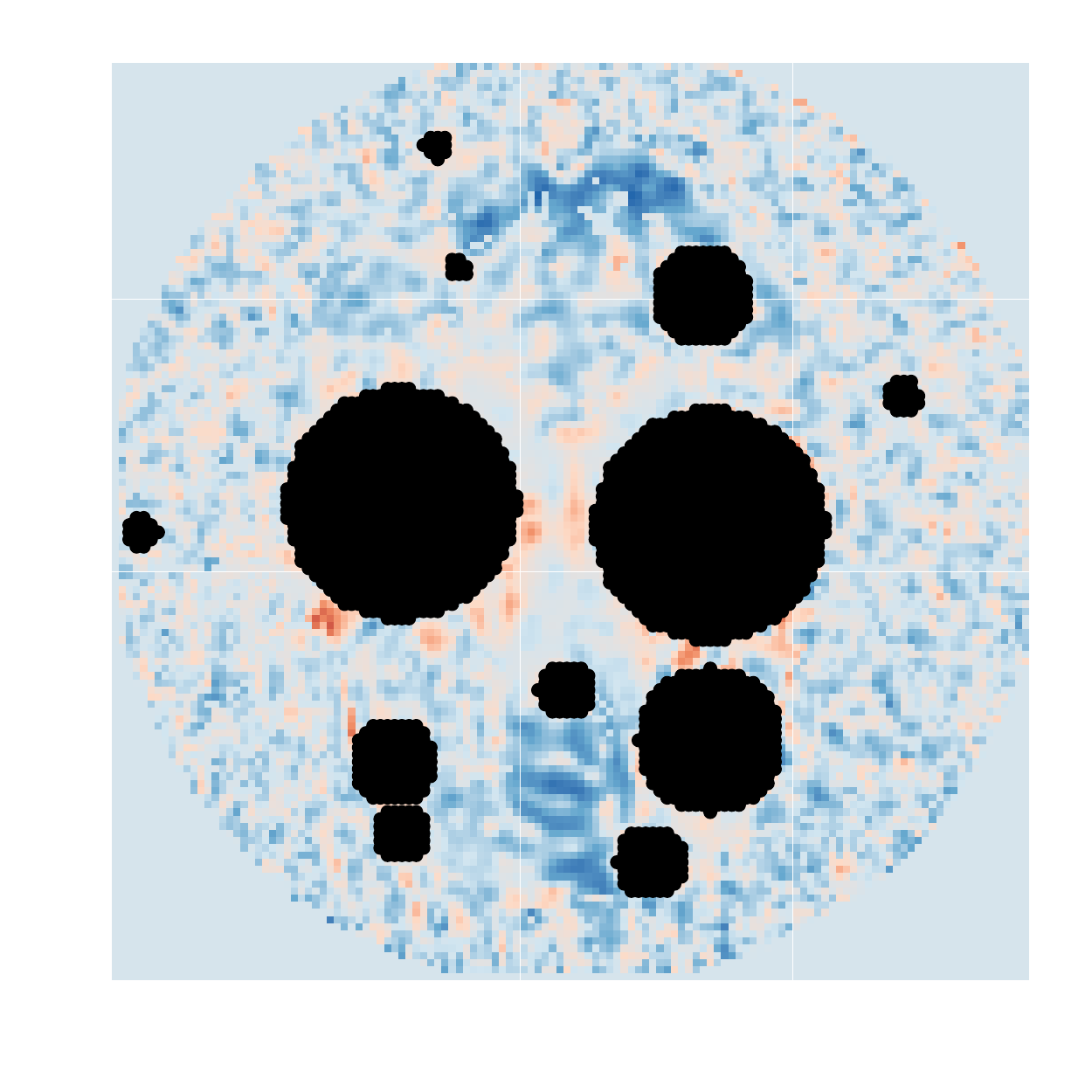}&
\includegraphics[width=3.7cm]{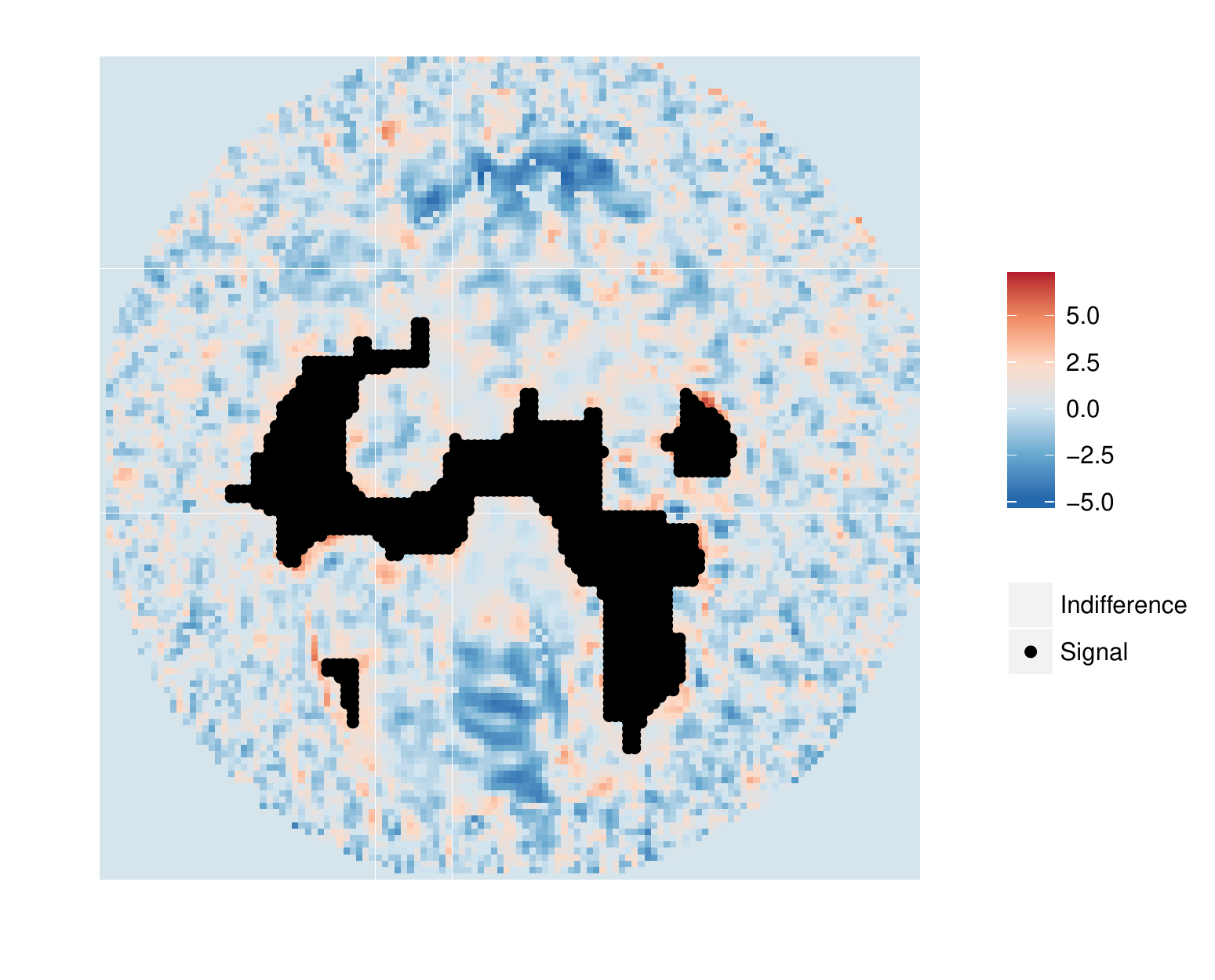}\\
\includegraphics[width=3cm]{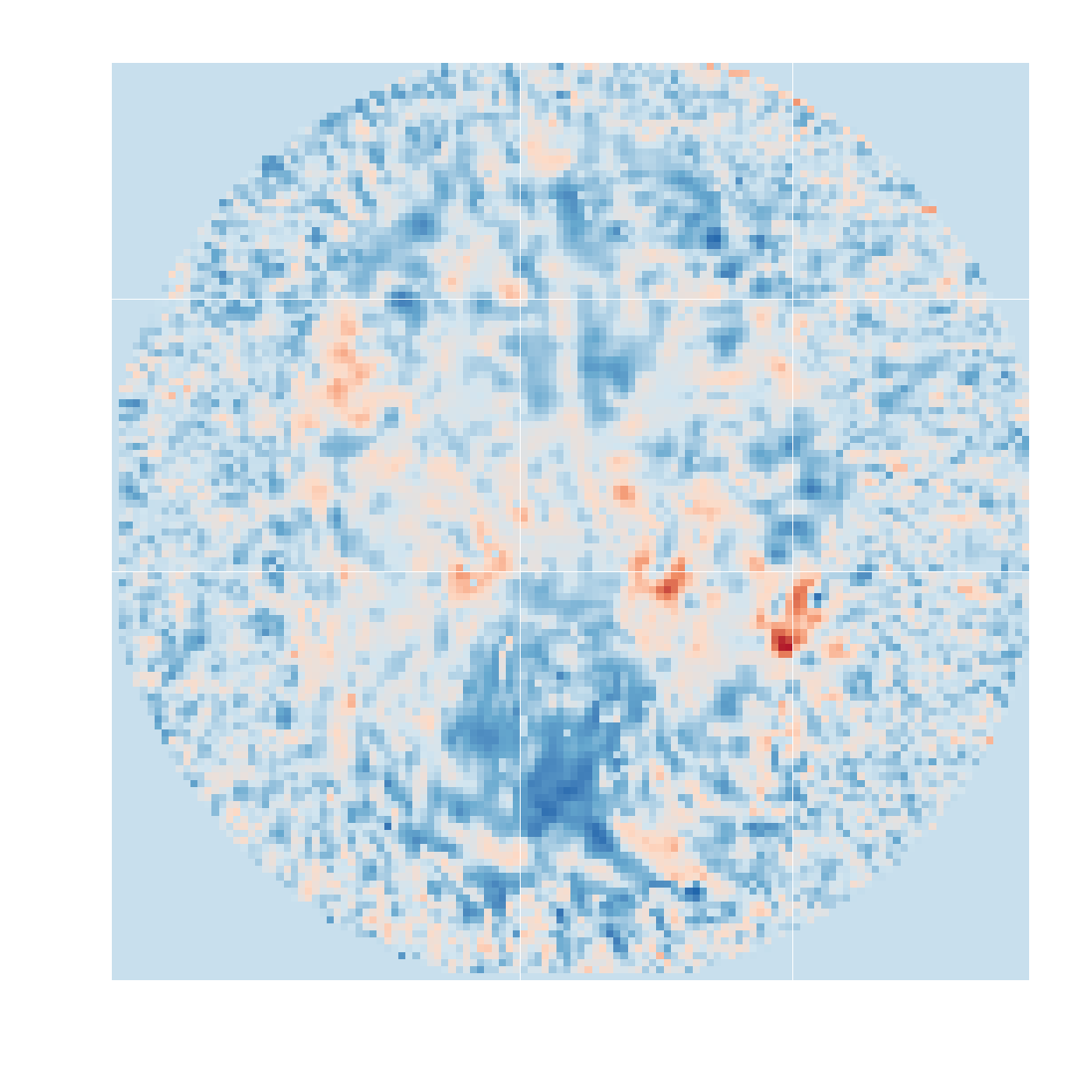}&
\includegraphics[width=3cm]{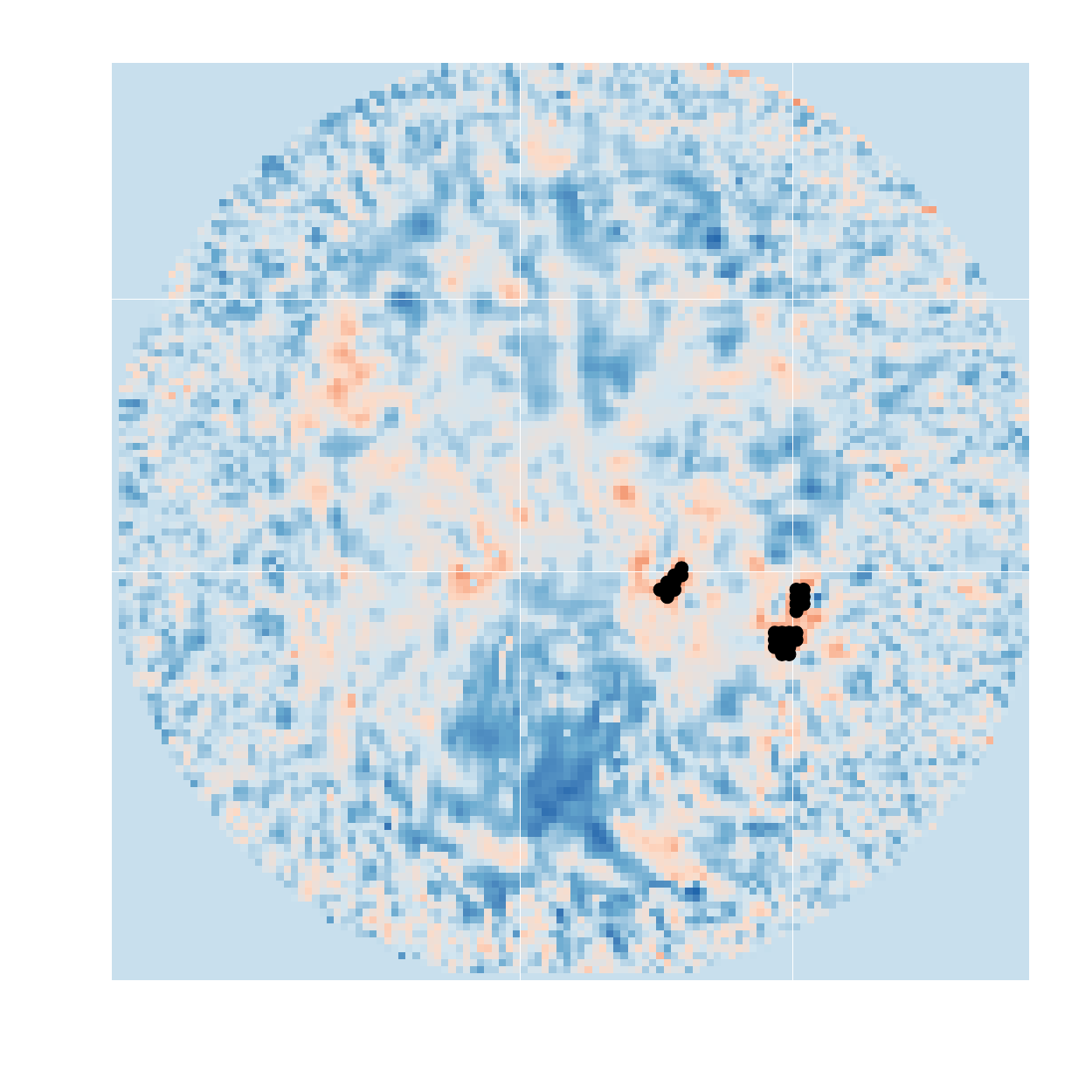}&
\includegraphics[width=3cm]{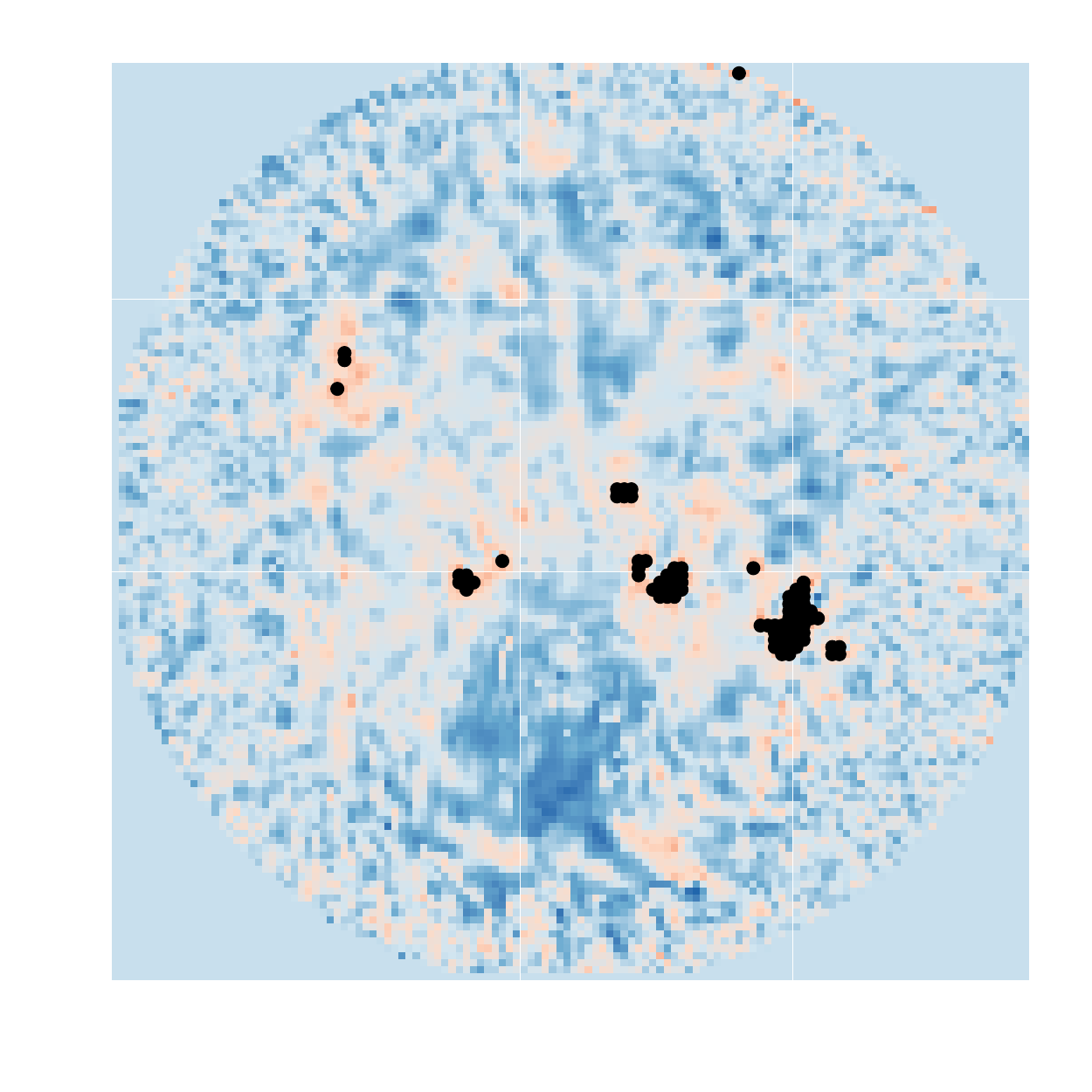}&
\includegraphics[width=3cm]{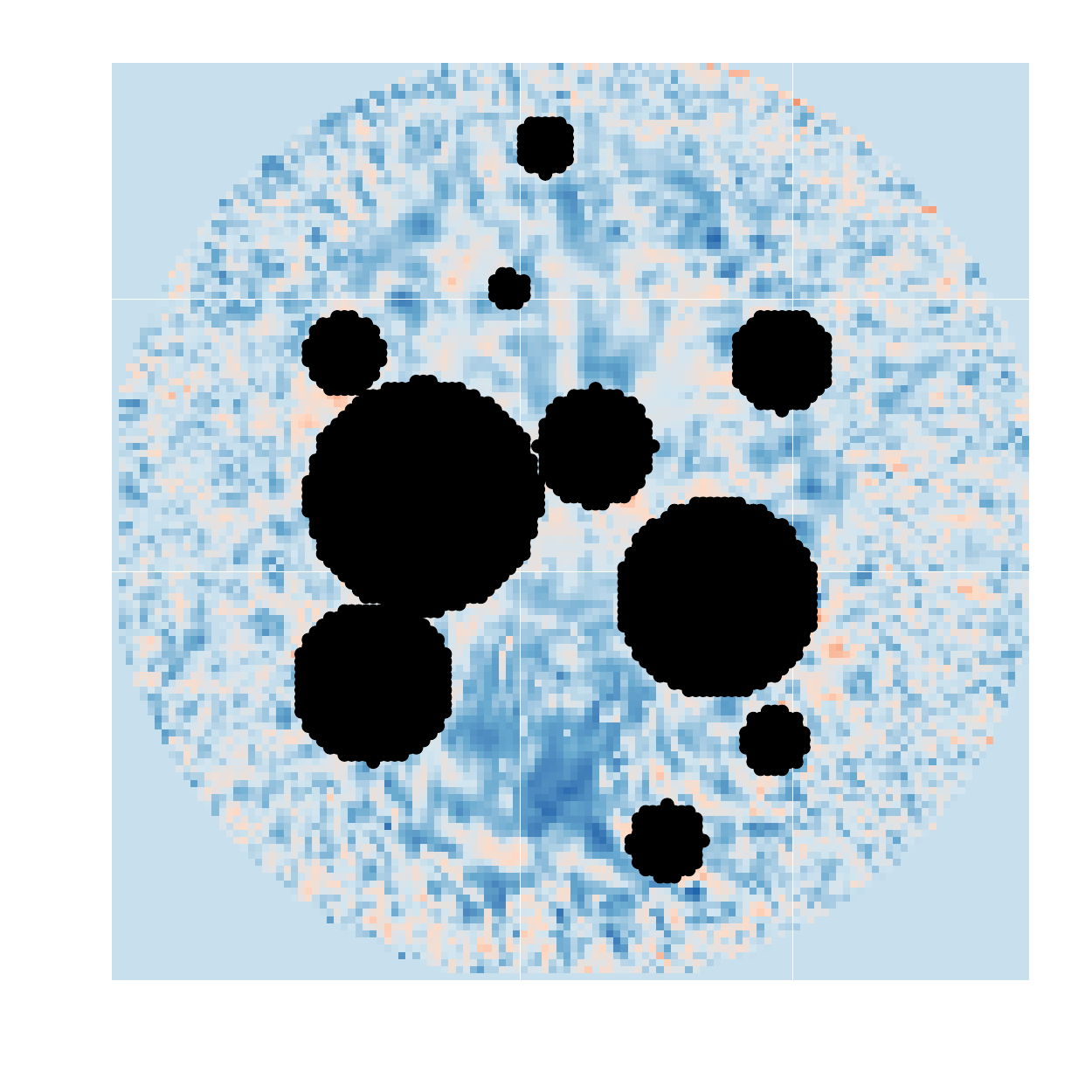}&
\includegraphics[width=3.7cm]{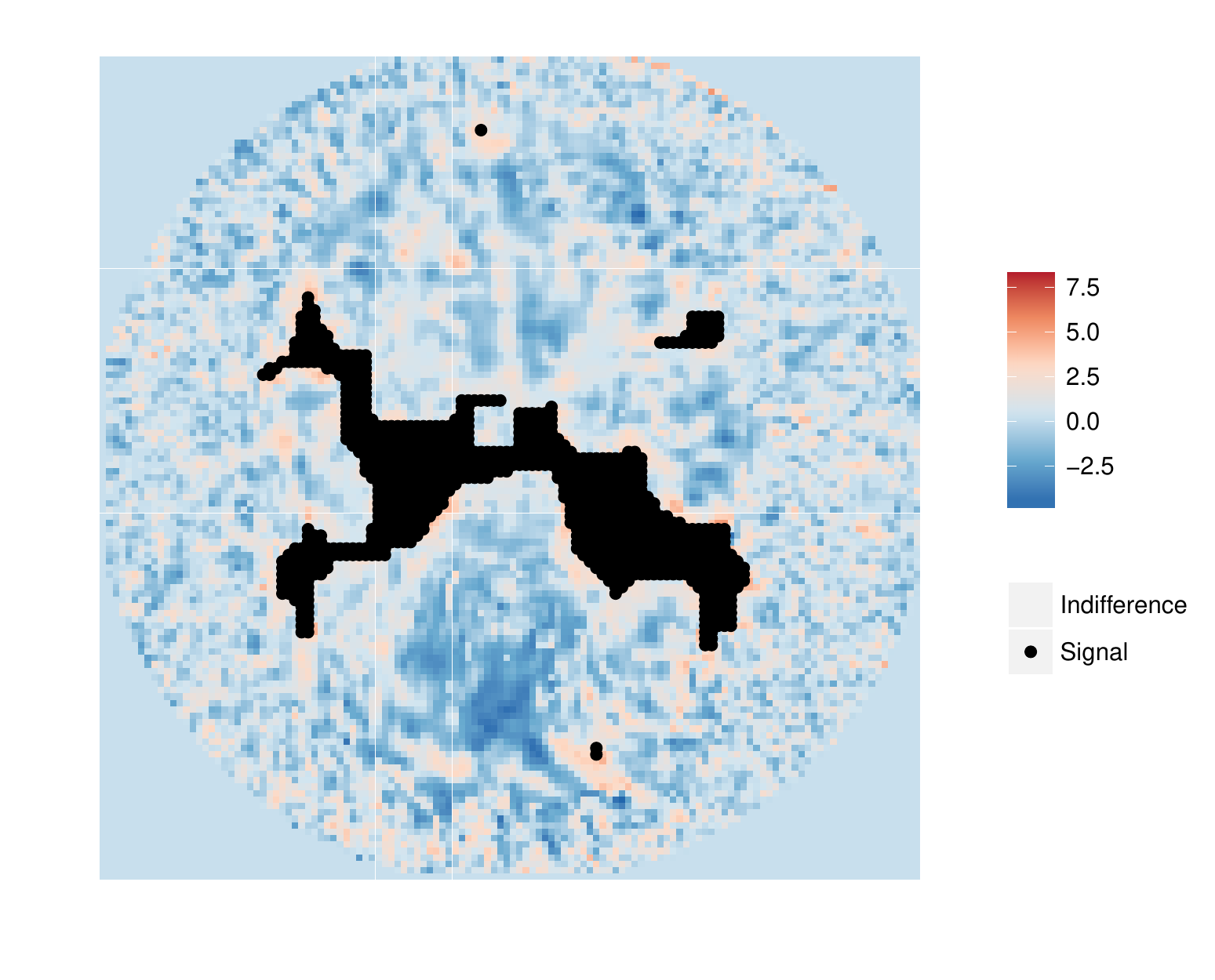}\\
\includegraphics[width=3cm]{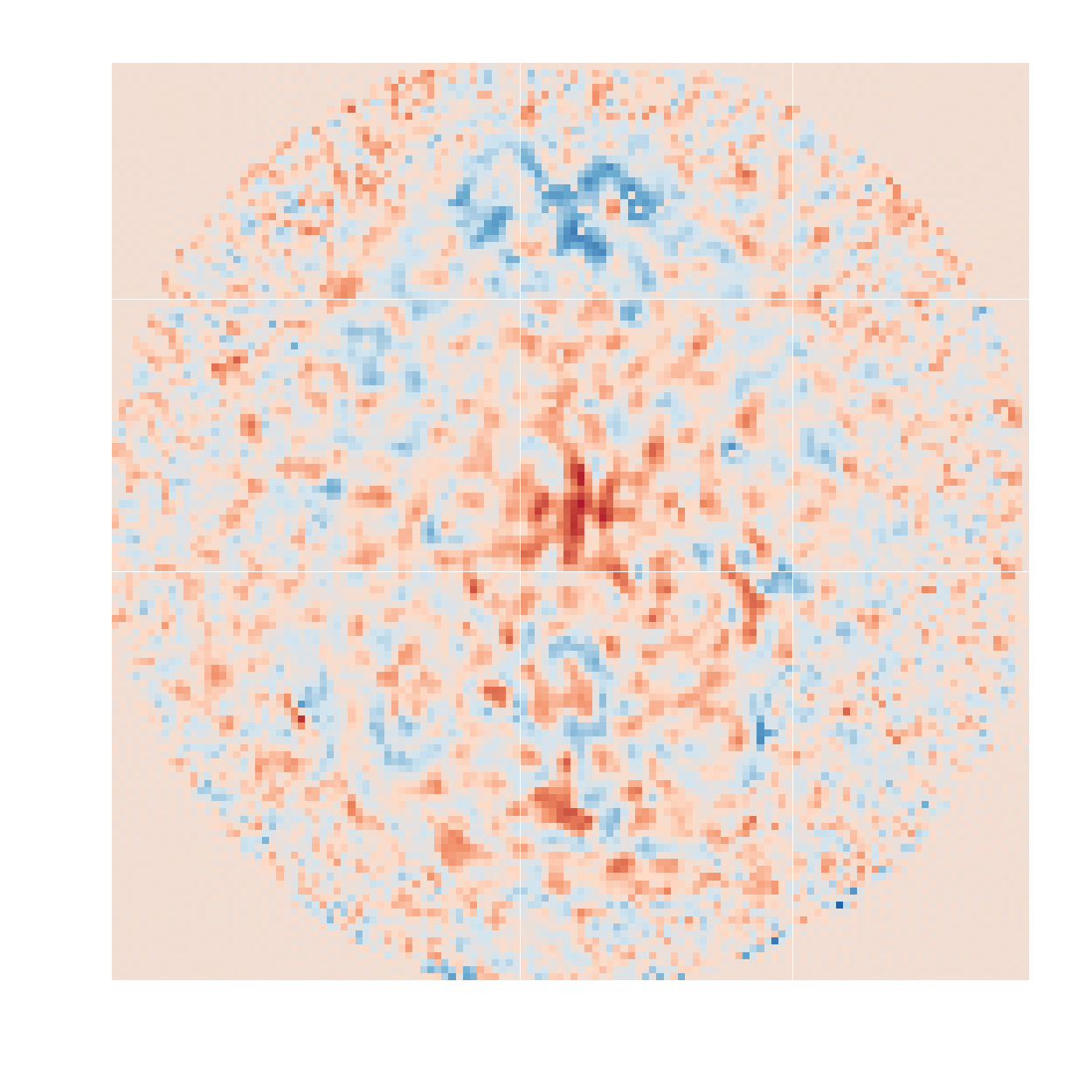}&
\includegraphics[width=3cm]{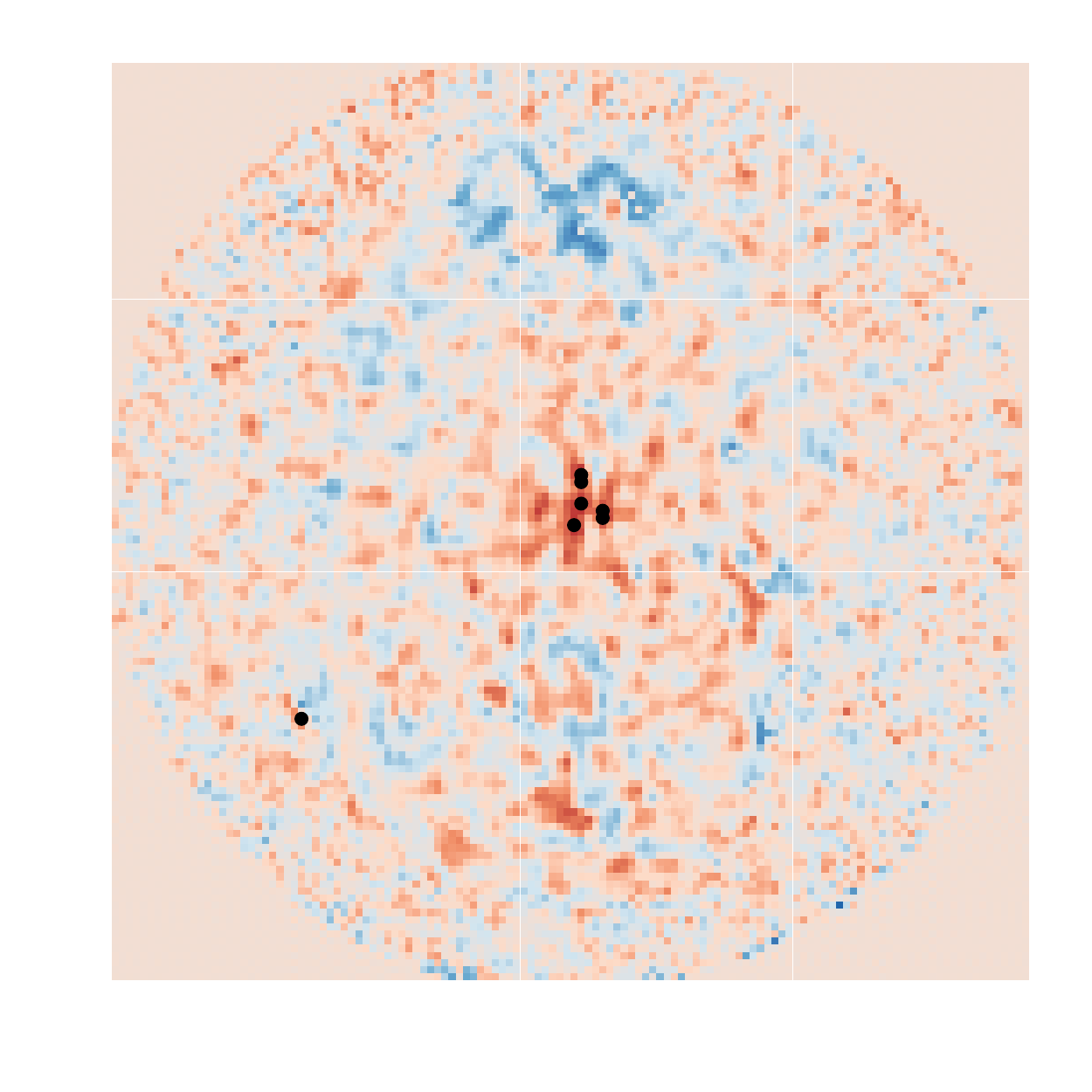}&
\includegraphics[width=3cm]{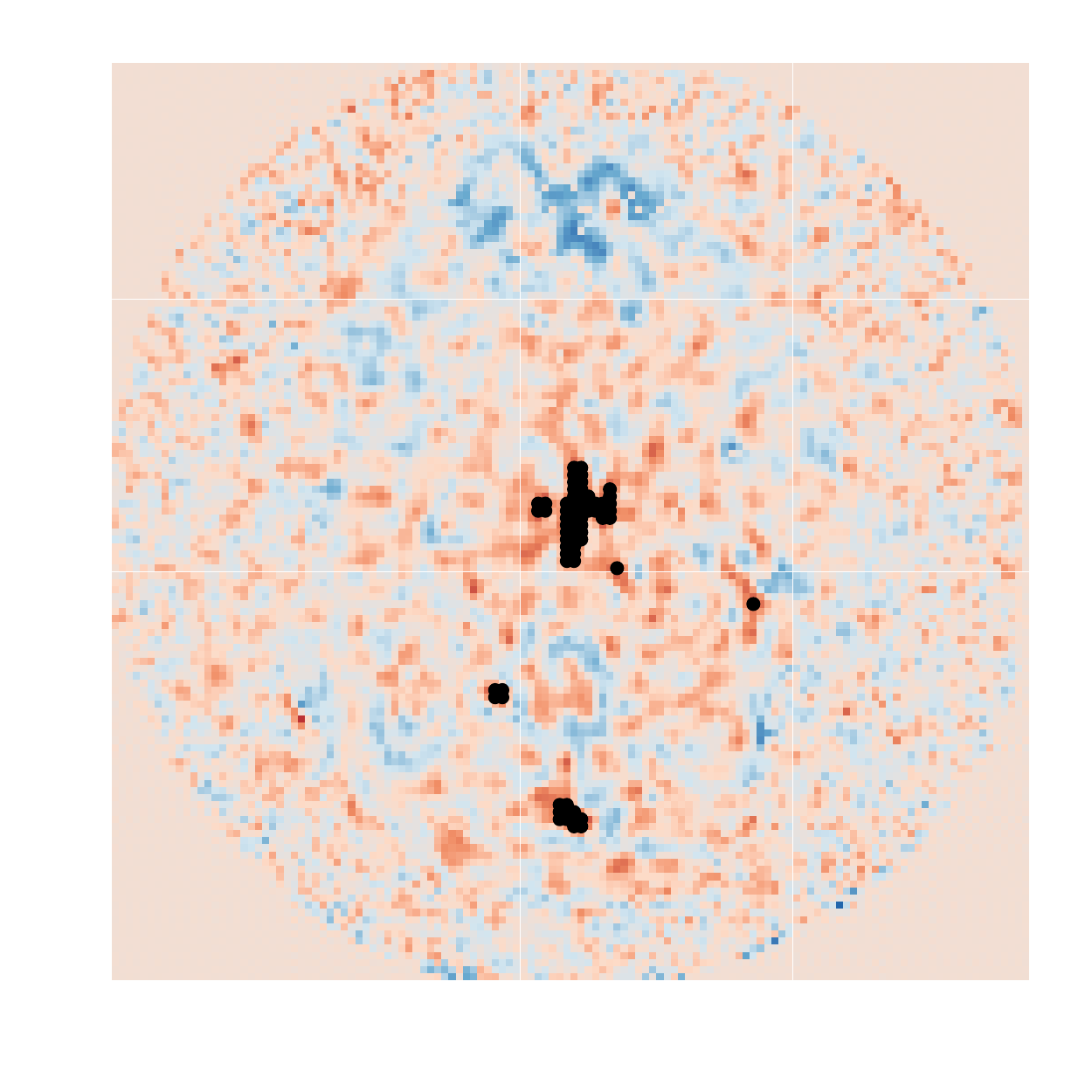}&
\includegraphics[width=3cm]{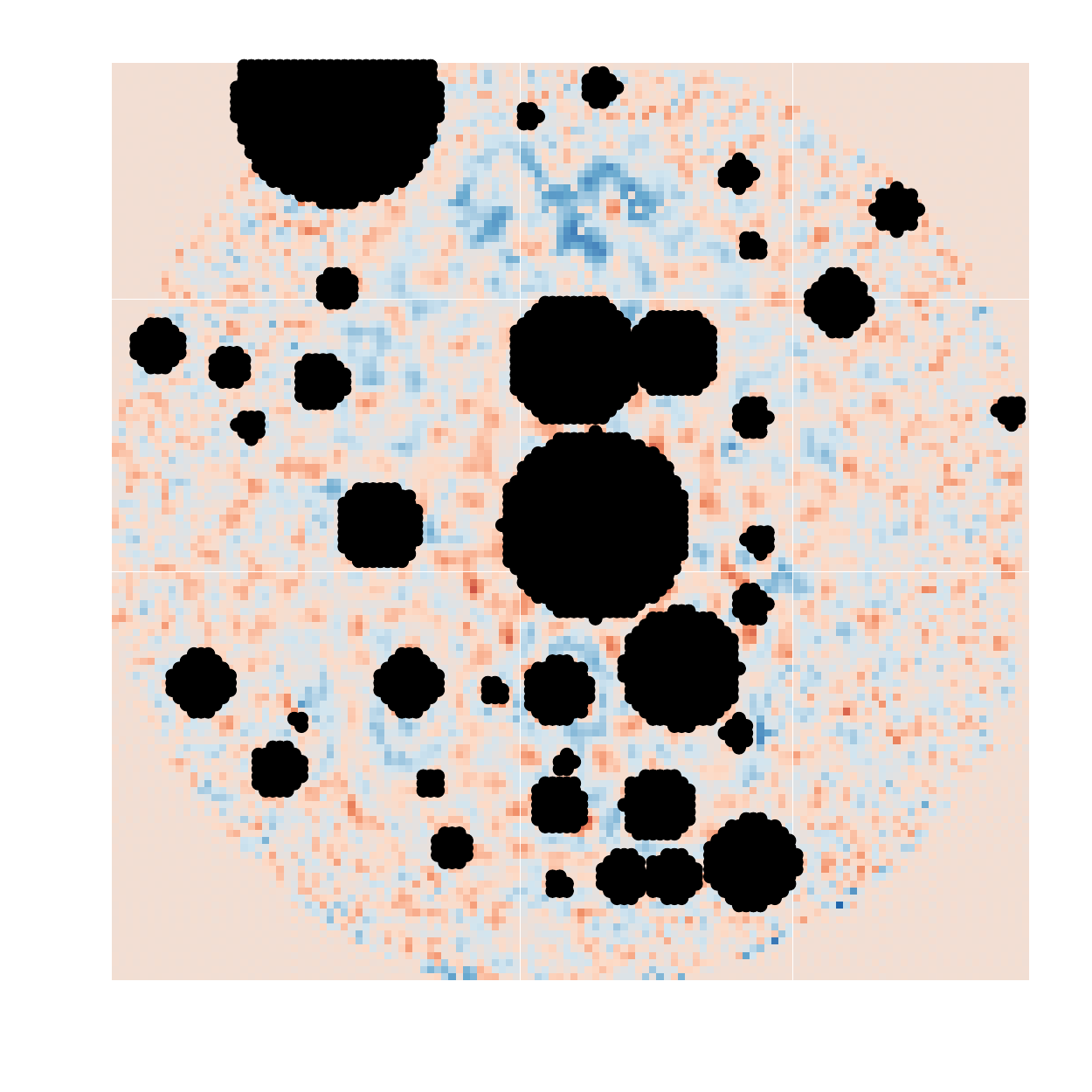}&
\includegraphics[width=3.7cm]{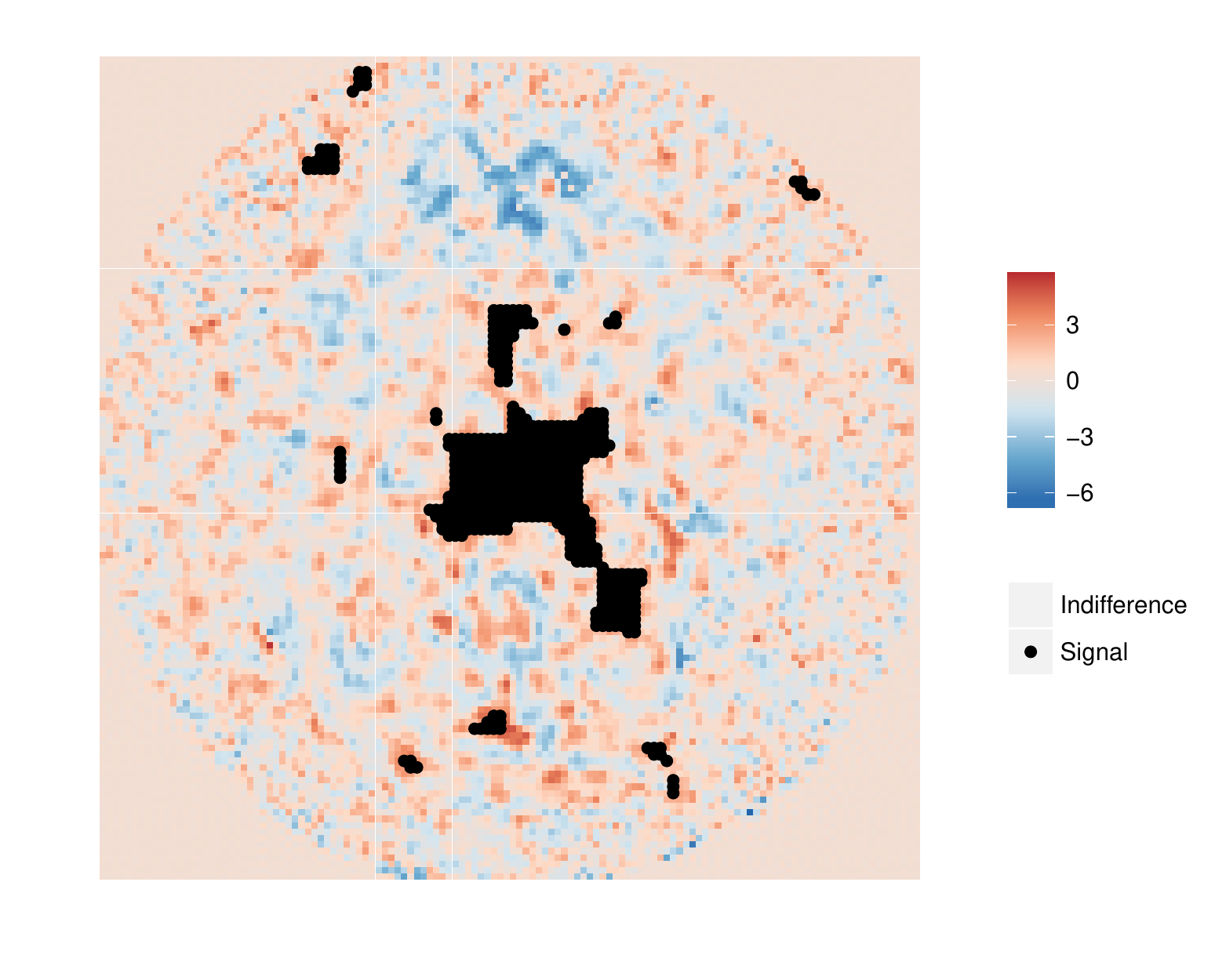}\\
\includegraphics[width=3cm]{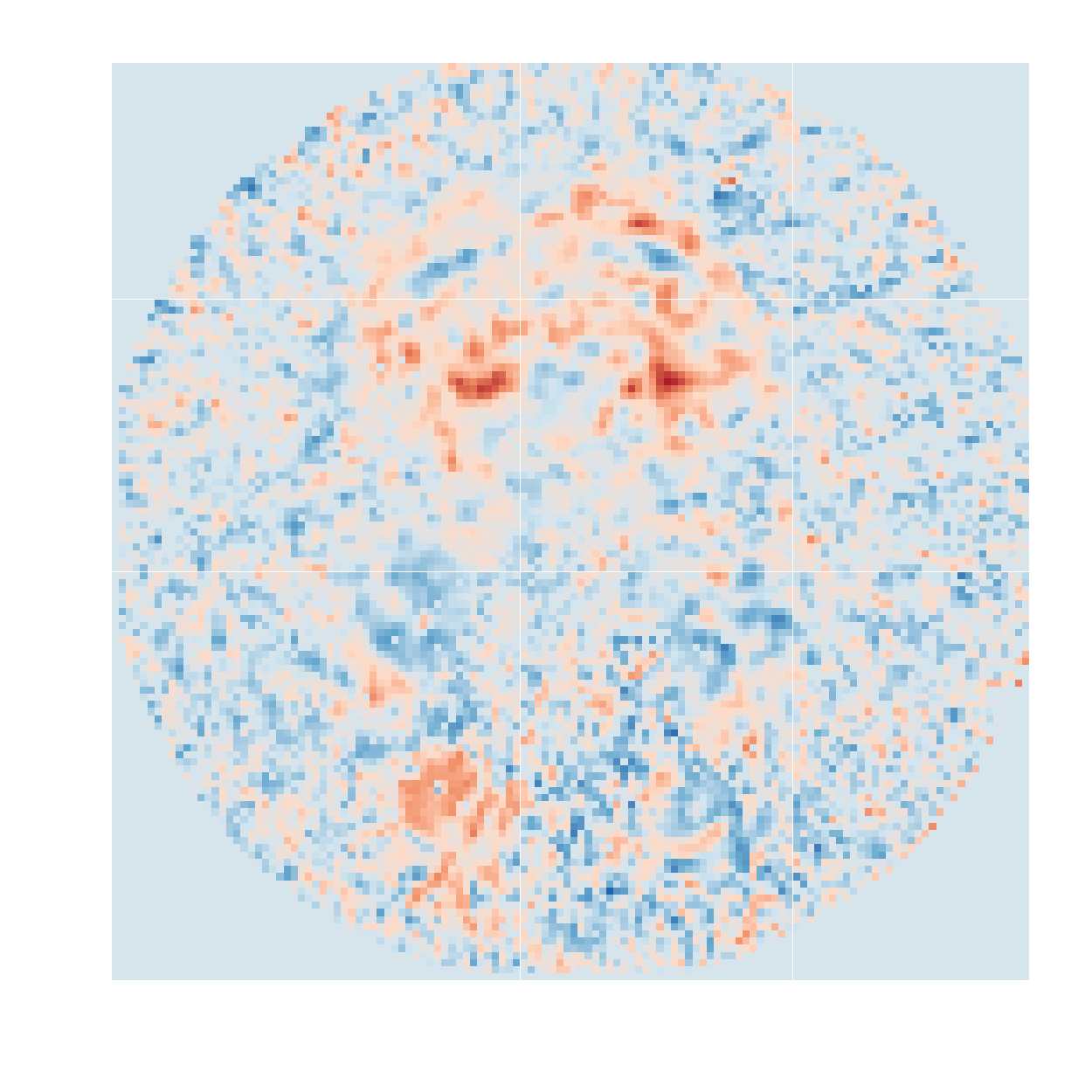}&
\includegraphics[width=3cm]{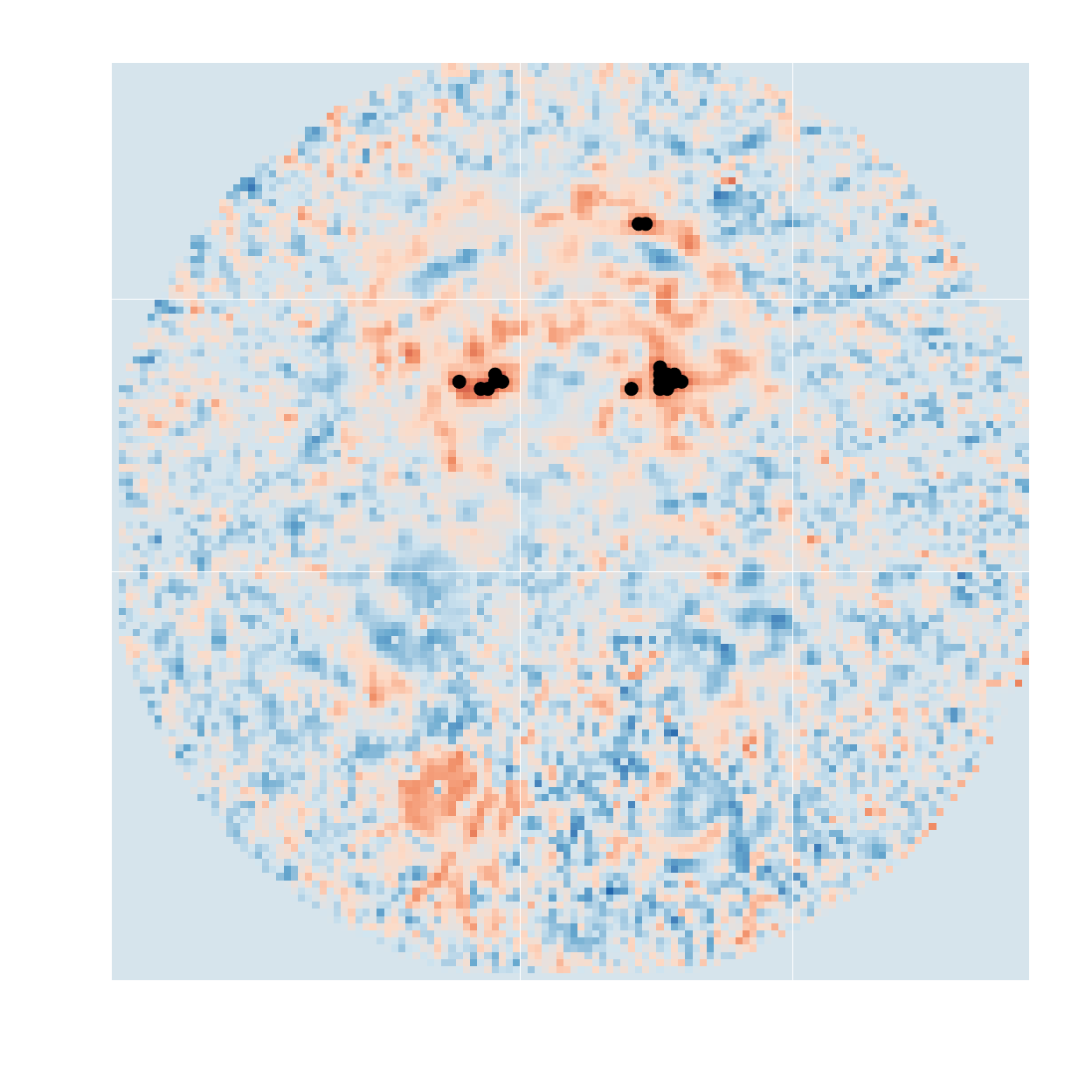}&
\includegraphics[width=3cm]{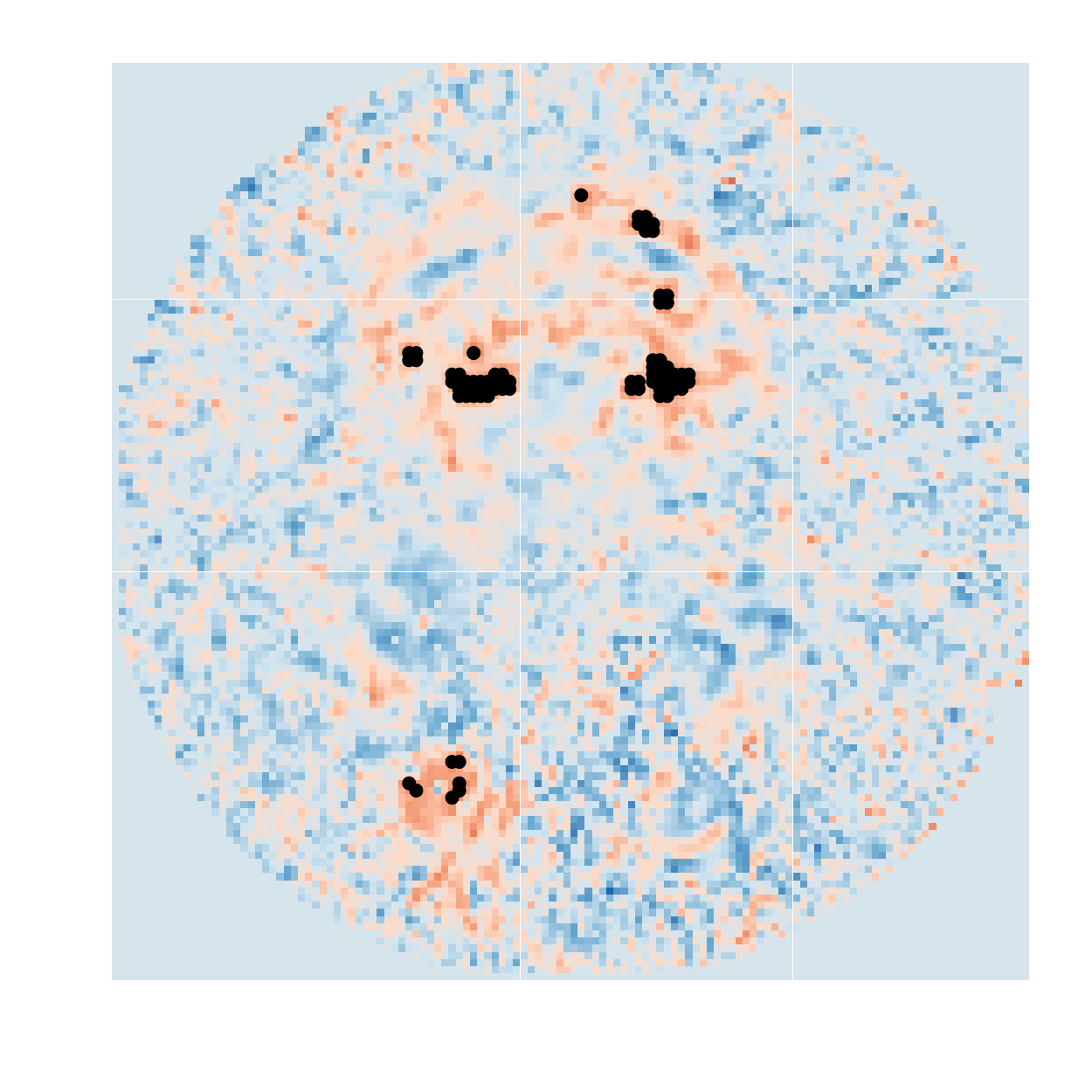}&
\includegraphics[width=3cm]{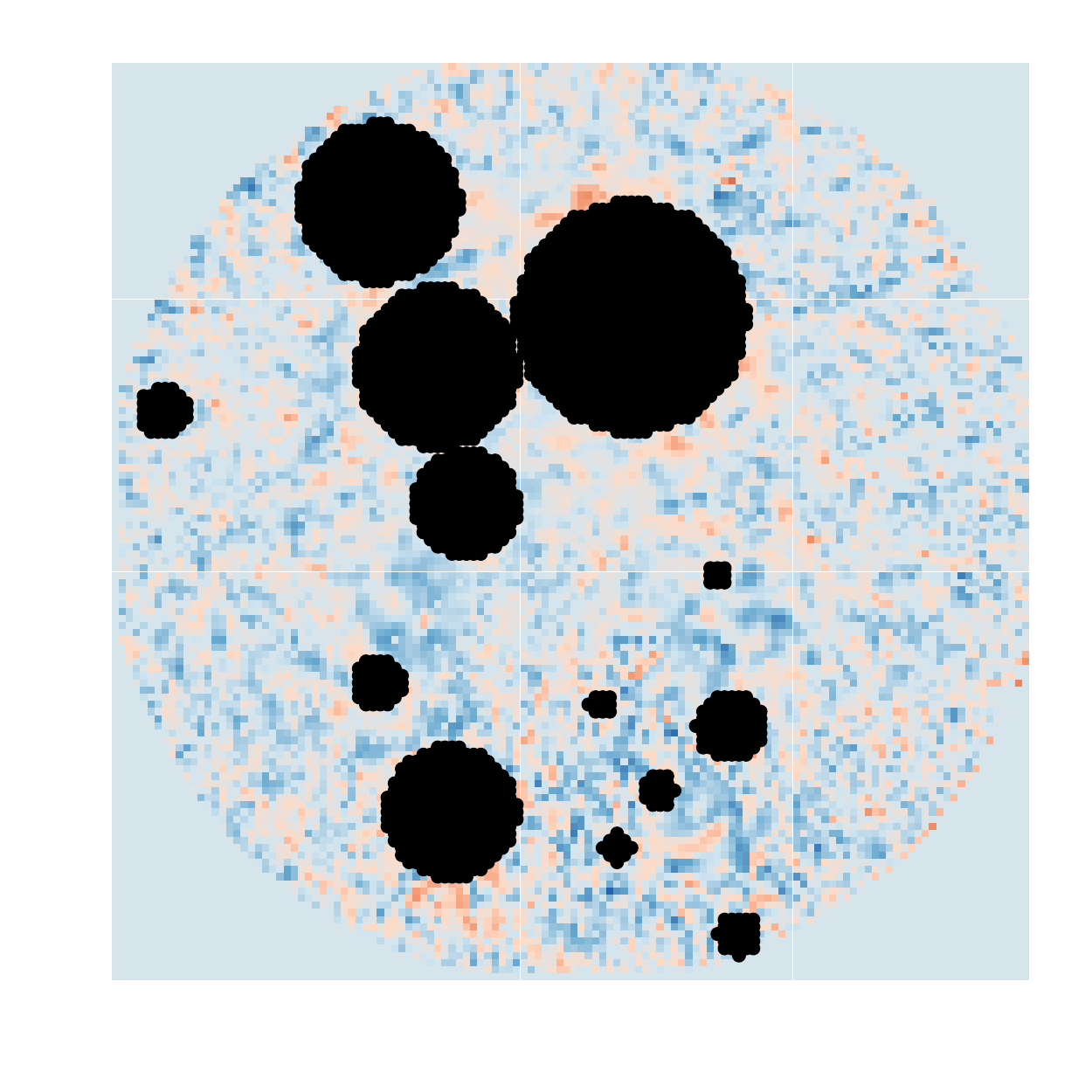}&
\includegraphics[width=3.7cm]{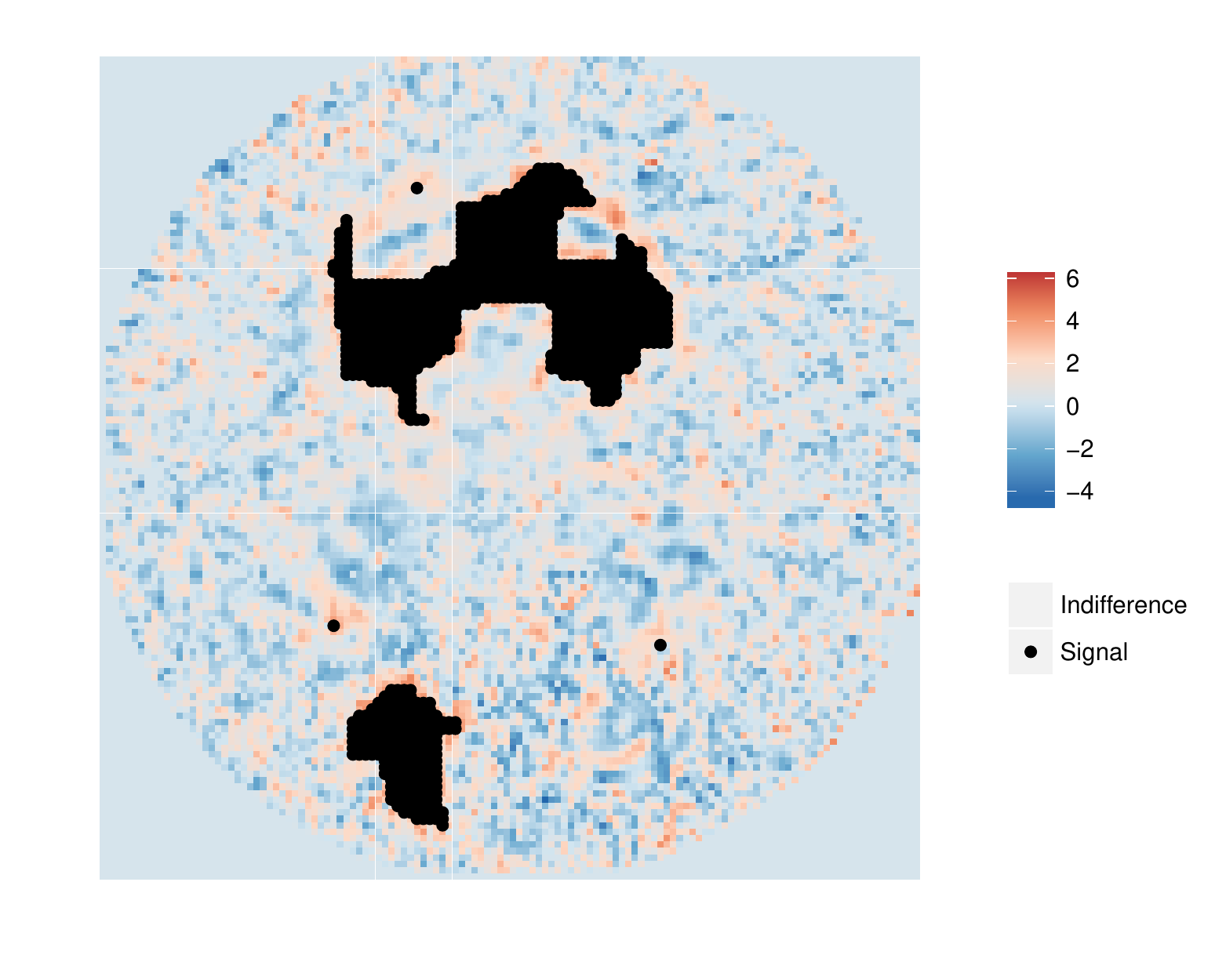}\\
\includegraphics[width=3cm]{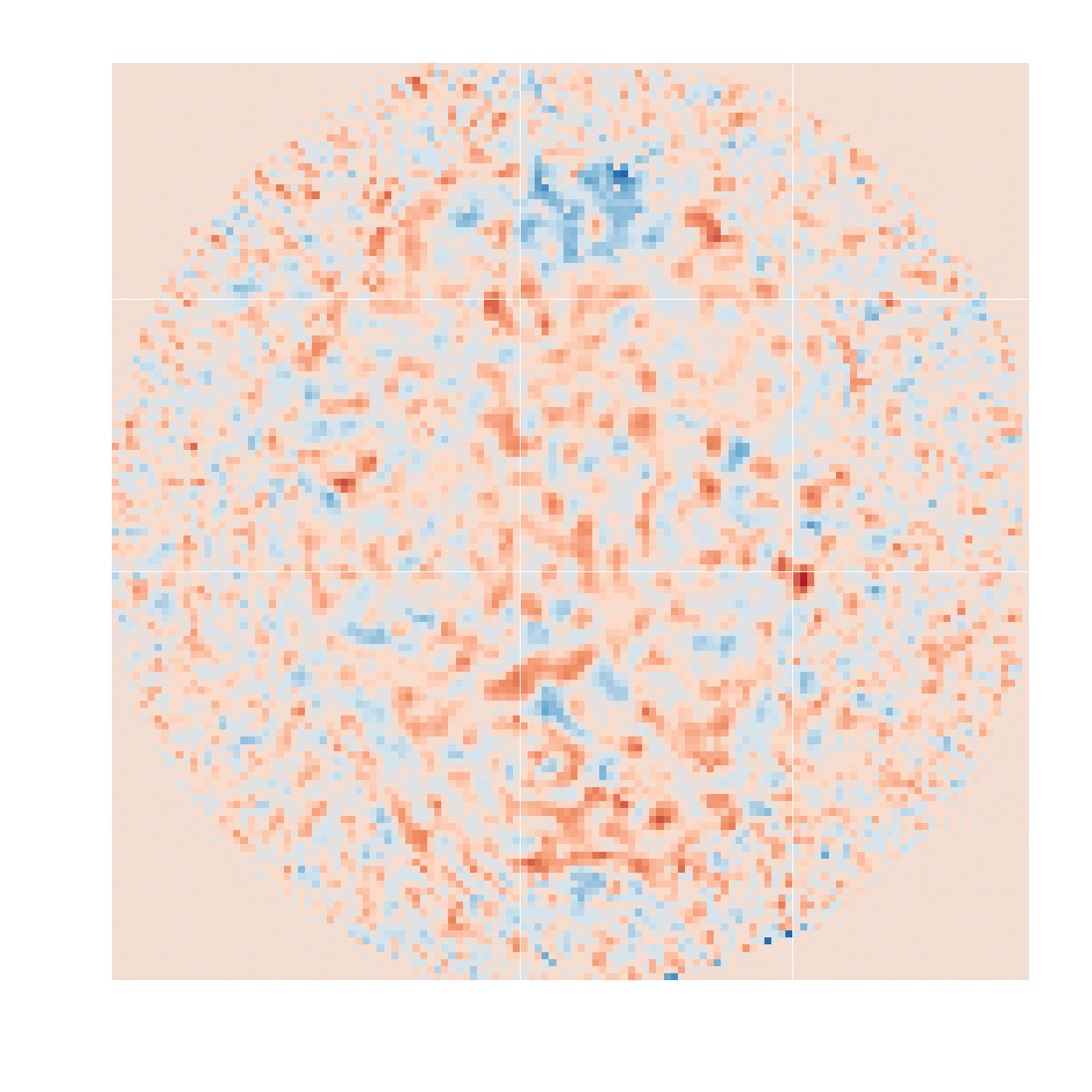}&
\includegraphics[width=3cm]{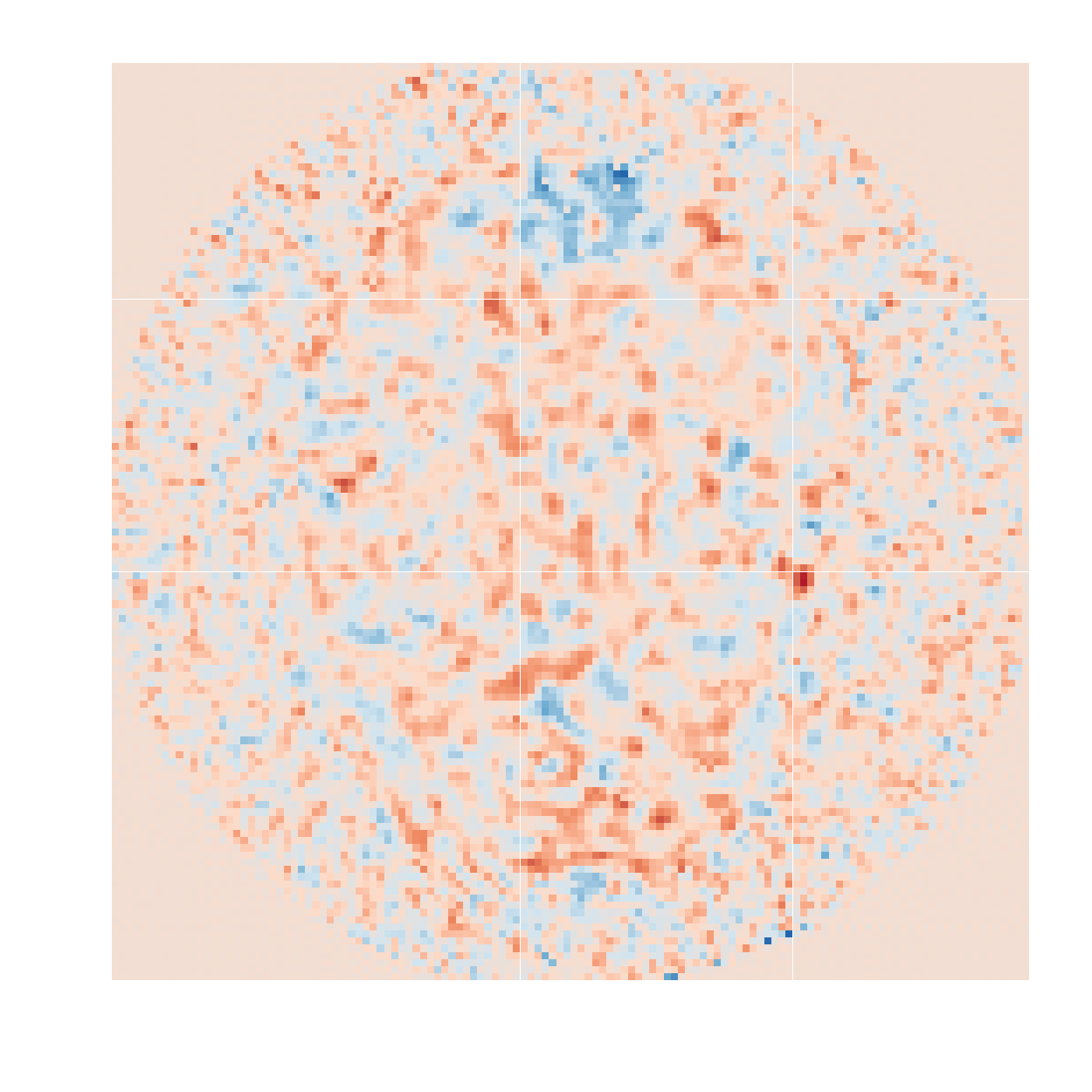}&
\includegraphics[width=3cm]{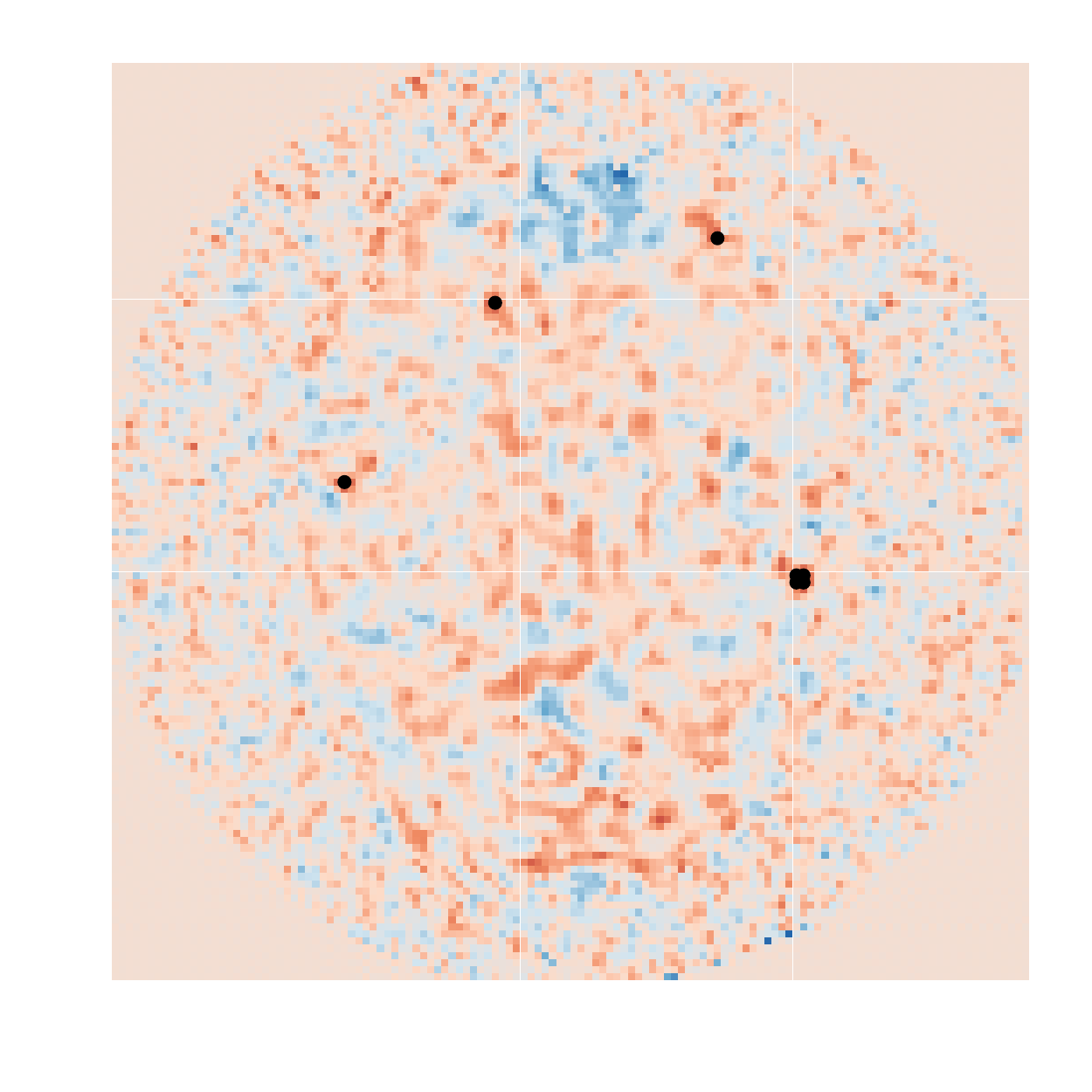}&
\includegraphics[width=3cm]{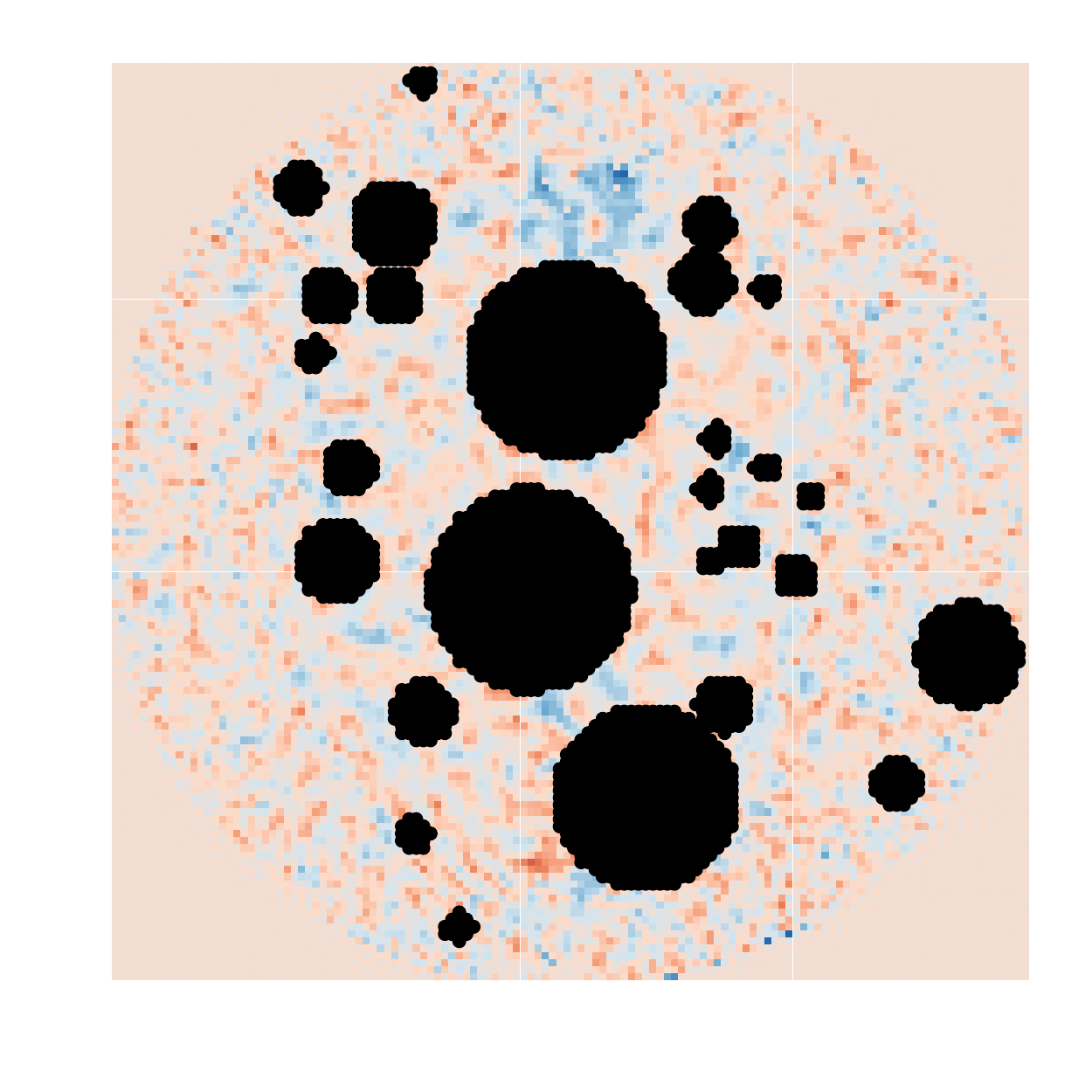}&
\includegraphics[width=3.7cm]{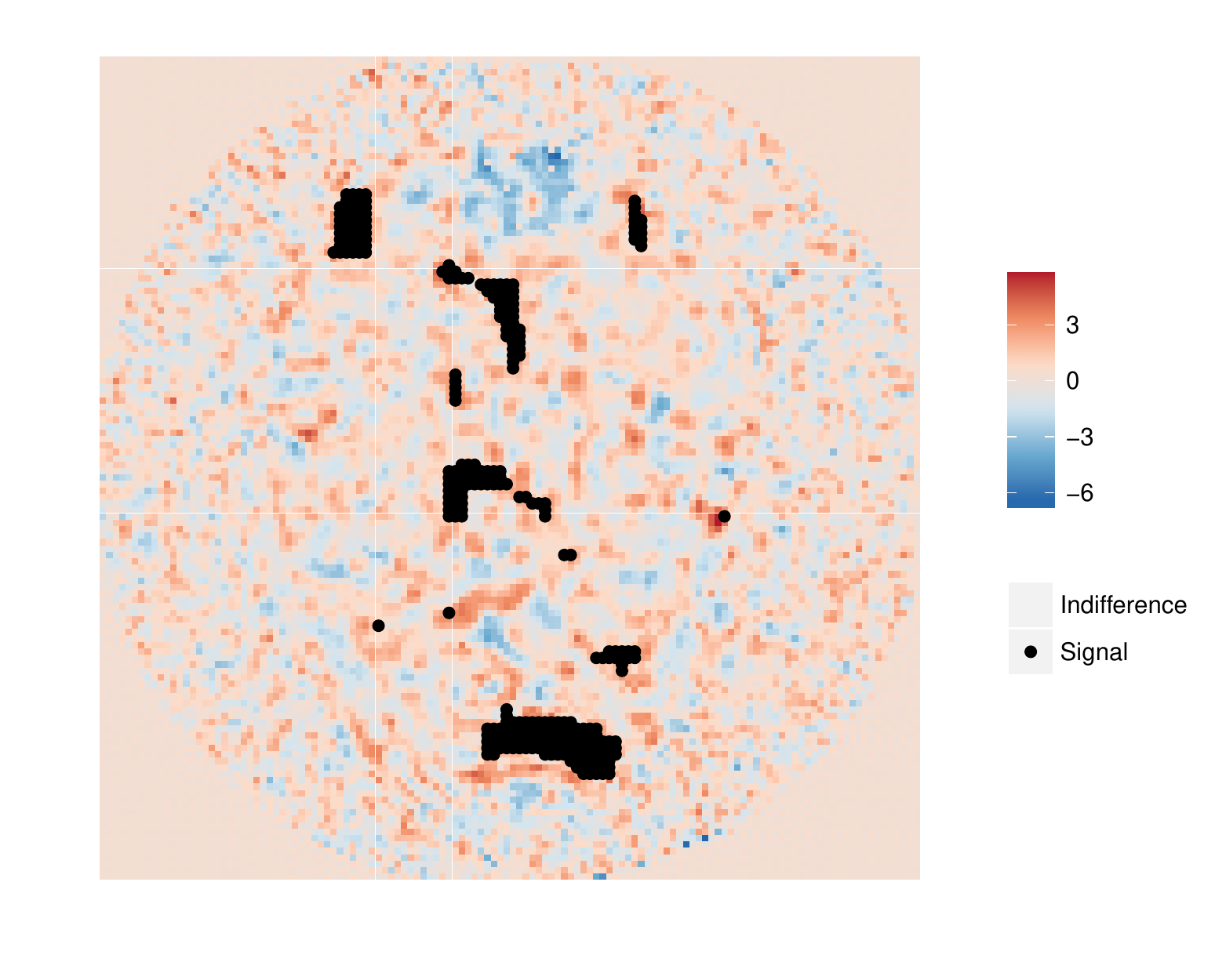}\\
\includegraphics[width=3cm]{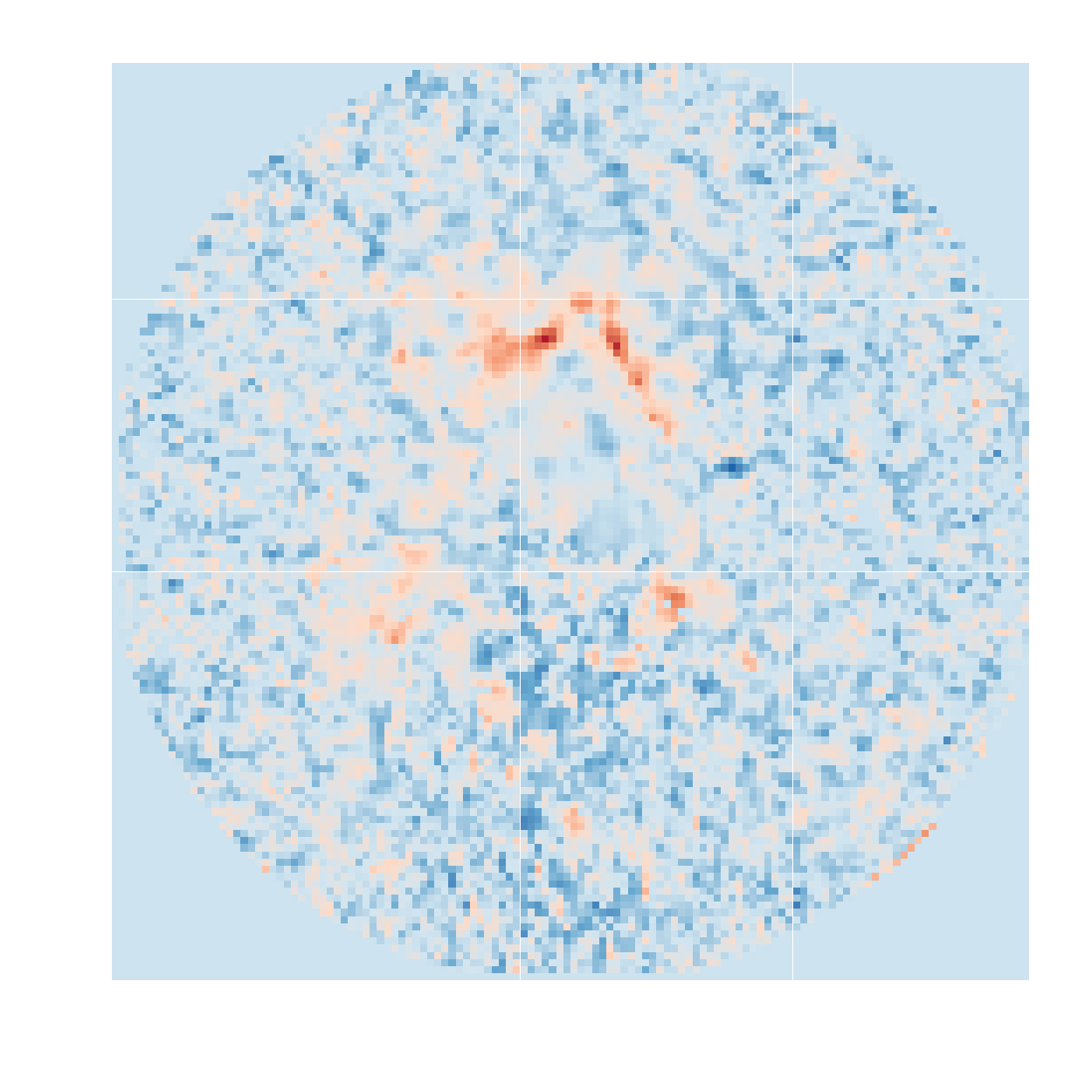}&
\includegraphics[width=3cm]{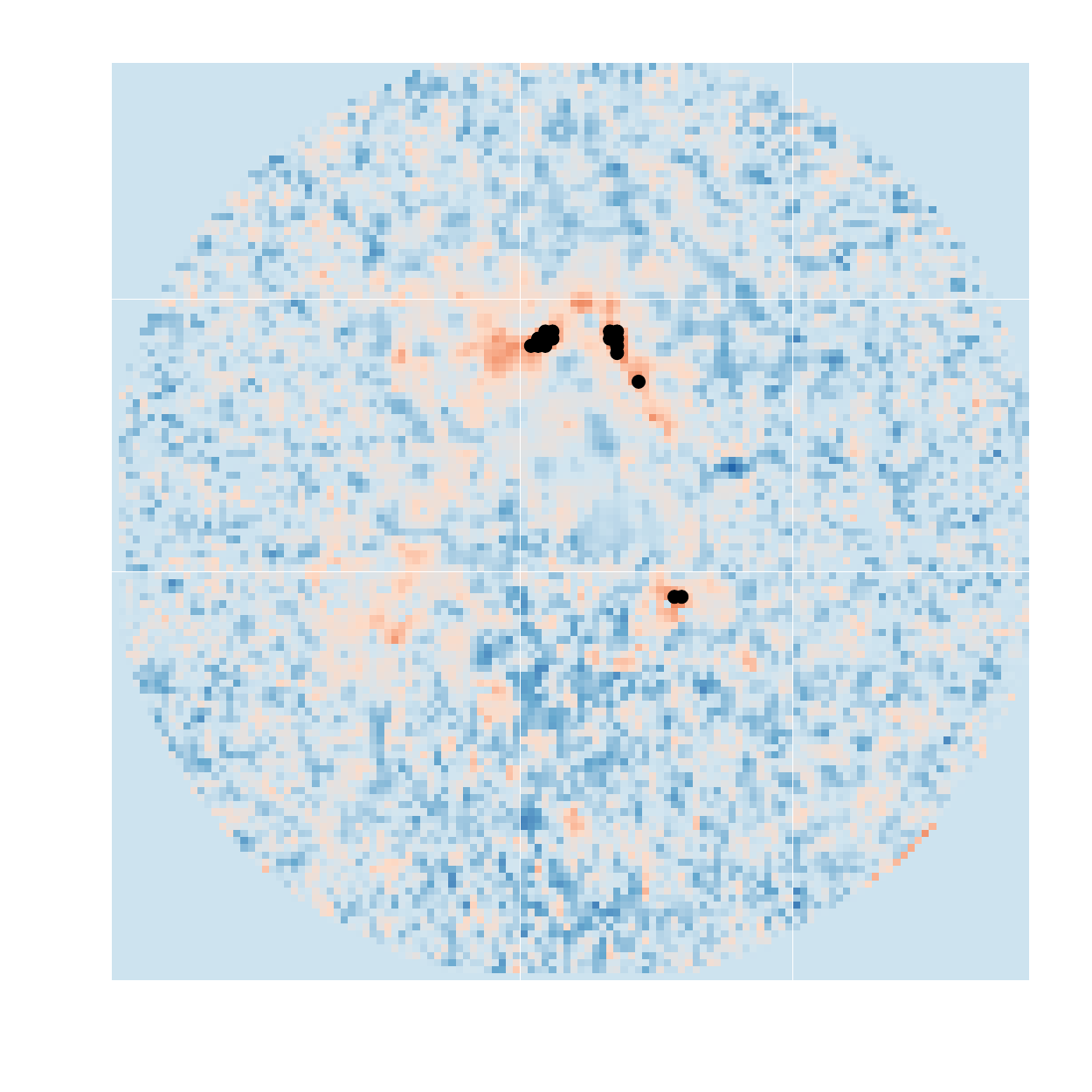}&
\includegraphics[width=3cm]{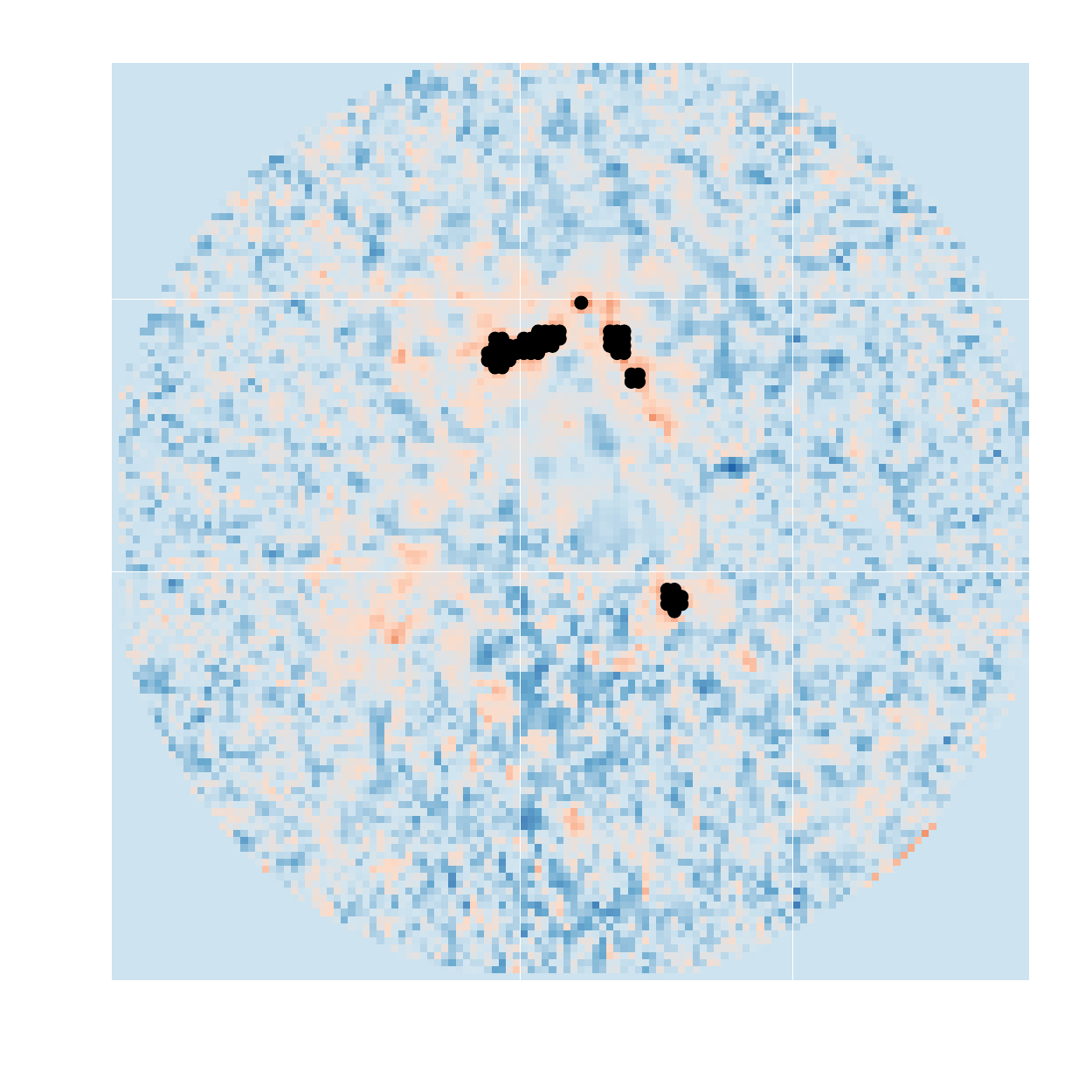}&
\includegraphics[width=3cm]{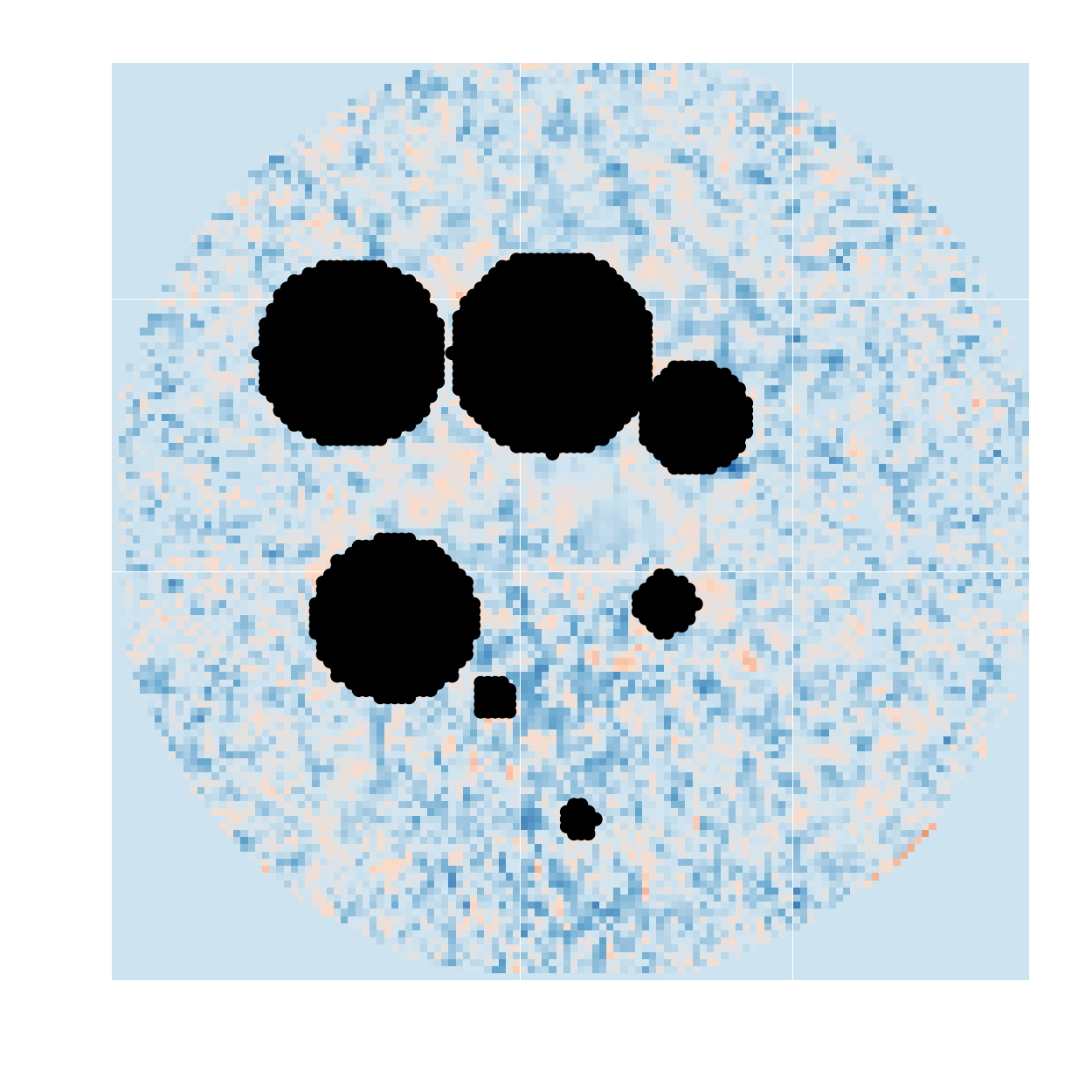}&
\includegraphics[width=3.7cm]{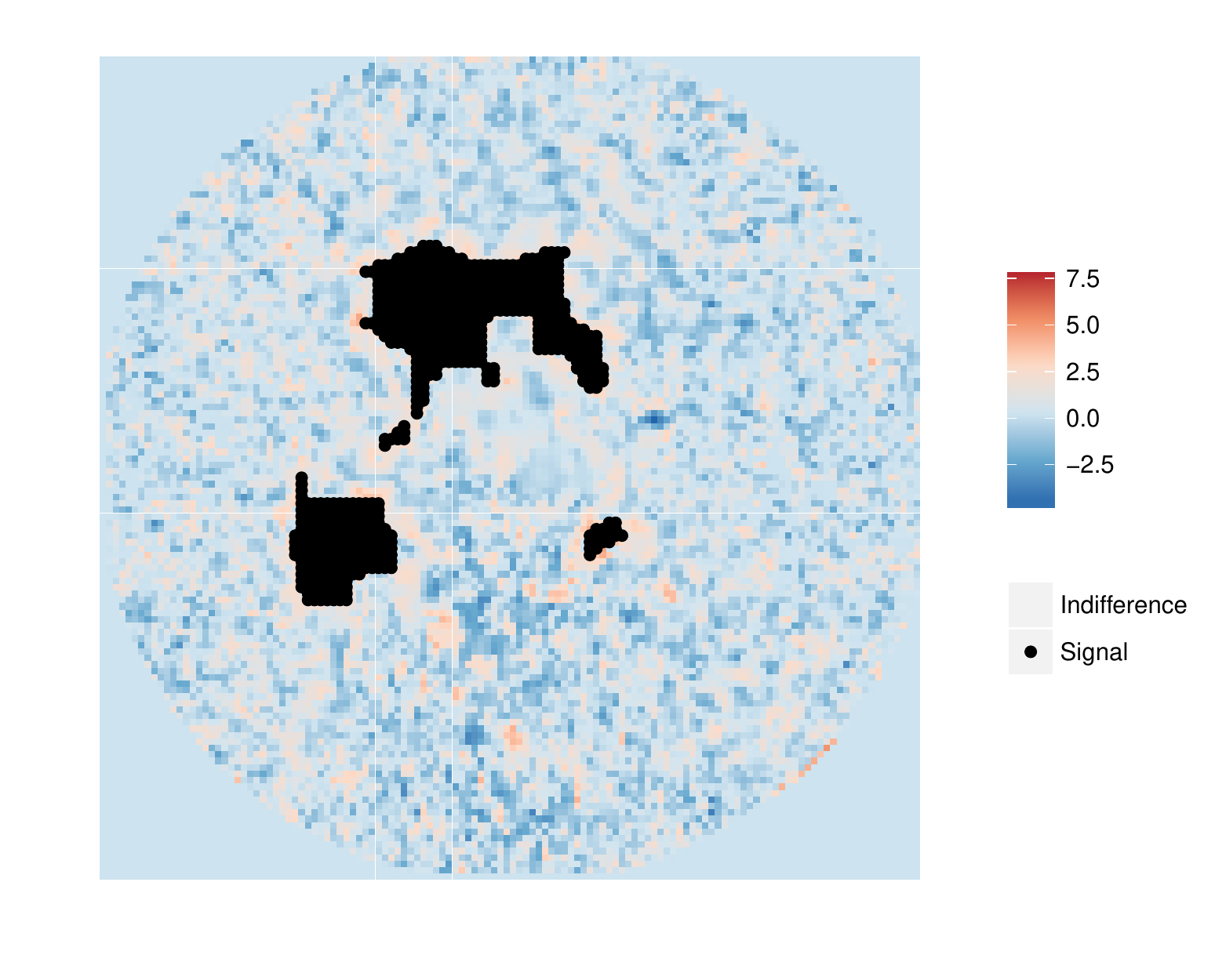}\\
\end{tabular}
\caption{Detected signal regions for the real fMRI dataset. The first column is the raw data. Conventional FDR approach (second column), FDR$_L$ approach (third column), Scan statistics (fourth column) and our proposed SCUSUM method (fifth column) are shown. The black part is the detected signal region. Here, significant level is $\alpha=0.0001.$} \label{fMRI}
\end{figure}

\section{Conclusion} \label{sec:Conclu}

In this work, we proposed a spatial signal detection method, SCUSUM, which could accommodate the local spatial information. 
SCUSUM consists of two steps: firstly signal weights are estimated by moving window projecting and CUSUM cut-off; then a threshold is determined with given significant level $\alpha$ to control marginal false discovery rate. 
Our simulation study shows that our method has a better performance compared to FDR$_L$ method.
Empirically, SCUSUM tends to detect spatially gouped and weak signals, which are missed by the other two methods.
Finally, our method is applied to a real fMRI data to illustrate its detection effectiveness.

In model \ref{basic_model}, though our method doesn't need to specify the distribution for noise process, the noise processes are assumed to be independent, and this is a strong assumption in spatial statistics.
In the future work, it could be possible to consider a dependent noise process. 
In our simulation, the result shows that with weak spaital dependent noise process, SCUSUM could still maintain it effectiveness.
Another key issue is to consider how to combine multi-source image data. 
So far, we only consider the observations are scalar and use moving window idea to project spatial observations to a sequence. 
However, this projecting method could be problematic when the observation of a spatial location is high-dimensional data or functional data.

\clearpage

\bibliographystyle{plain}
\bibliography{reference}

\clearpage

\appendix
\section{Proof for Lemma \ref{lemma.1}} \label{proof1}

\begin{proof}
In assumed model \ref{basic_model}, the noise processes are independent. 
Thus, in the $i$th block, the sampled representative $\gamma_i$ is independent with the rest observations, which implies $\gamma_i$ and $\tilde{\mu}_i$ are independent, i.e. $\gamma_i\perp\tilde{\mu}_i.$
Also the representatives and pseudo block means between blocks are indpendent. 
These lead that $\{\gamma_i\}$ and $\{\tilde{\mu}_i\}$ are indepedent.

Under the null hypothesis $H_0,$ i.e. there is no signal and $\mathbb{E}[\tilde{\mu}_i]=\mu_0,$ pseudo block mean sequence $\{\tilde{\mu}_i\}$ are i.i.d, as well as representative sequence $\{\gamma_i\}$. 
{}For the distribution of $\{\gamma_i^*\},$
we have following:
\small{
\begin{align}
f(\gamma_1^*\le y_1,...,\gamma_b^*\le y_n)&= \sum_{([1],...,[b])\in S_b} f(\gamma_{[1]}\le y_1,...,\gamma_{[b]}\le y_b|\tilde{\mu}_{[1]}\ge...\ge\tilde{\mu}_{[b]})f(\tilde{\mu}_{[1]}>...>\tilde{\mu}_{[b]})\notag\\
&=\sum_{([1],...,[b])\in S_b} f(\gamma_{[1]}\le y_1,...,\gamma_{[b]}\le y_b|\tilde{\mu}_{[1]}\ge...\ge\tilde{\mu}_{[b]})\frac{1}{b!}\notag\\
&=\sum_{([1],...,[b])\in S_b} f(\gamma_{[1]}\le y_1,...,\gamma_{[b]}\le y_b)\frac{1}{b!}\notag\\
&=\sum_{([1],...,[b])\in S_b} f(\gamma_{1}\le y_1,...,\gamma_{b}\le y_b)\frac{1}{b!} \notag\\
&=f(\gamma_{1}\le y_1,...,\gamma_{b}\le y_b) \notag.
\end{align}
}
Here $([1],...,[b])$ is the one possible decreasing order for $\{\tilde{\mu}_i\},$ $S_b$ presents the set of all the possible orders.
The first equation is according to bayesian formula;
the second equation is because of $f(\tilde{\mu}_{[1]}>...>\tilde{\mu}_{[b]})=\int_{\tilde{\mu}_{[1]}>...>\tilde{\mu}_{[b]}}dm(\tilde{\mu}_{[1]},...,\tilde{\mu}_{[b]})=1/n!$ under the independence of $\{\tilde{\mu}_i\};$
the third equation is due to independence between $\{\gamma_i\}$ and $\{\tilde{\mu}_i\};$
the fourth equation is because $\{\gamma_i\}$ are independent;
the fifth equation is due to the cardinality of $S_b$ is $1/b!.$ 
From above result, we reach that $\{\gamma_i^*\}$ are also i.i.d, having the same distribution with $\{\gamma_i\}$.

Under the alternative hypothesis $H_1,$ as the number of observations in each block $n_i$ goes to infinity, the weak law of larger number supports that $\tilde{\mu}_i \stackrel{p}{\rightarrow} \mathbb{E}[\tilde{\mu}_i].$ Thus, with (\ref{gammai}) and (\ref{tilde_mu}), (\ref{gamma*}) holds.
And $l_1$ is the number of block inside the signal region, $(b-l_2)$ is the number of block inside the indifference region and $(l_2-l_1)$ is the number of block at the boundary.
\end{proof}

\section{Proof for Theorem \ref{Theo.1}} \label{proof2}

\begin{proof}
The proof idea is similar with \cite{aue2009estimation}. W.l.o.g, here we consider the variance of noise processes is $1,$ $\mu_0=0$ and $\mu_1=\Delta.$ Define the ratio of signal region to the entire spatial domain is $\theta,$ hence, as blocks become finer and finer ($b\rightarrow \infty$), the ratio of the blocks with signal representative to the total blocks is getting closer to $\theta.$ 

Under $H_1,$ define the following events: 
\begin{align}\label{event}
\left\{
\begin{aligned}
&A_1:=\{\mu_i\le \mu_j,~ B_i \in \mathscr{D}_{\mathscr{A}}, ~ B_j \in \mathscr{D}_{\mathscr{A}^c},~ \forall i,j\},\\
&A_2:=\{\mu_k\le \mu_j,~ B_i \in \mathscr{D}_{\mathscr{A}}, ~ B_k \text{ at boundary},~ \forall i,k\},\\
&A_3:=\{\mu_j\le \mu_k,~ B_j \in \mathscr{D}_{\mathscr{A}^c}, ~ B_k \text{ at boundary},~ \forall j,k\}.\\
\end{aligned}
\right.
\end{align}
Event $(A_1\cup A_2 \cup A_3)^c$ presents the scenario that the order of $\{\gamma_i^*\}$ is from signal blocks $[1,l_1)$ to interim (boundary) blocks $[l_1,l_2]$ and then to indifferent blocks $(l_2,b]$ (see Figure \ref{fig:Ill_gamma} (b).) And with Lemma \ref{lemma.1}, $\mathbb{P}((A_1\cup A_2 \cup A_3)^c)=1,$ as $\min n_i \rightarrow \infty.$

With event $(A_1\cup A_2 \cup A_3)^c,$ following we show that the probability of cut-off location $t=\arg\max_i \tilde{\gamma}_i$ falling into $[l_1,l_2]$ would converge to $1.$ To proof that, we define statistics 
\begin{equation}
Q(r)=\tilde{\gamma}_r^2=(\sum_{i=1}^{r} \gamma^*_i-\frac{r}{b}\sum_{i=1}^{b}\gamma^*_i)^2.
\end{equation}

First consider the probability of event $B_1(N)=\{t \ge (l_2+N)\},$ with $(l_2+N)\le b,$ $N$ is a fixed constant. 
Define $R(r;l_2)=Q(r)-Q(l_2),$ and note that $Q(l_2)$ is a constant. 
\begin{align*}
R(r;l_2)&=Q(r)-Q(l_2)\\
    &=(\sum_{i=1}^{r} \gamma^*_i-\frac{r}{b}\sum_{i=1}^{b}\gamma^*_i)^2-(\sum_{i=1}^{l_2} \gamma^*_i-\frac{l_2}{b}\sum_{i=1}^{b}\gamma^*_i)^2\\
    &=\underbrace{[\sum_{l_2+1}^{r}\gamma^*_i-(r-l_2)\bar{\gamma}^*]}_{(I)}\underbrace{[\sum_{i=1}^{r}\gamma_i+\sum_{j=1}^{l_2}\gamma_j-(r+l_2)\bar{\gamma}^*]}_{(II)}
\end{align*}
where $\bar{\gamma}^*=\frac{1}{b}\sum_{i=1}^{b}\gamma^*_i.$
And with equation (\ref{gamma*}), following equations hold:
\begin{align}
(I)&=\sum_{i=l_2+1}^{r}\epsilon_i-(r-l_2)\frac{1}{b}\sum_{i=1}^{b}\epsilon_i-(r-l_2)\theta\Delta,\label{*}\\
(II)&=\sum_{i=1}^{r}\epsilon_i+\theta b\Delta+\sum_{j=1}^{l_2}\epsilon_j+\theta b\Delta-(r+l_2)\frac{1}{b}\sum_{i=1}^{b}\epsilon_i-(r+l_2)\theta\Delta \notag\\
&=\sum_{i=1}^{r}\epsilon_i+\sum_{j=1}^{l_2}\epsilon_j-(r+l_2)\frac{1}{b}\sum_{i=1}^{b}\epsilon_i+(2\theta b-(r+l_2)\theta)\Delta,
\end{align}
Define the following statistics:
\begin{align}
\left\{
\begin{aligned}
&E^1(r;l_2):=\sum_{i=l_2+1}^{r}\epsilon_i-(r-l_2)\frac{1}{b}\sum_{i=1}^{b}\epsilon_i,\\
&E^2(r;l_2):=\sum_{i=1}^{r}\epsilon_i+\sum_{j=1}^{l_2}\epsilon_j-(r+l_2)\frac{1}{b}\sum_{i=1}^{b}\epsilon_i,\\
&D^1(r;l_2):=-(r-l_2)\theta\Delta,\\
&D^2(r;l_2):=(2\theta b-(r+l_2)\theta)\Delta.\\
\end{aligned}
\right.
\end{align}
So $(I)=E^1(r;l_2)+D^1(r;l_2)$ and $(II)=E^2(r;l_2)+D^2(r;l_2).$ 

As $b\rightarrow \infty,$  we have 
\begin{align*}
\max_{(l_2+N)\le r \le b} D^1(r;l_2)D^2(r;l_2) &= \max_{(l_2+N)\le r \le b} [-(r-l_2)\theta\Delta][(2\theta b-(r+l_2)\theta)\Delta]\\
&= \max_{(l_2+N)\le r \le b} -\theta^2\Delta^2(r-l_2)(2-\frac{r+l_2}{b})b\\
&= -\theta^2\Delta^2N(2-\frac{2l_2+N}{b})b,
\end{align*}
the last equation is due to the $D^1(r;l_2)D^2(r;l_2)$ reaches the maximum with $r=(l_2+N).$ Also $\forall \epsilon\ge 0,$ we have
\begin{align*}
&\lim_{b\rightarrow \infty}\sup\mathbb{P}(\max_{(l_2+N)\le r \le b} D^1(r;l_2)D^2(r;l_2) >-\epsilon)\\
&=\lim_{b\rightarrow \infty}\sup \mathbb{P}(-\theta^2\Delta^2N(2-\frac{2l_2+N}{b})b>-\epsilon)\\
&=\lim_{b\rightarrow \infty} \sup \mathbb{P}(\theta^2\Delta^2N(2-\frac{2l_2+N}{b})b\le \epsilon)=0\\
\end{align*}
If we could prove $D^1D^2(r;l_2)$ is the leading term in $R(r;l_2),$ then
\begin{align*}
\lim_{b\rightarrow \infty} \sup\mathbb{P}(B_1(N)) &= \lim_{b\rightarrow \infty} \sup\mathbb{P}(t\ge l_2+N)\\
&=\lim_{b\rightarrow \infty} \sup\mathbb{P}(\max_{(l_2+N)\le r \le b}R(r;l_2)>0)\\
&=\lim_{b\rightarrow \infty} \sup \mathbb{P}(\theta^2\Delta^2N(2-\frac{2l_2+N}{b})b +O(1)\le \epsilon)=0
\end{align*}
Hence $\mathbb{P}(t\ge l_2)=\cup_{N=0}^{b-l_2}\mathbb{P}(B_1(N))=0.$

Following lemmas support that $D^1(r;l_2)D^2(r;l_2)$ is the leading term in $R(r;l_2),$ with $r\in[l_2+N,b].$

\begin{lemma}\label{part1}
With the assumptions of Theorem \ref{Theo.1}, given $N,$ $\forall \epsilon>0,$
\begin{equation}
\lim_{b\rightarrow \infty} \sup  \mathbb{P}(\max_{(l_2+N)\le r \le b} \frac{|E^1(r;l_2)E^2(r;l_2)|}{|D^1(r;l_2)D^2(r;l_2)|} \ge \epsilon)=0.
\end{equation}
\end{lemma}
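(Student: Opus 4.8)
\section*{Proof proposal for Lemma~\ref{part1}}

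The plan is to control the ratio uniformly in $r\in[l_2+N,b]$ by splitting it into a factor that is bounded in probability and a factor that is $o_p(1)$. Since $|D^1(r;l_2)D^2(r;l_2)|=\theta^2\Delta^2(r-l_2)(2b-r-l_2)$, we may write
\begin{equation}
\frac{|E^1(r;l_2)E^2(r;l_2)|}{|D^1(r;l_2)D^2(r;l_2)|}=\frac{1}{\theta^2\Delta^2}\cdot\frac{|E^1(r;l_2)|}{r-l_2}\cdot\frac{|E^2(r;l_2)|}{2b-r-l_2},
\end{equation}
so it suffices to show that $\sup_{l_2+N\le r\le b}\frac{|E^1(r;l_2)|}{r-l_2}=O_p(1)$ for fixed $N$ while $\sup_{l_2+N\le r\le b}\frac{|E^2(r;l_2)|}{2b-r-l_2}=o_p(1)$; their product is then $o_p(1)$, which gives the stated convergence. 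Write $S_k=\sum_{i=1}^k\epsilon_i$ for the partial sums of the centered noise of $\{\gamma^*_i\}$, which (conditionally on the event $(A_1\cup A_2\cup A_3)^c$ and in the limit $\min n_i\to\infty$) are independent with mean zero and uniformly bounded variances, and $M_b=\max_{0\le k\le b}|S_k|$; Kolmogorov's maximal inequality gives $\mathbb{P}(M_b\ge\lambda\sqrt b)\le C\lambda^{-2}$, hence $M_b=O_p(\sqrt b)$. By Lemma~\ref{lemma.1} and the discussion preceding the proof of Theorem~\ref{Theo.1}, the boundary blocks form a vanishing fraction, so $l_2/b\to\theta\in(0,1)$.

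For the $E^2$ factor, since $r+l_2\le 2b$ we have $|E^2(r;l_2)|=|S_r+S_{l_2}-\frac{r+l_2}{b}S_b|\le|S_r|+|S_{l_2}|+2M_b\le 4M_b$, while $2b-r-l_2\ge b-l_2\ge(1-\theta-\delta)b$ for any fixed $\delta\in(0,1-\theta)$ and all $b$ large. Therefore $\sup_{l_2+N\le r\le b}\frac{|E^2(r;l_2)|}{2b-r-l_2}\le\frac{4M_b}{(1-\theta-\delta)b}=O_p(b^{-1/2})=o_p(1)$.

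For the $E^1$ factor, decompose $E^1(r;l_2)=(S_r-S_{l_2})-\frac{r-l_2}{b}S_b$, so that $\frac{|E^1(r;l_2)|}{r-l_2}\le\frac{|S_r-S_{l_2}|}{r-l_2}+\frac{|S_b|}{b}$, and the last term is $O_p(b^{-1/2})=o_p(1)$. For the first term, set $j=r-l_2\ge N$ and $T_j=S_{l_2+j}-S_{l_2}=\sum_{i=l_2+1}^{l_2+j}\epsilon_i$, a partial sum of independent mean-zero variables of bounded variance. Applying the H\'ajek--R\'enyi inequality with the non-increasing weights $c_j=1/j$ yields, for every $\lambda>0$,
\begin{equation}
\mathbb{P}\Bigl(\max_{N\le j\le b-l_2}\frac{|T_j|}{j}\ge\lambda\Bigr)\le\frac{C}{\lambda^2}\Bigl(\frac{1}{N}+\sum_{i>N}\frac{1}{i^2}\Bigr)\le\frac{2C}{\lambda^2 N},
\end{equation}
a bound free of $b$; hence for fixed $N$ this maximum, and therefore $\sup_{l_2+N\le r\le b}\frac{|E^1(r;l_2)|}{r-l_2}$, is $O_p(1)$.

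Combining the two bounds, $\max_{l_2+N\le r\le b}\frac{|E^1(r;l_2)E^2(r;l_2)|}{|D^1(r;l_2)D^2(r;l_2)|}\le\theta^{-2}\Delta^{-2}\,O_p(1)\,o_p(1)=o_p(1)$, which is exactly $\lim_{b\to\infty}\mathbb{P}(\max_r(\cdot)\ge\epsilon)=0$. I expect the main obstacle to be the uniform control of the $E^1$ factor: the pointwise estimate $|T_j|\asymp\sqrt j$ gives only $|T_j|/j\asymp j^{-1/2}$, and one must suppress the iterated-logarithm fluctuations of $\max_j|T_j|/\sqrt j$ to obtain a genuinely uniform $O_p(1)$ bound — this is precisely what the H\'ajek--R\'enyi maximal inequality supplies. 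A secondary technical point, to be handled as in the proof of Lemma~\ref{lemma.1}, is justifying that after reordering the blocks by $\tilde\mu_i$ and conditioning on $(A_1\cup A_2\cup A_3)^c$, the noise attached to $\{\gamma^*_i\}$ within each of the three groups is asymptotically an independent mean-zero sequence, so that the maximal inequalities above are legitimate.
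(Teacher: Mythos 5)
Your proof is correct, and while it starts from the same decomposition as the paper's --- bounding the ratio by a constant multiple of $\frac{|E^1(r;l_2)|}{r-l_2}\cdot\frac{|E^2(r;l_2)|}{2b-r-l_2}$ --- it controls the maximum over $r$ by a genuinely different and more careful route. The paper argues pointwise, invoking the law of the iterated logarithm and the weak law of large numbers to claim that \emph{both} $\max_r |E^1(r;l_2)|/(r-l_2)$ and $\max_r |E^2(r;l_2)|/(l_2+r)$ tend to zero in probability; but the first claim fails as stated, since at $r=l_2+N$ with $N$ fixed the quantity $|\sum_{i=l_2+1}^{l_2+N}\epsilon_i|/N$ is a nondegenerate random variable that does not vanish, and a pointwise limit does not by itself control a maximum over $r\in[l_2+N,b]$. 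You correctly identify that only a uniform $O_p(1)$ bound is available for the $E^1$ factor, obtain it with the H\'ajek--R\'enyi inequality (giving a tail bound $\le 2C/(\lambda^2N)$ that is free of $b$), and place the entire burden of the decay on the $E^2$ factor, which is uniformly $O_p(b^{-1/2})$ by Kolmogorov's maximal inequality because its denominator $2b-r-l_2\ge b-l_2$ is of order $b$. The product $O_p(1)\cdot o_p(1)=o_p(1)$ then yields the lemma, so your argument in effect repairs the gap in the paper's while reaching the same conclusion. The one step you should spell out rather than defer is the legitimacy of the maximal inequalities after reordering: by the independence of $\{\gamma_i\}$ and $\{\tilde{\mu}_i\}$ in Lemma \ref{lemma.1}, conditional on the event $(A_1\cup A_2\cup A_3)^c$ the centered noise attached to $\{\gamma^*_i\}$ is exchangeable (indeed i.i.d.) within each of the three groups, which suffices to apply Kolmogorov and H\'ajek--R\'enyi blockwise; this matches what the paper implicitly assumes.
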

\begin{proof} With the brief derivation, we have,
\begin{align*}
&\max_{(l_2+N)\le r \le b} \frac{|E^1(r;l_2)E^2(r;l_2)|}{|D^1(r;l_2)D^2(r;l_2)|} \\
&\le \max_{(l_2+N)\le r \le b} \frac{|E^1(r;l_2)||E^2(r;l_2)|}{\theta\Delta^2(r-l_2)(2b\theta-(l_2+r))}\\
&=O(1)\max_{(l_2+N)\le r \le b}\frac{|E^1(r;l_2)||E^2(r;l_2)|}{(r-l_2)(l_2+r)}\\
&=O(1)\max_{(l_2+N)\le r \le b}\underbrace{\frac{|E^1(r;l_2)|}{(r-l_2)}}_{(III)}\underbrace{\frac{|E^2(r;l_2)|}{(l_2+r)}}_{(IV)}
\end{align*}
the last equation is because of $l_2>\theta\Delta.$

For $(III),$
\begin{align*}
\frac{|E^1(r;l_2)|}{(r-l_2)}&\le \frac{|\sum_{i=l_2+1}^{r}\epsilon_i|+|(r-l_2)\frac{1}{b}\sum_{i=1}^{b}\epsilon_i|}{(r-l_2)}\\
&=\frac{|\sum_{i=l_2+1}^{r}\epsilon_i|}{(r-l_2)}+|\frac{1}{b}\sum_{i=1}^{b}\epsilon_i| \stackrel{p}{\rightarrow} 0,
\end{align*}
the last equation is due to the law of iterated logarithm and the weak law of large number.
Similarly, for $(IV),$
\begin{align*}
\frac{|E^2(r;l_2)|}{(l_2+r)}&\le \frac{|\sum_{i=1}^{r}\epsilon_i+\sum_{j=1}^{l_2}\epsilon_j|+|(r+l_2)\frac{1}{b}\sum_{i=1}^{b}\epsilon_i|}{(l_2+r)}\stackrel{p}{\rightarrow} 0.
\end{align*}
Hence, with continuous mapping theorem, we have the result.
\end{proof}

\begin{lemma}\label{part2}
With the assumptions of Theorem \ref{Theo.1}, given $N,$ $\forall \epsilon>0,$
\begin{equation}
\lim_{b\rightarrow \infty} \sup  \mathbb{P}(\max_{(l_2+N)\le r \le b} \frac{|E^1(r;l_2)D^2(r;l_2)|}{|D^1(r;l_2)D^2(r;l_2)|} \ge \epsilon)=0.
\end{equation}
\end{lemma}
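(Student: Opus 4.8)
The plan is to observe that, after one trivial cancellation, the assertion is exactly the estimate already isolated as the term $(III)$ in the proof of Lemma~\ref{part1}. Under $H_1$ the signal ratio satisfies $\theta\in(0,1)$, so $D^2(r;l_2)=\theta\Delta\,(2b-r-l_2)$ is strictly positive for every $l_2+N\le r\le b$ once $b$ is large (there is at least one indifference block, so $l_2<b$). Hence $D^2(r;l_2)$ cancels identically between the numerator and the denominator and
\[
\max_{(l_2+N)\le r\le b}\frac{|E^1(r;l_2)\,D^2(r;l_2)|}{|D^1(r;l_2)\,D^2(r;l_2)|}
=\max_{(l_2+N)\le r\le b}\frac{|E^1(r;l_2)|}{(r-l_2)\,\theta\Delta}
=\frac{1}{\theta\Delta}\max_{(l_2+N)\le r\le b}\frac{|E^1(r;l_2)|}{r-l_2}.
\]
Thus it suffices to bound $|E^1(r;l_2)|/(r-l_2)$ uniformly in $r$, and, as in Lemmas~\ref{lemma.1} and~\ref{part1}, the whole argument is run on the event $(A_1\cup A_2\cup A_3)^c$, which has probability one in the limit, so that the representation of $E^1$ and $D^1$ is legitimate.

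Next I would make the estimate explicit. Writing $E^1(r;l_2)=\sum_{i=l_2+1}^{r}\epsilon_i-(r-l_2)\bar\epsilon$ with $\bar\epsilon=\frac1b\sum_{i=1}^{b}\epsilon_i$ and using the triangle inequality,
\[
\frac{|E^1(r;l_2)|}{r-l_2}\;\le\;\frac{\bigl|\sum_{i=l_2+1}^{r}\epsilon_i\bigr|}{r-l_2}+|\bar\epsilon|.
\]
The second term does not depend on $r$ and tends to $0$ almost surely by the strong law of large numbers, so its contribution to the maximum vanishes. For the first term I would substitute $m=r-l_2$; since the noises are i.i.d.\ with mean zero and (w.l.o.g.) unit variance, $\sum_{k=1}^{m}\epsilon_{l_2+k}$ has the same law as an ordinary mean-zero partial sum $S_m$ for every value of $l_2$, so
\[
\mathbb{P}\Bigl(\max_{(l_2+N)\le r\le b}\frac{\bigl|\sum_{i=l_2+1}^{r}\epsilon_i\bigr|}{r-l_2}\ge\epsilon\Bigr)\le\mathbb{P}\Bigl(\sup_{m\ge N}\frac{|S_m|}{m}\ge\epsilon\Bigr).
\]

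The key tool is a maximal inequality of H\'ajek--R\'enyi (Kolmogorov) type, applied with the nonincreasing weights $1/m$:
\[
\mathbb{P}\Bigl(\sup_{m\ge N}\frac{|S_m|}{m}\ge\epsilon\Bigr)\le\frac{1}{\epsilon^{2}}\Bigl(\frac1N+\sum_{m>N}\frac1{m^{2}}\Bigr)\le\frac{2}{N\epsilon^{2}};
\]
(alternatively, the law of the iterated logarithm already invoked in Lemma~\ref{part1} gives $S_m/m\to0$ a.s., hence $\sup_{m\ge N}|S_m|/m\to0$ a.s. as $N\to\infty$). Combining the three displays,
\[
\limsup_{b\to\infty}\ \mathbb{P}\Bigl(\max_{(l_2+N)\le r\le b}\frac{|E^1(r;l_2)\,D^2(r;l_2)|}{|D^1(r;l_2)\,D^2(r;l_2)|}\ge\epsilon\Bigr)\le\frac{2}{N(\theta\Delta\epsilon)^{2}},
\]
which is as small as desired for $N$ large and tends to $0$ as $N\to\infty$.

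The main obstacle I anticipate is precisely this non-uniformity near $r=l_2$. In contrast with the term $(IV)$ of Lemma~\ref{part1}, which is divided by $l_2+r\ge l_2\to\infty$ and is therefore negligible uniformly in $r$, here the denominator $r-l_2$ can be as small as the fixed number $N$, so the centered partial sum in the numerator aggregates only a handful of noise terms and does not self-average; a crude union bound over the $O(b)$ values of $r$ is useless, and it is the quantitative $O(1/N)$ tail furnished by the maximal inequality (equivalently, the a.s.\ tail control from the law of the iterated logarithm) that makes the argument go through. Once this is in place the proof is routine bookkeeping: reduce to $(III)$ by cancelling $D^2$, peel off the $r$-free term $|\bar\epsilon|$, and invoke the maximal inequality.
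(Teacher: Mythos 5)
Your reduction is the same one the paper uses: cancel the common factor $D^2(r;l_2)$ (the paper writes this as the bound $O(1)\max_r \frac{|E^1(r;l_2)|}{r-l_2}\cdot\frac{|D^2(r;l_2)|}{r+l_2}$ with $\frac{|D^2(r;l_2)|}{r+l_2}=O(1)$, which amounts to the same thing), so that everything hinges on controlling $\max_{(l_2+N)\le r\le b}|E^1(r;l_2)|/(r-l_2)$. Where you genuinely diverge is in how that maximum is handled. The paper simply imports from Lemma~\ref{part1} the assertion that $|E^1(r;l_2)|/(r-l_2)\stackrel{p}{\rightarrow}0$, justified there by the law of the iterated logarithm and the weak law of large numbers; you instead peel off the $r$-free term $|\bar\epsilon|$ and apply a H\'ajek--R\'enyi maximal inequality with weights $1/m$, arriving at the explicit tail bound $2/(N(\theta\Delta\epsilon)^2)$. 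Your execution is the more careful of the two, and it buys a quantitative rate where the paper has only a qualitative claim.

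However, this very carefulness exposes a discrepancy you should not paper over: your final bound is $O(1/N)$, not $0$, so you have not proved the lemma as literally stated (a limit equal to $0$ for \emph{each fixed} $N$) --- and indeed that statement cannot hold, for exactly the reason you identify. At $r=l_2+N$ the numerator $\sum_{i=l_2+1}^{l_2+N}\epsilon_i$ is a sum of only $N$ i.i.d.\ mean-zero variables; it does not concentrate as $b\rightarrow\infty$, so $\mathbb{P}\bigl(|E^1(l_2+N;l_2)|/(N\theta\Delta)\ge\epsilon\bigr)$ stays bounded away from $0$ for small $\epsilon$. The paper's own proof glosses over this: the LIL gives $S_m/m\rightarrow 0$ only along $m\rightarrow\infty$ and says nothing uniform over $m\ge N$ with $N$ fixed. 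The honest conclusion is your $\limsup_b\mathbb{P}(\cdot\ge\epsilon)\le 2/(N(\theta\Delta\epsilon)^2)$, which is in fact what the proof of Theorem~\ref{Theo.1} needs, since the events $B_1(N)$ are ultimately combined over all $N$ and the limit $N\rightarrow\infty$ can be taken at that stage. So: same decomposition, a more rigorous estimate, and a correct weakening of the stated conclusion that you should state explicitly rather than leave implicit in the phrase ``as small as desired for $N$ large.''
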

\begin{proof}
Similarly, we have
\begin{align*}
&\max_{(l_2+N)\le r \le b} \frac{|E^1(r;l_2)D^2(r;l_2)|}{|D^1(r;l_2)D^2(r;l_2)|} \\
&\le O(1) \max_{(l_2+N)\le r \le b} \frac{|E^1(r;l_2)|}{(r-l_2)}\frac{|D^2(r;l_2)|}{(r+l_2)}.
\end{align*}
From Lemma \ref{part1}, we know that $\frac{|E^1(r;l_2)|}{(r-l_2)}\stackrel{p}{\rightarrow} 0.$
With directly derivation, $\frac{|D^2(r;l_2)|}{(r+l_2)}=O(1).$
Hence, $\max_{(l_2+N)\le r \le b} \frac{|E^1(r;l_2)D^2(r;l_2)|}{|D^1(r;l_2)D^2(r;l_2)|}\stackrel{p}{\rightarrow} 0.$
\end{proof}

\begin{lemma}\label{part3}
With the assumptions of Theorem \ref{Theo.1}, given $N,$ $\forall \epsilon>0,$
\begin{equation}
\lim_{b\rightarrow \infty} \sup  \mathbb{P}(\max_{(l_2+N)\le r \le b} \frac{|D^1(r;l_2)E^2(r;l_2)|}{|D^1(r;l_2)D^2(r;l_2)|} \ge \epsilon)=0.
\end{equation}
\end{lemma}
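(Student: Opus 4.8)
The plan is to mimic the proof of Lemma~\ref{part2}, exploiting the fact that the factor $D^1(r;l_2)$ appears in both the numerator and the denominator and therefore cancels. Concretely,
\[
\frac{|D^1(r;l_2)E^2(r;l_2)|}{|D^1(r;l_2)D^2(r;l_2)|}=\frac{|E^2(r;l_2)|}{|D^2(r;l_2)|}=\frac{|E^2(r;l_2)|}{r+l_2}\cdot\frac{r+l_2}{|D^2(r;l_2)|},
\]
so it suffices to show that the first factor tends to $0$ in probability uniformly over $r\in[l_2+N,b]$ while the second factor stays $O(1)$.

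For the first factor, I would simply observe that $\frac{|E^2(r;l_2)|}{r+l_2}$ is exactly the quantity $(IV)$ already bounded in the proof of Lemma~\ref{part1}, where $\max_{(l_2+N)\le r\le b}\frac{|E^2(r;l_2)|}{r+l_2}\stackrel{p}{\rightarrow}0$ was established by combining the law of the iterated logarithm for the partial sums $\sum_{i=1}^{r}\epsilon_i$ and $\sum_{j=1}^{l_2}\epsilon_j$ with the weak law of large numbers for $\frac1b\sum_{i=1}^{b}\epsilon_i$. That estimate can be invoked verbatim here.

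For the second factor, recall $D^2(r;l_2)=\theta\Delta\,(2b-(r+l_2))$. Since $r\le b$ we have $2b-(r+l_2)\ge b-l_2>0$, and because the fraction of blocks at the boundary vanishes as $b\to\infty$ (a consequence of Lemma~\ref{lemma.1} under the refinement $b\to\infty$), $l_2/b\to\theta<1$, so $b-l_2\sim(1-\theta)b$. Combined with $r+l_2\le 2b$ this gives
\[
\frac{r+l_2}{|D^2(r;l_2)|}\le\frac{2b}{\theta\Delta(b-l_2)}=O(1)
\]
uniformly over $r\in[l_2+N,b]$ for all sufficiently large $b$.

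Multiplying the two bounds and applying the continuous mapping theorem yields $\max_{(l_2+N)\le r\le b}\frac{|D^1(r;l_2)E^2(r;l_2)|}{|D^1(r;l_2)D^2(r;l_2)|}\stackrel{p}{\rightarrow}0$, which forces the stated probability to $0$. I do not anticipate any real difficulty: the only ingredient not already in hand is the uniform bound $\frac{r+l_2}{|D^2(r;l_2)|}=O(1)$, which only uses $\theta<1$ and the asymptotic negligibility of the boundary blocks. If anything, this lemma is the most routine of the three, since the delicate fluctuation term $E^1$ — the one requiring the sharper division by $r-l_2$ in Lemmas~\ref{part1} and~\ref{part2} — does not occur at all.
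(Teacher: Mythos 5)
Your proof is correct and takes essentially the same route as the paper's: both arguments reduce the claim to the uniform convergence $\max_{(l_2+N)\le r\le b}|E^2(r;l_2)|/(r+l_2)\stackrel{p}{\rightarrow}0$ already established for the factor $(IV)$ in Lemma~\ref{part1}, combined with the deterministic bound $|D^2(r;l_2)|\ge C(r+l_2)$ for a constant $C>0$, i.e.\ $(r+l_2)/|D^2(r;l_2)|=O(1)$. Cancelling $D^1$ outright, as you do, is only a bookkeeping simplification of the paper's step that instead carries the factor $|D^1(r;l_2)|/(r-l_2)=\theta\Delta=O(1)$ along explicitly; there is no gap.
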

\begin{proof}
The idea is the same with Lemma \ref{part2}:
\begin{equation}
\max_{(l_2+N)\le r \le b} \frac{|D^1(r;l_2)E^2(r;l_2)|}{|D^1(r;l_2)D^2(r;l_2)|} \le O(1) \max_{(l_2+N)\le r \le b} \frac{|D^1(r;l_2)|}{(r-l_2)}\frac{|E^2(r;l_2)|}{(r+l_2)} \stackrel{p}{\rightarrow} 0,
\end{equation}
with $\frac{|D^1(r;l_2)|}{(r-l_2)}=O(1)$ and $\frac{|E^2(r;l_2)|}{(r+l_2)}\stackrel{p}{\rightarrow} 0.$
\end{proof}

The above three lemmas support that $D^1(r;l_2)D^2(r;l_2)$ is the leading term in $R(r;l_2),$ with $r\in[l_2+N,b].$ 
Hence, the probability of event $\{t>l_2\}\cup (A_1\cup A_2 \cup A_3)\rightarrow 0,$ as $b\rightarrow \infty$ and $\min n_i\rightarrow \infty.$

For the other side, consider event $B_2(N)=\{t\le(l_1-N)\},$ with a given $N$ and $l_1\ge N.$ 
Similarly define $R(r;l_1)=Q(r)-Q(l_1),$ and following we will show $R(r;l_1)<0,~\forall r<l_1$ asymptotically with probability $1.$ 
\begin{align*}
R(r;l_1)&=Q(r)-Q(l_1)\\
    &=(\sum_{i=1}^{r} \gamma^*_i-\frac{r}{b}\sum_{i=1}^{b}\gamma^*_i)^2-(\sum_{i=1}^{l_1} \gamma^*_i-\frac{l_1}{b}\sum_{i=1}^{b}\gamma^*_i)^2\\
    &=\underbrace{[-\sum_{r+1}^{l_1}\gamma^*_i-(r-l_1)\bar{\gamma}^*]}_{(V)}\underbrace{[\sum_{i=1}^{r}\gamma_i+\sum_{j=1}^{l_1}\gamma_j-(r+l_1)\bar{\gamma}^*]}_{(VI)}
\end{align*}

Define the following statistics:
\begin{align}
\left\{
\begin{aligned}
&E^1(r;l_1):=-\sum_{i=r+1}^{l_1}\epsilon_i-(r-l_1)\bar{\gamma}^*,\\
&E^2(r;l_1):=\sum_{i=1}^{r}\epsilon_i+\sum_{j=1}^{l_1}\epsilon_j-(r+l_1)\bar{\gamma}^*,\\
&D^1(r;l_1):=-(l_1-r)(1-\theta)\Delta,\\
&D^2(r;l_1):=(r+l_1)(1-\theta)\Delta.\\
\end{aligned}
\right.
\end{align}
Similarly, we have $(V)=E^1(r;l_1)+D^1(r;l_1),$ $(VI)=E^1(r;l_2)+D^1(r;l_2).$
Also, as $b\rightarrow \infty,$ $l_1=\theta b\rightarrow \infty,$ so we have 
\begin{align*}
\max_{1\le r \le (l_1-N)} D^1(r;l_1)D^2(r;l_1) &= [-N(r+l_1)(1-\theta)^2\Delta^2]
\end{align*}
and 
\begin{align*}
&\lim_{b\rightarrow \infty}\sup\mathbb{P}(\max_{1\le r \le (l_1-N)} D^1(r;l_1)D^2(r;l_1) >-\epsilon)\\
&=\lim_{b\rightarrow \infty}\sup \mathbb{P}(-N(r+l_1)(1-\theta)^2\Delta^2>-\epsilon)\\
&=\lim_{b\rightarrow \infty} \sup \mathbb{P}(N(r+\theta b)(1-\theta)^2\Delta^2 \le \epsilon)=0\\
\end{align*}

Following lemma shows that $D^1(r;l_1)D^2(r;l_1)$ is the leading term in $R(r;l_1),$ with $r\in[1,l_1-N].$

\begin{lemma}\label{part4}
With the assumptions of Theorem \ref{Theo.1}, given $N,$ $\forall \epsilon>0,$ we have 
\begin{align}
\left\{
\begin{aligned}
&\lim_{b\rightarrow \infty} \sup  \mathbb{P}(\max_{1\le r \le (l_1-N)} \frac{|E^1(r;l_1)E^2(r;l_1)|}{|D^1(r;l_1)D^2(r;l_1)|} \ge \epsilon)=0,\\
&\lim_{b\rightarrow \infty} \sup  \mathbb{P}(\max_{1\le r \le (l_1-N)} \frac{|D^1(r;l_1)E^2(r;l_1)|}{|D^1(r;l_1)D^2(r;l_1)|} \ge \epsilon)=0,\\
&\lim_{b\rightarrow \infty} \sup  \mathbb{P}(\max_{1\le r \le (l_1-N)} \frac{|E^1(r;l_1)D^2(r;l_1)|}{|D^1(r;l_1)D^2(r;l_1)|} \ge \epsilon)=0.
\end{aligned}
\right.
\end{align}
\end{lemma}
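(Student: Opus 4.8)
The plan is to reproduce, almost word for word, the arguments of Lemmas \ref{part1}, \ref{part2} and \ref{part3}, with the left break point $l_1$ in place of the right break point $l_2$ and with the indifference proportion $1-\theta$ in place of $\theta$: since for $r<l_1$ every one of $\gamma_1^*,\dots,\gamma_{l_1}^*$ sits in a signal block, the drift that survives in both factors of $R(r;l_1)$ is proportional to $1-\theta$. First I would record the elementary bounds on the deterministic pieces: from the definitions $|D^1(r;l_1)|=(l_1-r)(1-\theta)\Delta$ and $|D^2(r;l_1)|=(r+l_1)(1-\theta)\Delta$, so on the range $1\le r\le l_1-N$ we have $|D^1(r;l_1)|\ge N(1-\theta)\Delta$ (a positive constant) and $|D^2(r;l_1)|\ge l_1(1-\theta)\Delta$, which is of exact order $b$ because $l_1=\theta b\to\infty$; equivalently $|D^1(r;l_1)|/(l_1-r)$ and $|D^2(r;l_1)|/(r+l_1)$ both equal the constant $(1-\theta)\Delta=O(1)$. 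These play exactly the role that the bound $l_2>\theta\Delta$ played in the proof of Lemma \ref{part1}.

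Next I would bound the two stochastic pieces. Writing $\bar\epsilon=b^{-1}\sum_{i=1}^{b}\epsilon_i$, one has $E^1(r;l_1)=-\sum_{i=r+1}^{l_1}\epsilon_i+(l_1-r)\bar\epsilon$ and $E^2(r;l_1)=\sum_{i=1}^{r}\epsilon_i+\sum_{j=1}^{l_1}\epsilon_j-(r+l_1)\bar\epsilon$. For the first factor, $|E^1(r;l_1)|/(l_1-r)\le |\sum_{i=r+1}^{l_1}\epsilon_i|/(l_1-r)+|\bar\epsilon|$; the $|\bar\epsilon|$ term is $o_p(1)$ by the weak law of large numbers, while the first term, after reindexing by $k=l_1-r$, is dominated by $\sup_{k\ge N}|\sum_{j=1}^{k}\epsilon_{l_1-j+1}|/k$, controlled by the strong law of large numbers (or a H\'ajek--R\'enyi maximal inequality) for the reversed mean-zero random walk exactly as term $(III)$ of Lemma \ref{part1}. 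The same two inputs give $|E^2(r;l_1)|/(r+l_1)\le(|\sum_{i=1}^{r}\epsilon_i|+|\sum_{j=1}^{l_1}\epsilon_j|)/(r+l_1)+|\bar\epsilon|\stackrel{p}{\rightarrow}0$, mirroring term $(IV)$.

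I would then assemble the three displayed limits as in Lemmas \ref{part1}--\ref{part3}: write each ratio as a product, cancel the common factor, and pull out the bounded deterministic ratios, so that uniformly over $1\le r\le l_1-N$,
\begin{align*}
\frac{|E^1 E^2|}{|D^1 D^2|}=O(1)\,\frac{|E^1(r;l_1)|}{l_1-r}\,\frac{|E^2(r;l_1)|}{r+l_1},\qquad
\frac{|D^1 E^2|}{|D^1 D^2|}=O(1)\,\frac{|E^2(r;l_1)|}{r+l_1},\qquad
\frac{|E^1 D^2|}{|D^1 D^2|}=O(1)\,\frac{|E^1(r;l_1)|}{l_1-r},
\end{align*}
and conclude by the bounds of the previous paragraph together with the continuous mapping theorem, taking the supremum over $r$ last. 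The hard part will be the point treated briskly in Lemmas \ref{part1}--\ref{part3}: making $|E^1(r;l_1)|/(l_1-r)$ small \emph{uniformly} all the way down to $r=l_1-N$, where $l_1-r$ equals the fixed constant $N$ and the numerator is merely $O_p(1)$, not $o_p(1)$. The clean route is to let $N=N_b\to\infty$ arbitrarily slowly, so that $\sup_{k\ge N_b}|\sum_{j=1}^{k}\epsilon_{l_1-j+1}|/k\stackrel{p}{\rightarrow}0$ genuinely; all three limits then hold, and since one only needs $t\in[l_1-N_b,\,l_2+N_b]$ with $(l_2-l_1+2N_b)/b\to0$ for the fraction of misclassified blocks to vanish, nothing downstream is lost.
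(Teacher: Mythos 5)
Your proposal follows essentially the same route as the paper: its proof of this lemma is literally ``similarly with Lemmas \ref{part1}--\ref{part3}'', replacing $l_2$ by $l_1$ and $\theta$ by $1-\theta$, bounding $|E^1(r;l_1)|/(l_1-r)$ and $|E^2(r;l_1)|/(l_1+r)$ via the law of large numbers and the deterministic ratios by $O(1)$, exactly as you do. Your closing observation is a genuine refinement that the paper glosses over: for fixed $N$ the quantity $|E^1(r;l_1)|/(l_1-r)$ evaluated at $r=l_1-N$ is a nondegenerate $O_p(1)$ random variable (a sum of $N$ noise terms over the constant $N$), so the displayed limits involving $E^1/D^1$ do not hold uniformly down to $r=l_1-N$ for fixed $N$, and your repair of taking $N=N_b\to\infty$ slowly (which still gives $(l_2-l_1+2N_b)/b\to 0$ downstream) is the standard and correct way to make the argument rigorous.
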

\begin{proof}
Similarly with Lemma \ref{part1}-\ref{part3}, we have following:
\begin{equation}
\left\{
\begin{aligned}
&\frac{|E^1(r;l_1)|}{(l_1-r)}\le \frac{|\sum_{i=r+1}^{l_1}\gamma_i^*|}{l_1-r}+|\frac{1}{b}\sum_{i=1}^{b}\gamma_i^*|\stackrel{p}{\rightarrow} 0;\\
&\frac{|E^2(r;l_1)|}{(l_1+r)}\le \frac{|\sum_{i=1}^{r}\gamma_i^*+\sum_{i=1}^{l_1}\gamma_i^*|}{l_1+r}+|\frac{1}{b}\sum_{i=1}^{b}\gamma_i^*|\stackrel{p}{\rightarrow} 0;\\
&\frac{|D^1(r;l_1)|}{(l_1-r)} =\frac{|D^2(r;l_1)|}{(l_1+r)} =  O(1).
\end{aligned}
\right.
\end{equation}

Hence, we have the results
\begin{equation}
\left\{
\begin{aligned}
&\max_{1\le r \le (l_1-N)} \frac{|E^1(r;l_1)E^2(r;l_1)|}{|D^1(r;l_1)D^2(r;l_1)|} 
\le O(1)\max_{1\le r \le (l_1-N)} \frac{|E^1(r;l_1)||E^2(r;l_1)|}{(l_1-r)(l_1+r)}\stackrel{p}{\rightarrow} 0;\\
&\max_{1\le r \le (l_1-N)} \frac{|D^1(r;l_1)E^2(r;l_1)|}{|D^1(r;l_1)D^2(r;l_1)|}
\le O(1)\max_{1\le r \le (l_1-N)} \frac{|D^1(r;l_1)||E^2(r;l_1)|}{(l_1-r)(l_1+r)}\stackrel{p}{\rightarrow} 0;\\
&\max_{1\le r \le (l_1-N)} \frac{|E^1(r;l_1)D^2(r;l_1)|}{|D^1(r;l_1)D^2(r;l_1)|}
\le O(1)\max_{1\le r \le (l_1-N)} \frac{|E^1(r;l_1)||D^2(r;l_1)|}{(l_1-r)(l_1+r)}\stackrel{p}{\rightarrow} 0.
\end{aligned}
\right.
\end{equation}

\end{proof}
With above conclusion, we have 
\begin{align*}
\lim_{b\rightarrow \infty} \sup\mathbb{P}(B_2(N)) &= \lim_{b\rightarrow \infty} \sup\mathbb{P}(t\le l_1-N)\\
&=\lim_{b\rightarrow \infty} \sup\mathbb{P}(\max_{1\le r \le (l_1-N)}R(r;l_1)>0)\\
&=\lim_{b\rightarrow \infty} \sup \mathbb{P}(N(r+\theta b)(1-\theta)^2\Delta^2 +O(1)\le \epsilon)=0
\end{align*}
which implies $\mathbb{P}(t\le l_1)=\cup_{N=1}^{l_1}\mathbb{P}(B_2(N))=0.$

With these results, we have 
\begin{align*}
&\lim_{b\rightarrow \infty} \mathbb{P}(\{t \in [l_1,l_2]\})\\
&\le 1- \lim_{b\rightarrow \infty} \sup \mathbb{P}(\{t<l_1\} \cup \{t>l_2\}\cup (A_1\cup A_2 \cup A_3))\\
&\rightarrow 1,
\end{align*} 
as $b\rightarrow \infty$ and $\min n_i\rightarrow \infty.$

\end{proof}

\section{Proof for Lemma \ref{lemma.sym}} \label{proof3}

\begin{proof}
This lemma is easy to prove: from Lemma \ref{lemma.1}, as $\min n_i \rightarrow \infty,$ $\{\gamma_i^*\}$ are i.i.d. Consider sequence $\{y_1,...,y_b\},$ then $\mathbb{P}(\gamma_1^*=y_1,...,\gamma_b^*=y_b)=\mathbb{P}(\gamma_1^*=y_b,...,\gamma_b^*=y_1).$ Note that in the left side $\{\gamma_i^*\}$ equal to the reversed sequence $\{y_b,...,y_1\}.$ Via applying CUSUM cut-off on $\{y_1,...,y_b\}$ and $\{y_b,...,y_1\},$ the detection result is opposite. Hence the distribution for signal weights under $H_0$ is symmetric under the asymptotical setting.

\end{proof}

\end{document}